\documentclass{llncs}
%\title{Generalization of Lov\'asz's Cathedral Theorem
%      for General Graphs with Perfect Matchings}
\title{A Partially Ordered Structure and a Generalization of the Canonical Partition 
for General Graphs with Perfect Matchings}
\author{Nanao Kita}
\institute{Keio University, Yokohama, Japan\\
           \email{kita@a2.keio.jp}}
%\date{\today}
%\usepackage{amsmath,amsthm,amssymb}
\usepackage{amsmath, amssymb}
\usepackage{algorithmic}
\usepackage{algorithm}
\usepackage{cite}
\usepackage{setspace}
%\doublespacing
\usepackage{latexsym}
\usepackage{color}
\usepackage{comment}

\newcommand{\matchablesp}{factorizable~}

\newcommand{\matchable}{factorizable}

\newcommand{\zero}{balanced~}

\newcommand{\exposed}{exposed}

\newcommand{\Vg}{V(G)}
\newcommand{\Eg}{E(G)}

\newcommand{\yield}{\triangleleft}
\newcommand{\gpart}[1]{\mathcal{P}(#1)}
\newcommand{\pargpart}[2]{\mathcal{P}_{#1}(#2)}
\newcommand{\gsim}{\sim_g}

\newcommand{\up}[1]{up(#1)}
\newcommand{\upstar}[1]{up^{*}(#1)}
\newcommand{\parup}[2]{up_{#1}(#2)}
\newcommand{\parupstar}[2]{up^{*}_{#1}(#2)}
\newcommand{\what}{\widehat}
\newcommand{\su}[2]{#1\oplus #2}
\renewcommand{\Gamma}{N}

\spnewtheorem*{proofof}{Proof of}{\itfamily}{\rmfamily}
\spnewtheorem{cclaim}{Claim}{\rmfamily}{\rmfamily}
\spnewtheorem*{ac}{Acknowlegements}{\bf}{\rmfamily}

\begin{document}
\maketitle
%\begin{center}
%\today
%\end{center}

\begin{abstract}
%For general graphs with perfect matching
%nothing has been known about the structure of their elementary components.
This paper is concerned with  structures of 
general graphs with perfect matchings.
We first reveal a partially ordered structure 
among factor-components of 
general graphs with perfect matchings.
%which can be regarded as a generalization of Lov\'asz's cathedral theorem for saturated graph
%~\cite{lp1986}.
Our second result is  a generalization of Kotzig's canonical partition %[Kotzig 1959--1960]
to a decomposition  %for each elementary components %in \matchablesp graphs.
of general graphs with perfect matchings.
It contains a short proof for the theorem of the canonical partition. %in 1959--1960.
These results give decompositions   
which are canonical, that is, unique to given graphs.  
%We also give a  polynomial time algorithm for computing 
%the partial order  and the generalization of the canonical partition of input graphs.
%Additionally, we reveal correlations between the above two results.
We also show 
that there are correlations between these two and that  
these can be computed in polynomial time. 
\if0 correlations between these  
and that these can be computed in polynomial time.\fi
%Consequently, a polynomial time algorithm to compute 
%the partial order and the generallization of the canonical partition 
%is derived.
%We also show that the partial order and the generalization of the canonical partition
%can be computed in polynomial time.
\end{abstract}

\section{Introduction}
%\today
%\begin{proof}
%test 
%\qed
%\end{proof}
%
This paper is concerned with matchings on graphs.
 For general accounts on matching theory we refer to  Lov\'asz and Plummer's book~\cite{lp1986}.

A {\em  matching} of a graph $G$ is a 
set of edges $F\subseteq E(G)$   
no two of which  have  common vertices.
A matching of cardinality $|V(G)|/2$ (resp. $|V(G)|/2 - 1$) is called 
a {\em  perfect matching} (resp. a {\em  near-perfect matching}). 
We call a graph with a perfect matching \textit{\matchable}.
An edge of a factorizable graph is called \textit{allowed}
if it is contained in a perfect matching. 
For a factorizable graph $G$, 
each connected component of the subgraph of $G$
determined by all the allowed edges of it
is called an \textit{elementary component} of $G$. 
A factorizable graph which has exactly one elementary component 
is called \textit{elementary}. 
For each elementary component $H$, 
we call $G[V(H)]$ a {\em factor-connected component} or 
{\em factor-component} of $G$, 
and denote the set of all the factor-components of $G$ as $\mathcal{G}(G)$. 
%

%Matching Theory plays and important role in Graph Theory and Combinatorial Optimization%
Matching theory is of central importance in 
\if0 one of the central importance  of \fi graph theory and combinatorial optimization%
~\cite{schrijver2003}, 
with numerous  practical applications~\cite{cc2005}.
%We call a graph with a perfect matching \textit{\matchable}.
Structure theorems that give decompositions which are canonical, 
namely, unique to given graphs, %canonical decompositions,
%namely, those unique to given graphs, 
play important roles in matching theory. 
Only three theorems, i.e.    
the canonical partition~\cite{kotzig1959a, kotzig1959b, kotzig1960}, 
the Dulmage-Mendelsohn decomposition~\cite{lp1986}, 
and the Gallai-Edmonds structure theorem~\cite{lp1986} have been  known as such.
%The canonical partition~\cite{kotzig1959a, kotzig1959b, kotzig1960},
% the Dulmage-Mendelsohn decomposition~\cite{lp1986},
%and the Gallai-Edmonds structure theorem~\cite{lp1986} are known as such. %for them.
%
%The canonical partition is a canonical structure of special class of graphs,
%called elementary graph.
%It plays a crucial role in the study of
\if0%%%%%%%%%%%%%%%%%%%%%%%%%%%%%%%%%%%%%%%%%%%%
The canonical partition plays a crucial role in the study of 
the matching polytope and  the matching lattice~\cite{elp1982, lovasz1987, }.
The Dulmage-Mendelsohn decomposition is known for its application 
to the efficient solution of linear equations determined by large sparse matrices~\cite{lp1986}.
Additionally, it is a origin of a series of studies on submodular functions,
that is, the field of the  principal partition~\cite{nakamura}.
The Gallai-Edmonds structure theorem essentially gives the optimality of 
the maximum matching algorithm.
Thus it underlies most of the efficient maximum matching algorithms
as well as play key roles in reasonable generalizations of maximum matching problem.
It can also be applicated to algebraic context.
\fi%%%%%%%%%%%%%%%%%%%%%%%%%%%%%%%%%%%%%%%%%%%%%%%%%%%%5
%
%However, \Matchablesp graphs are irreducible components of the 
%Gallai-Edmonds decomposition~\cite{lp1986}
% and cannot be further decomposed by it.
%$\clubsuit$
The first two 
  are not applicable for general graphs with perfect matchings,
and \if0 the Gallai-Edmonds decomposition\fi   the last one  treats them as  irreducible  
and does not decompose them properly,
%any graphs with perfect matchings are irreducible components of the 
%Gallai-Edmonds decomposition %~\cite{lp1986}
% and cannot be further decomposed by it,
which means nothing has been known that 
tells non-trivial canonical structures of 
general graphs with perfect matchings.
Therefore, in this paper, 
we give new canonical structure theorems for them.

\if0%%%%%%%%%%%%%%%%%%
We call a graph with a perfect matching \textit{\matchable}.
An edge of a factorizable graph is called \textit{allowed}
if it is contained in a perfect matching.
Generally, in a \matchablesp graph, the subgraph  induced by 
the union of all the perfect matchings  has several components, 
and these are called \textit{elementary components}.
In this paper, we denote the family of elementary components of a factorizable graph $G$
 as $\mathcal{G}(G)$.
A \matchablesp graph that has only one elementary component is called an \textit{elementary graph}.
\fi%%%%%%%%%%%%%%%%%%%%

By the definitions,  
we can view factorizable graphs as being ``built'' up by  combining 
factor-components with additional edges. 
%In other words, 
%\matchablesp graphs are built up 
%by combining some elementary graphs with some edges.
However it does not  mean that all combinations 
result in graphs with 
\if0 any  combination  results in a graph with \fi 
desired factor-components.
%Thus  this construction must have a certain structure.
Thus the family of factor-components must have 
a certain non-trivial structure. 
For bipartite \matchablesp graphs, 
the \textit{Dulmage-Mendelsohn decomposition} (in short, the \textit{DM-decomposition})
 reveals the ordered structure of their factor-components.
  However, as for non-bipartite graphs, no counterpart has been known.
%except for special class of \matchablesp graphs called \textit{saturated graphs}

%
In this paper, as our first contribution, 
we reveal a partially ordered structure between factor-components 
of general graphs with perfect matchings.
It has some similar natures to the DM-decomposition, 
however they are distinct.

The second contribution is a generalization of the 
\textit{canonical partition}~\cite{kotzig1959a, kotzig1959b, kotzig1960}; see also~\cite{lp1986}, which is originally a 
\if0 The canonical partition is originally a \fi 
decomposition of  elementary graphs. 
%It plays an important role in matching theory,
%especially in polyhedral point of view, 
%thus contributes toward the foundation of combinatorial optimization
%\cite{elp1982, lovasz1987, clm2003,  schrijver2003}.
Kotzig~\cite{kotzig1959a, kotzig1959b, kotzig1960} 
first investigated  the canonical partition of elementary graphs 
as the quotient set of a certain equivalence relation,
 and later, Lov\'asz redefined it from the point of view of maximal barriers~\cite{lp1986}.
\if0
In this paper we generalize the canonical partition
as a decomposition for each elementary components of 
general graphs with perfect matchings, 
based on Kotzig's way.
\fi
In this paper we generalize the canonical partition to a decomposition 
of general graphs with perfect matchings,
based on Kotzig's way. 
It contains a short proof for the theorem of the canonical partition.
% And we discuss the relation between the generalized canonical partition and maximal barriers.
% We also discuss the relation  beween generalized canonical partition and the ordered structure
%we have investegated.
%As a consequence, we gave another proof for Lov\'asz's Cathedral Theorem.
%

%Note that our results are canonical decompositions of graphs.
Note that these two results of us  give canonical decompositions of graphs.
We also show that there are correlations between these two 
 and that these  %the partial order and the generalized canonical partition
can be computed in polynomial time.

Any of the three existing canonical structure theorems 
plays significant roles in combinatorics including matching theory.  
The canonical partition plays a crucial role in matching theory, 
especially from the polyhedral point of view, that is, in the study of 
the matching polytope and  the matching lattice~\cite{elp1982, lovasz1987,clm2003}.
The Dulmage-Mendelsohn decomposition is known for its application 
to the efficient solution of linear equations determined by large sparse matrices~\cite{lp1986}.
Additionally, it is an origin of a series of studies on submodular functions,
that is, the field of  the principal partition~\cite{nakamura1988, fujishige2005}.
The Gallai-Edmonds structure theorem is essential to  the optimality of 
the maximum matching~\cite{edmonds1965, lp1986}.
Thus it also underlies  reasonable generalizations 
of maximum matching problem~\cite{ps2004, ss2004}.
%Thus it underlies most of the efficient maximum matching algorithms
%as well as plays key roles in reasonable generalizations of maximum matching problem.
%It can also be applicated to algebraic algorithms.

By combining the results in this paper with the Gallai-Edmonds structure theorem,
we can easily obtain a refinement of the Gallai-Edmonds structure theorem,
which gives a consistent view of graphs, whether
they are factorizable or not, or,  elementary or not~\cite{nanaokita}.
Hence,  we are sure that our structure theorems 
should be powerful tools in matching theory.
In fact, the cathedral theorem~\cite{lp1986}
can be obtained from our results in a quite natural way~\cite{nanaokita}.
\if0%%%%%%%%%%%%%%%%%%%%%%%%%%%%
Based on the above observations,
it is also shown that computing the partial order and 
the generalized canonical partition can be done in polynomial time.
\fi%%%%%%%%%%%%%%%%%%%%%%%%%%%%%%%%%
% where $n$ and $m$ denotes the number of vertices and edges of input graphs,
% respectively.
%The algorithm also compute elementary components of given graph,
%thus with additional $O(nm)$ computation we can list all the allowed edges of it.
%
%
%
%

%
\section{Preliminaries}
%We list some basic properties on matching theory.
%For more detailed explanation, See \cite{lp1986}.
In this section, we list some standard definitions and  
well-known properties. %See also \cite{lp1986}.
Basics on sets, graphs, digraphs, and algorithms 
mostly conform to~\cite{schrijver2003}.

%
%We list some standard definitions.

Let $G$ be a graph and $X\subseteq \Vg$. 
The subgraph of $G$ induced by $X$ is denoted by $G[X]$.
$G-X$ means $G[\Vg\setminus X]$. 
Given $F\subseteq E(G)$, we define the \textit{contraction} of $G$ by $F$ 
as the graph obtained from contracting all the edges in $F$, 
and denote as $G/F$. 
Additionally, We define the \textit{contraction} of $G$ by $X$ as 
$G/X := G/E(G[X])$.
We say $H\subseteq G$ if $H$ is a subgraph of $G$. 
If it is clear from the context, we sometimes regard a subgraph $H\subseteq G$ 
as the vertex set $V(H)$,
%an element $x$ as an set $\{x\}$,
 a vertex $v$ as a graph $(\{v\}, \emptyset )$.

The set of edges that has one end vertex  in $X\subseteq V(G)$ and 
the other vertex in $Y\subseteq V(G)$ is denoted as $E_G[X, Y]$.
We denote $E_G[X, V(G)\setminus X]$ as $\delta_G(X)$.
We define the \textit{set of neighbors}  of $X$ as
the set of vertices in $V(G)\setminus X$ that 
are adjacent to vertices in $X$, and denote as $\Gamma_G(X)$.
We sometimes denote $E_G[X, Y]$, $\delta_G(X)$, $\Gamma_G(X)$ as just  
$E[X,Y]$, $\delta(X)$, $\Gamma(X)$
if they are apparent from the context.

For two graphs $G_1$ and $G_2$,
$G_1+G_2 := \left( V\left( G_1 \right) \cup V\left( G_2\right),
 E\left( G_1\right)\cup E\left( G_2\right) \right)$ is called
 the \textit{union} of them,
and  $G_1\cap G_2 := \left( V\left( G_1\right)\cap V\left( G_2\right),
 E\left( G_1\right) \cap E\left( G_2\right) \right) $
  the \textit{intersection} of them.

Let $\hat{G}$ be a graph such that $G\subseteq \hat{G}$.
For $e = uv\in E(\hat{G})$,
$G+e$ means $(V(G)\cup \{u, v\}, E(G)\cup\{e\})$,
and $G-e$ means $(V(G), E(G)\setminus \{e\})$.
For a set of edges $F = \{e_i\}_{i=1}^k$,
$G +F$ and $G -F$ means respectively $G + e_1 + \cdots  + e_k$ and  $G - e_1 -\cdots - e_k$.
%So do $G+F$ and $G-F$ for a set of edges  $F$.

%Let $P$ be a path.
%For $x,y\in V(P)$,
For a path $P$ and $x, y \in V(P)$,
$xPy$ means the subpath on $P$ between $x$ and $y$.
For a circuit $C$ with an orientation that makes it a dicircuit,
and $x, y\in V(C)$ where $x\neq y$,
$xCy$ means the subpath in $C$ that can be regarded 
as a dipath from $x$ to $y$. 
%$xC^{-}y$ means $C - (xCy - x - y)$.
%

\if0%%%%%%%%%%%%%%%%%%%
A {\em  matching} of a graph $G$ is a 
set of edges $F\subseteq E(G)$   
no two of which  have  common vertices.
A matching of cardinality $|V(G)|/2$ (resp. $|V(G)|/2 - 1$) is called 
a {\em  perfect matching} (resp. a {\em  near-perfect matching}).
\fi%%%%%%%%%%%%%%%%%%%%
A vertex $v\in V(G)$ satisfying $\delta(v)\cap M = \emptyset $ is called 
\textit{exposed} by $M$.
For  a matching $M$ of $G$ and $u\in V(G)$,
$u'$ denote the vertex to which $u$ is matched by $M$.
For $X\subseteq V(G)$, 
$M_X$ denotes $M\cap E(G[X])$.

Let $M$ be a matching of $G$.
For $Q\subseteq G$, which is a path or circuit,
%Let $Q\subseteq G$ be a path or circuit.
we call $Q$  $M$-{\em  alternating}
%We call $Q$ as   $M$-{\em  alternating}
%if the edge in $M$ and $\Eg\setminus M$ appear 
%alternately on $P$.
if $E(Q)\setminus M$ is a matching of $Q$.
Let $P\subseteq G$ be an $M$-alternating path with end vertices $u$ and $v$.
If $P$ has an even number of edges %, that is, has even number of edges,
 and starts with an edge in $M$ if it is traced  from $u$, 
we call it an $M$-{\em  \zero path} from $u$ to $v$.
We regard a trivial path, that is, a path composed of 
one vertex and no edges  as an $M$-\zero path.
If $P$ has an odd number of edges and $M\cap E(P)$ (resp. $E(P)\setminus M$) is a perfect matching of $P$,
we call it $M$-{\em  saturated} (resp. $M$-{\em  \exposed}).
%For $X\subseteq V(G)$ 
%$M_{X}$ denote the restricted matching, that is, the subset of $M$ that has 
%both end points in $X$.
%
%

Let $H\subseteq G$.
 We say a path $P\subseteq G$  is an {\em  ear relative to} $H$ 
if both end vertices of $P$ are in $H$ while internal vertices are not.
So do we to a circuit  if exactly one vertex of it is in $H$.
For simplicity, we call the vertices of $V(P)\cap V(H)$ 
{\em end vertices} of $P$, even if $P$ is a circuit.
For an ear $R\subseteq G$ relative to $H$,
%Particularly, if  $P-V(H)$ is an $M$-saturated path,
we call it an $M$-{\em  ear}
if  $P-V(H)$ is an $M$-saturated path. 

\if0%%%%%%%%%%%%%%%%%%%%%%%%%%%%%%%%%%%%%%%%%%%%%%%%%
An {\em  ear-decomposition} of graph $G$ is a sequence of subgraphs 
$G_0, \subseteq, \ldots, \subseteq G_k = G$,
 where $G_0 = (\{r\}, \emptyset)$ for some $r\in V(G)$ 
and for each $i \ge 1$, $G_i$ is obtained from $G_{i-1}$ f
%by adding a path $P_i$ whose end vertices are in $V(G_i)$ while
%inner vertices are not.
by adding an ear $P_i$ relative to $G_{i-1}$.
%We call each $P_i$ an {\em  ear} , 
%and especially 
\if0 We call $P_1$  as the {\em  first circuit}.\fi
We sometimes regard  an ear-decomposition as 
 a family of ears $\mathcal{P} = \{P_1,\ldots, P_k\}$. 
An ear-decomposition is called {\em  odd} if 
any of its ears has an odd number of edges.
A graph has an ear-decomposition if and only if it is 2-edge-connected~\cite{lp1986}.
\fi%%%%%%%%%%%%%%%%%%%%%%%%%%%%%%%%%%%%%%%%%%%%%%%%%%%
A graph is called {\em  factor-critical} if 
any deletion of its single vertex leaves a factorizable graph.
A subgraph $G'\subseteq G$ is called {\em  nice} if 
$G-V(G')$ is factorizable.
The next two propositions are  well-known and
might be regarded as folklores.
\begin{proposition}\label{prop:path2root}
Let $M$ be a near-perfect matching of a graph $G$ that exposes $v\in \Vg$. 
Then, $G$ is factor-critical if and only if for any $u\in \Vg$ there exists 
an $M$-\zero path from $u$ to $v$.
\end{proposition}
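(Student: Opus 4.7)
The plan is to prove both directions by the standard symmetric-difference argument on pairs of matchings.

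For the forward direction, I assume $G$ is factor-critical and pick $u \in V(G)$. If $u = v$ I take the trivial path. Otherwise, factor-criticality gives a perfect matching $N$ of $G - u$. I form $M \triangle N$, which has maximum degree $2$ and therefore decomposes into vertex-disjoint paths and even circuits. A vertex is an endpoint of a path component precisely when its degree in $M \triangle N$ is odd, which happens exactly when it is covered by exactly one of $M$ and $N$. Under our setup, the only two such vertices are $u$ (covered by $M$ but exposed by $N$) and $v$ (exposed by $M$ but covered by $N$). Hence there is a unique path component $P$ in $M \triangle N$, and its endpoints are $u$ and $v$. Since $u$ is exposed by $N$, the first edge of $P$ incident to $u$ lies in $M$; the edges alternate between $M$ and $N$, so $P$ is $M$-alternating; and since the last edge at $v$ lies in $N$, the length of $P$ is even. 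Thus $P$ is an $M$-balanced path from $u$ to $v$.

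For the converse, I assume the $M$-balanced path condition and take an arbitrary $w \in V(G)$; to show $G$ is factor-critical it suffices to produce a perfect matching of $G - w$. If $w = v$, then $M$ itself works. Otherwise, let $P$ be an $M$-balanced path from $w$ to $v$ and set $M' := M \triangle E(P)$. Since $P$ is $M$-alternating and has an even number of edges, it contains equally many $M$-edges and non-$M$-edges, so $|M'| = |M|$. Inspecting vertices, every internal vertex of $P$ retains exactly one incident edge of $M'$ (a swap of its $M$-edge for a non-$M$-edge along $P$), every vertex outside $V(P)$ is unaffected, the endpoint $w$ loses its sole incident $M$-edge on $P$ and becomes exposed, and $v$ acquires the terminal non-$M$-edge of $P$ and becomes covered. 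Hence $M'$ is a near-perfect matching exposing exactly $w$, and restricting $M'$ to $G - w$ gives the desired perfect matching.

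I expect no substantive obstacle: the whole argument reduces to the parity/endpoint analysis of $M \triangle N$ (forward) and of $M \triangle E(P)$ (reverse), both of which are routine. The one point requiring a moment of care is verifying in the forward direction that $u$ and $v$ actually lie in the same path component of $M \triangle N$, which follows cleanly because they are the only two vertices of odd degree there.
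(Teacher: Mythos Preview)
Your proof is correct and follows the same standard symmetric-difference argument that the paper indicates (the paper treats the proposition as folklore and its sketched proof, commented out in the source, uses exactly the matching $N$ exposing $u$ and the exchange $M\triangle E(P)$ that you do). Your write-up is in fact more careful than the paper's sketch, since you explicitly justify why $u$ and $v$ must lie in the same component of $M\triangle N$.
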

\if0%%%%%%%%%%%%%%%%%%%%%%%%%%%%%%%%%%%%%%%%%%%%
\begin{proof}
Let $N$ be a near-perfect matching of $G$ that exposes $u$.
Then the symmetric difference of $M$ and $N$ 
gives an $M$-balanced path from $u$ to $v$.
Thus the sufficiency follows.
For the necessity, 
let $P$ be the $M$-balanced path.
Then the symmetric difference of 
$M$ and $E(P)$ gives a near-perfect matching that exposes $u$.
$\Box$
\end{proof} 
\fi%%%%%%%%%%%%%%%%%%%%%%%%%%%%%%%%%%%%%%%%%%%%
\begin{proposition}\label{prop:fc_block}
Let $G$ be a graph.
Then $G$ is factor-critical if and only if each block of $G$ is factor-critical.
\end{proposition}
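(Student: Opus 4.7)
The plan is to prove both implications by analyzing near-perfect matchings exposing a chosen vertex, combined with a parity argument on the block-cut tree of $G$.

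For the forward direction, fix a block $B$ of $G$ and a vertex $v \in V(B)$. Since $G$ is factor-critical, $G - v$ has a perfect matching; view this as a near-perfect matching $M$ of $G$ exposing $v$. I will show that $M \cap E(B)$ is a near-perfect matching of $B$ exposing $v$, which gives that $B - v$ is factorizable. Every non-cut vertex of $G$ in $V(B) \setminus \{v\}$ has all incident edges in $E(B)$ and is therefore matched within $B$ automatically. For a cut vertex $c \in V(B) \setminus \{v\}$ of $G$, let $S_c$ be the union of the vertex sets of the components of $G - c$ that do not meet $V(B) \setminus \{c\}$. Applying factor-criticality of $G$ at the vertex $c$ yields a perfect matching of $G - c$, which restricts to a perfect matching on each component of $G - c$; in particular $|S_c|$ is even. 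Now $M$ matches every vertex of $S_c$, and the only edges of $G$ leaving $S_c$ go to $c$. If $M$ matched $c$ to some $s \in S_c$, then $S_c \setminus \{s\}$, of odd size, would have to be matched internally, which is impossible. Hence $c$ is matched by $M$ via an edge of $E(B)$, so $M \cap E(B)$ covers every vertex of $V(B) \setminus \{v\}$, as required.

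For the converse, assume every block of $G$ is factor-critical; note $G$ must be connected, since otherwise no single-vertex deletion could yield a factorizable graph (each remaining component, being a union of factor-critical blocks, has odd order). Fix $v \in V(G)$ and root the block-cut tree at $v$ when $v$ is a cut vertex and at the unique block containing $v$ otherwise. For each block $B$, let $r_B := v$ if $v \in V(B)$, and otherwise let $r_B$ be the parent cut vertex of $B$ in this rooted tree. By hypothesis each $B - r_B$ has a perfect matching $M_B$; set $M := \bigcup_B M_B$. Distinct blocks share no edges, so $M$ is an edge-disjoint union. A walk through the rooted tree shows that each $u \neq v$ lies in exactly one block $B$ with $u \neq r_B$, namely the block whose parent direction leads toward $v$, and in all other blocks containing $u$ one has $u = r_B$; hence $u$ is covered exactly once. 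On the other hand $v = r_B$ for every block $B \ni v$, so $v$ is uncovered. Thus $M$ is a perfect matching of $G - v$, proving that $G$ is factor-critical.

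The main obstacle is the parity step in the forward direction: correctly identifying $S_c$, extracting evenness of $|S_c|$ from factor-criticality of $G$ at $c$, and using this to exclude $c$ being matched into $S_c$. The converse is essentially careful bookkeeping on the block-cut tree once the perfect matchings of each $B - r_B$ are assembled.
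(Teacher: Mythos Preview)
Your proof is correct. The paper states this proposition as folklore and provides no proof, so there is no argument to compare against.

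Two brief remarks. In the forward direction, it is worth making explicit why the $M$-partner of a cut vertex $c\in V(B)\setminus\{v\}$, once known to lie outside $S_c$, must actually lie in $V(B)$: removing the node $c$ from the block--cut tree leaves $B$ as the only block containing $c$ in the subtree on the side of $V(B)\setminus\{c\}$, so every edge from $c$ into that component of $G-c$ belongs to $E(B)$. For the converse, you rightly flag that connectivity of $G$ must be assumed; as literally stated the ``if'' direction fails for, e.g., two disjoint triangles. With that caveat, your rooted block--cut tree assembly of the matchings $M_B$ is the standard argument and your bookkeeping (each $u\neq v$ lies in exactly one block $B$ with $u\neq r_B$) is correct.
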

\begin{proposition}[\textmd{implicitly stated in~\cite{lovasz1972a}}]\label{prop:fc_choice}
Let $G$ be a factor-critical graph, $v\in V(G)$,
and $M$ be a near-perfect matching that exposes $v$.
Then for any non-loop edge $e = vu \in E(G)$, 
there is a nice circuit $C$ of $G$
which is an $M$-ear relative to $v$ and 
contains $e$.
\end{proposition}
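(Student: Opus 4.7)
The plan is to build $C$ by appending the edge $e$ to a carefully chosen $M$-alternating path from $v$ to $u$. Since $v$ is the unique $M$-exposed vertex and $u\neq v$, the vertex $u$ has an $M$-partner $u'\neq v$. By factor-criticality, $G-u$ has a perfect matching $M_u$, which I regard as a near-perfect matching of $G$ exposing only $u$.

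Next I examine the symmetric difference $M\oplus M_u$. Every vertex outside $\{v,u\}$ is matched by both $M$ and $M_u$ and thus has degree $0$ or $2$ in $M\oplus M_u$, while $v$ and $u$ each have degree exactly $1$. Consequently $M\oplus M_u$ is a disjoint union of even cycles together with a single path $P$ from $v$ to $u$ whose edges alternate between $M_u$ and $M$. The first edge of $P$ at $v$ lies in $M_u$ and the last edge at $u$ lies in $M$; since the only $M$-edge at $u$ is $uu'$, the penultimate vertex of $P$ is $u'$ and $|E(P)|$ is even. In particular $|E(P)|\ge 2$, so $vu$ is not an edge of $P$, and $C:=P+e$ is a simple circuit of odd length through $v$ that contains $e$.

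The two required properties of $C$ now follow from the structure of $P$. For the $M$-ear condition, deleting $v$ from $C$ removes exactly the first edge of $P$ (which is in $M_u$) and the edge $e$; the remaining path has odd length $|E(P)|-1$, begins with the second edge of $P$ and ends with $uu'$, both of which lie in $M$, and its $M$-edges perfectly match its vertex set, so $C-v$ is $M$-saturated. For niceness, the $M_u$-edges appearing in $P$ already form a perfect matching of $V(C)\setminus\{u\}$; combined with the fact that $u$ itself is $M_u$-exposed, this means no edge of $M_u$ has one endpoint in $V(C)$ and the other outside, so the restriction of $M_u$ to $V(G)\setminus V(C)$ is a perfect matching, proving $G-V(C)$ is factorizable. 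The only genuinely delicate point is verifying the alternation pattern and endpoint types of $P$, from which both the $M$-ear and niceness conditions drop out simultaneously; there is no substantial obstacle beyond this bookkeeping.
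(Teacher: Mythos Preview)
Your proof is correct and follows essentially the same route as the paper: the paper simply invokes Proposition~\ref{prop:path2root} to obtain an $M$-balanced path $P$ from $u$ to $v$ and sets $C=P+e$, whereas you construct that very path explicitly via the symmetric difference $M\oplus M_u$ (which is precisely the standard proof of Proposition~\ref{prop:path2root}). The only cosmetic difference is that you certify niceness using $M_u$ while the paper's formulation would use $M$ directly; both work, and your argument is just a slightly more self-contained unpacking of the same idea.
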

\if0%%%%%%%%%%%%%%%%%%%%%%%%%%%%%
\begin{proof}
Take an $M$-balanced path $P$ from $u$ to $v$ 
by Proposition~\ref{prop:path2root}.
Then $e + P$ is a desired circuit.
$\Box$
\end{proof}
\fi%%%%%%%%%%%%%%%%%%%%%%%%%%%%%%%
\begin{theorem}[\textmd{implicitly stated in~\cite{lovasz1972a}}]\label{thm:fc_nice}
Let $G$ be a factor-critical graph. 
For any nice factor-critical subgraph $G'$ of $G$,
$G/G'$ is factor-critical.
\end{theorem}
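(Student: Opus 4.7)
Write $g'$ for the vertex of $G/G'$ that is the image of $V(G')$ under contraction. The plan is to verify, for every $w\in V(G/G')$, that $(G/G')-w$ admits a perfect matching. The case $w=g'$ is immediate: $(G/G')-g'$ is isomorphic to $G-V(G')$, which has a perfect matching because $G'$ is nice in $G$. The substantive case is $w\in\Vg\setminus V(G')$, on which the rest of the plan focuses.

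Fix such a $w$. From the factor-criticality of $G$, pick a near-perfect matching $M$ of $G$ exposing $w$; from the niceness of $G'$, pick a perfect matching $N$ of $G-V(G')$. I would then analyse the symmetric difference $M\triangle N$, which decomposes into a disjoint union of paths and circuits. The crux is this: because $w\notin V(G')$ and $N$ exposes every vertex of $V(G')$, each vertex of $V(G')$ has degree $1$ in $M\triangle N$, while each vertex of $\Vg\setminus(V(G')\cup\{w\})$ has degree $0$ or $2$ (being matched by both $M$ and $N$). Hence the only path-endpoints of $M\triangle N$ are $w$ and the vertices of $V(G')$, and no vertex of $V(G')$ can appear as an internal vertex of any path of $M\triangle N$.

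In particular, the path $P^{*}$ of $M\triangle N$ containing $w$ terminates at some $y^{*}\in V(G')$, every other vertex of $P^{*}$ lies in $\Vg\setminus V(G')$, and $P^{*}$ meets $\delta(V(G'))$ in exactly one edge, namely its terminal edge. Setting $N':=N\triangle E(P^{*})$ and tracing the alternation along $P^{*}$, I would verify that $N'$ is a matching of $G$ exposing precisely $\{w\}\cup(V(G')\setminus\{y^{*}\})$, and that apart from the terminal edge of $P^{*}$ every edge of $N'$ lies inside $G-V(G')$. Contracting $V(G')$ then turns $N'$ into a perfect matching of $(G/G')-w$, completing the case and hence the proof.

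The main obstacle is the structural observation about $M\triangle N$: one must see that $N$'s exposing of $V(G')$ forces each path of $M\triangle N$ to touch $V(G')$ only at its endpoints, so that the path containing $w$ uses exactly one edge of $\delta(V(G'))$. Once this is in hand, the verification that $N'$ behaves as claimed is routine bookkeeping, and no appeal beyond the niceness of $G'$ and the factor-criticality of $G$ is needed.
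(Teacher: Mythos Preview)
The paper does not prove this theorem; it merely cites it as implicitly present in Lov\'asz's work, so there is no in-paper argument to compare against. Your symmetric-difference argument is correct: the degree analysis in $M\triangle N$ is sound (every vertex of $V(G')$ is matched by $M$ and exposed by $N$, hence has degree~$1$; $w$ is the reverse; all other vertices have degree $0$ or $2$), the path $P^{*}$ from $w$ must therefore terminate at some $y^{*}\in V(G')$ with all internal vertices outside $V(G')$, and $N'=N\triangle E(P^{*})$ contracts to a perfect matching of $(G/G')-w$ exactly as you describe.

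One small remark: the only place your argument actually invokes the factor-criticality of $G'$ (as opposed to merely its niceness) is the implicit assumption that $V(G')$ contracts to a single vertex $g'$, which needs $G[V(G')]$ to be connected---and factor-criticality supplies that connectedness. Everything else runs on niceness of $G'$ and factor-criticality of $G$ alone.
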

%
%
%
%
%
%
%something about canonical partition
Let us denote the number of odd components (i.e. connected components 
with odd numbers of vertices) of a graph $G$ as $oc(G)$,
and the cardinality of a maximum matching of $G$ as $\nu(G)$.
%It is known as {\em  Tutte's theorem} that
%\begin{theorem}~\label{thm:tutte-berge}
%For any graph $G$ and  $X\subseteq V(G)$,
%$|V(G)| - 2\nu(G) \le oc(G) - |X|$.
It is known as the {\em  Berge formula}~\cite{lp1986} that 
for any graph $G$,
$|V(G)| - 2\nu(G) = \mathrm{max}\{ oc(G-X) - |X| : X\subseteq V(G) \}$.
%\end{theorem}
%
A set of vertices that attains the maximum 
in the right side of the equation %in Theorem~\ref{thm:tutte-berge} 
is called  a {\em  barrier}.

The \textit{canonical partition} is a decomposition  for elementary graphs
and plays a crucial role in matching theory.
First Kotzig introduced the canonical partition 
as a quotient set  of 
a certain equivalence relation~\cite{kotzig1959a, kotzig1959b, kotzig1960},
and later Lov\'asz redefined it from the point of view of barriers~\cite{lp1986}.
In fact, these are equivalent.
For an elementary graph $G$ and $u, v\in V(G)$,
we say \if0 $\sim$ be a binary relation on $\Vg$ such that \fi  
$u\sim v$  if
$u= v$ or $G-u-v$ is not \matchable. 
\begin{theorem}[Kotzig~\cite{kotzig1959a, kotzig1959b, kotzig1960},
Lov\'asz~\cite{lp1986}] \label{thm:canonicalpartition}
Let $G$ be an elementary graph.
Then $\sim$ is an equivalence relation on $V(G)$ and 
the family of equivalence classes is  exactly the family of maximal barriers of $G$.
%This is called the canonical partition of $G$, and  denoted by $\mathcal{P}(G)$
\end{theorem}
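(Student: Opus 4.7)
My plan is to reduce the theorem to a characterization of $\sim$ in terms of barriers together with a closure property for barriers that share a vertex. Reflexivity and symmetry of $\sim$ are immediate from the definition, so the real content lies in proving transitivity and then identifying each equivalence class with a maximal barrier.

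The first step I would establish is a barrier characterization of $\sim$: for distinct $u,v$, $u\sim v$ if and only if some barrier of $G$ contains both $u$ and $v$. For the ``if'' direction, if $\{u,v\}\subseteq B$ with $oc(G-B)=|B|$, then $oc((G-u-v)-(B\setminus\{u,v\}))=|B|=|B\setminus\{u,v\}|+2$, so by the Berge formula $G-u-v$ has deficiency at least $2$ and is not factorizable. Conversely, if $G-u-v$ is not factorizable, the Berge formula yields $S\subseteq V(G)\setminus\{u,v\}$ with $oc((G-u-v)-S)\ge|S|+2$; then $B:=S\cup\{u,v\}$ satisfies $oc(G-B)\ge|B|$, and the reverse inequality follows from the factorizability of $G$, so $B$ is a barrier containing both $u$ and $v$.

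The crux is then the union lemma: if $B_1$ and $B_2$ are barriers of the elementary graph $G$ with $B_1\cap B_2\ne\emptyset$, then $B_1\cup B_2$ is a barrier. I would fix a perfect matching $M$ of $G$ and use the standard structural facts for barriers of elementary graphs, namely that every barrier is an independent set of $G$, every component of $G-B_i$ is factor-critical (so in particular odd), and $M$ pairs each vertex of $B_i$ with a ``base'' in a distinct component. Writing $G-(B_1\cup B_2)=\bigsqcup_C (C-(B_2\cap C))$ with $C$ ranging over the components of $G-B_1$, factor-criticality of each $C$ gives $oc(C-(B_2\cap C))\le|B_2\cap C|+1$, which sums to $oc(G-(B_1\cup B_2))\le|B_1\cup B_2|$; the reverse inequality is automatic from factorizability, so equality forces each $B_2\cap C$ to be a barrier of $C$. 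The hypothesis $B_1\cap B_2\ne\emptyset$ enters precisely here: it forces, under $M$, some base of a component of $G-B_1$ to coincide with a base of a component of $G-B_2$, and this alignment together with the factor-critical structure is what lets one transport the barrier property of $B_2$ in $G$ down to each $B_2\cap C$ inside the corresponding $C$.

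Once the union lemma is in hand, any two maximal barriers sharing a vertex coincide (their union would be a strictly larger barrier, contradicting maximality), so each $u\in V(G)$ lies in a unique maximal barrier $B^{\max}(u)$. Combining with the first characterization, $[u]_{\sim}=\bigcup\{B:B\text{ is a barrier},\,u\in B\}=B^{\max}(u)$, which simultaneously yields transitivity (from $u\sim v$ and $v\sim w$, $B^{\max}(u)=B^{\max}(v)\ni w$) and the identification of equivalence classes with maximal barriers. The main obstacle is unambiguously the union lemma, and more specifically showing that each $B_2\cap C$ inherits the barrier property; the hypothesis $B_1\cap B_2\ne\emptyset$ is essential, as the two color classes of $C_4$ show, so the proof must feed the nonempty intersection into the matching structure, and making this use precise is the delicate step.
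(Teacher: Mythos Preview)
The paper does not prove this theorem in full; it cites the barrier characterization to Lov\'asz and only supplies (as the elementary special case of Theorem~\ref{thm:generalizedcanonicalpartition}) a short proof that $\sim$ is an equivalence relation. That proof is purely in terms of alternating paths and avoids barriers entirely: given $u\sim v$, $v\sim w$ and a hypothetical $M$-saturated $u$--$w$ path $P$, one takes an $M$-balanced path $Q$ from $v$ to $u$ via Proposition~\ref{prop:nonpositive}, and the first vertex of $Q$ on $P$ yields an $M$-saturated path from $v$ to one of $u,w$, contradicting $u\sim v$ or $v\sim w$. The paper even stresses that this works ``without the premise of the Gallai--Edmonds structure theorem nor the notion of barriers.'' So your approach and the paper's diverge completely on transitivity.

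Your barrier-based outline is the Lov\'asz route and is in principle sound, but the union lemma is not proved, and the attempted argument contains an error. The line ``the reverse inequality is automatic from factorizability'' is wrong: factorizability of $G$ means $\max_X\bigl(oc(G-X)-|X|\bigr)=0$, which gives $oc\bigl(G-(B_1\cup B_2)\bigr)\le|B_1\cup B_2|$ --- the \emph{same} inequality you already derived from the factor-critical bound $oc(C-(B_2\cap C))\le|B_2\cap C|+1$, not its reverse. So you have no equality, and hence cannot conclude that each $B_2\cap C$ is a barrier of $C$. The paragraph where $B_1\cap B_2\ne\emptyset$ is supposed to ``transport the barrier property'' is only a description of what must happen, not an argument; as you yourself say, this is ``the delicate step,'' and it is in fact the entire content of the lemma (witness your $C_4$ example). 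A correct proof here needs a genuine structural input --- e.g., the Gallai--Edmonds decomposition of $G-v$ for some $v\in B_1\cap B_2$, which is how Lov\'asz proceeds --- and is not recoverable from the counting you have written.
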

The family of equivalence classes of $\sim$
 is called the \textit{canonical partition} of $G$, and  denoted by $\mathcal{P}(G)$.
\if0%%%%%%%%%%%%%%%%%%%%%%%%%%%%%%%%%%%%%%%%%%%%%%%%%%%%%
From this theorem,  following is  derived for elementary graphs:
For an arbitrary perfect matching $M$ of an elementary graph $G$, 
there is a $u$-$v$ $M$-saturated path if and only if $u\not\sim v$.
Thus, $uv\in \Eg$ is allowed if and only if $u\not\sim v$.
\fi%%%%%%%%%%%%%%%%%%%%%%%%%%%%%%%%%%%%%%%%%%%%%%%%%%%%%%%%%%5
\if0%%%%%%%%%%%%%%%%%%%%%%%%%%%%%%%%%
Later on this paper, note the following fundamental properties;
for a factorizable graph $G$ and its perfect matching $M$,
$e\in E(G)$ is allowed if and only if 
there is an $M$-alternating circuit containing $e$;
for $u,v\in V(G)$, $G-u-v$ is factorizable %where $u,v\in V(G)$
if and only if there is an $M$-saturated path 
between $u$ and $v$;
for two $M$-alternating path $P$ and $Q$,
a segment of $P\cap Q$ is a  $M$-saturated paths if 
it  contains no end vertices of $P$ nor $Q$. 
\fi%%%%%%%%%%%%%%%%%%%%%%%%%%%%%%%%%%%

%isaac2012_1_sec_pre_ear
An {\em  ear-decomposition} of graph $G$ is a sequence of subgraphs 
$G_0, \subseteq, \cdots, \subseteq G_k = G$ such that 
 $G_0 = (\{r\}, \emptyset)$ for some $r\in V(G)$ 
and for each $i \ge 1$, $G_i$ is obtained from $G_{i-1}$ 
%by adding a path $P_i$ whose end vertices are in $V(G_i)$ while
%inner vertices are not.
by adding an ear $P_i$ relative to $G_{i-1}$.
%We call each $P_i$ an {\em  ear} , 
%and especially 
\if0 We call $P_1$  as the {\em  first circuit}.\fi
We sometimes regard  an ear-decomposition as 
 a family of ears $\mathcal{P} = \{P_1,\ldots, P_k\}$. 
An ear-decomposition is called {\em  odd} if 
any of its ears has an odd number of edges.
%A graph has an ear-decomposition if and only if it is 2-edge-connected~\cite{lp1986}.
%
\begin{theorem}[Lov\'asz~\cite{lovasz1972a}]\label{thm:odd}
A graph  is factor-critical if and only if it has an odd ear-decomposition.
%ear-decompo-sition.
% that is, an ear-decomposition any of whose ears has an odd number of edges.
\end{theorem}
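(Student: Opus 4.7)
The plan is to prove both directions separately. For sufficiency (odd ear-decomposition implies factor-critical), I would induct on the number of ears $k$. The base case $k = 0$ gives a single vertex, which is trivially factor-critical. For the inductive step, write $G = G_{k-1} + P_k$ where $G_{k-1}$ is factor-critical by the inductive hypothesis and $P_k$ is an odd ear. Given $w \in V(G)$ to be exposed, I would distinguish the case $w \in V(G_{k-1})$ from the case that $w$ is internal to $P_k$. Labeling the vertices of $P_k$ as $v_0, v_1, \ldots, v_\ell$ with $\ell$ odd and $v_0, v_\ell \in V(G_{k-1})$ (possibly equal if $P_k$ is a circuit), in each case a straightforward parity argument on the odd length of $P_k$ identifies which subset of $\{v_0, v_\ell\}$ should be matched inside $P_k$, and the remaining endpoint(s) are covered by applying the factor-criticality of $G_{k-1}$ to select an appropriate near-perfect matching of $G_{k-1}$.

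For necessity, I would induct on $|V(G)|$; the base case $|V(G)|=1$ is trivial with the empty ear-decomposition. For the inductive step, pick any vertex $v$ and a near-perfect matching $M$ of $G$ exposing $v$ (such $M$ exists since $G-v$ is factorizable). Since $|V(G)| > 1$ and $G$ is connected (factor-critical graphs are connected), $v$ has a non-loop edge $vu$. By Proposition~\ref{prop:fc_choice}, there is a nice circuit $C$ of $G$ which is an $M$-ear relative to $\{v\}$ and contains $vu$. Since $C - v$ is $M$-saturated it has an odd number of edges, so $C$ is an odd circuit and thus factor-critical. By Theorem~\ref{thm:fc_nice}, $G/C$ is factor-critical. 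Noting $|V(G/C)| < |V(G)|$, apply the inductive hypothesis to get an odd ear-decomposition of $G/C$, and crucially, choose its starting vertex to be the contracted vertex $c$. (The inductive construction below begins at any freely chosen $v$, so this choice is legitimate.)

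The final step is to lift the odd ear-decomposition $H_0 \subseteq \cdots \subseteq H_m = G/C$ to $G$. I would set $G_0 = (\{v\}, \emptyset)$, let $G_1 = G_0 + C$ using $C$ itself as the first (odd) ear, and then for each $i \ge 1$ define $G_{i+1} = G_i + \widetilde{Q}_i$, where $\widetilde{Q}_i$ is obtained from the ear $Q_i = H_i \setminus H_{i-1}$ of $G/C$ by replacing each endpoint occurrence of $c$ with its corresponding preimage vertex in $V(C)$. Since contraction preserves edges outside $C$, each $\widetilde{Q}_i$ has the same number of edges as $Q_i$ (hence odd), and its internal vertices, being identical to those of $Q_i$, lie outside $V(G_i)$; so $\widetilde{Q}_i$ is a genuine odd ear relative to $G_i$.

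The main obstacle is the necessity direction, specifically the lifting argument. Two things need verification there: first, that the inductive hypothesis can be applied with a prescribed starting vertex in $G/C$ (automatic from the construction, since the root is chosen at the start of each induction step); second, that the lifted ears remain ears of $G$, i.e.\ that their internal vertices avoid the current subgraph, which follows because contracting $C$ affects only vertices and edges within $C$.
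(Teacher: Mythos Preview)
The paper does not supply its own proof of Theorem~\ref{thm:odd}; the result is quoted from Lov\'asz and used as a black box, so there is no argument in the paper to compare against.

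Your proposal is correct and follows the standard route. Two remarks. First, you invoke Proposition~\ref{prop:fc_choice} and Theorem~\ref{thm:fc_nice}, which the paper likewise imports from the same Lov\'asz paper without proof; both can be established directly from Proposition~\ref{prop:path2root} (indeed the paper contains commented-out proofs to this effect), so there is no hidden circularity. Second, your insistence that the ear-decomposition of $G/C$ start at the contracted vertex $c$ is exactly the point that makes the lifting work, and you are right that this forces the induction hypothesis to be the formally stronger statement ``for every vertex $r$ there is an odd ear-decomposition with $G_0=\{r\}$''; your construction proves that stronger statement. One small omission in the lifting: edges of $G$ with both endpoints in $V(C)$ but not lying on the circuit $C$ become loops at $c$ in $G/C$, and these must appear in the ear-decomposition of $G/C$ as single-edge odd ears; they lift back to single-edge odd ears in $G$, so the final decomposition still covers all of $E(G)$. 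With that detail noted, the argument is complete.
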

For a factor-critical graph $G$ and its near-perfect matching $M$,
we call an  ear-decomposition 
% $\mathcal{P}  = \{P_1,\ldots, P_k\}$   of $G$ 
 \textit{alternating with respect to} $M$, or just   $M$-alternating, 
if each ear  \if0 $P_i$ \fi 
is an $M$-ear.  %relative to $P_1+\cdots +P_{i-1}$.
%For an elementary graph $G$ and its perfect matching $M$,
%so do we to $\mathcal{P}$
%if its first circuit is alternating circuit 
%and any other ear is an $M$-ear.
%
%The following is easily followed from the above statements.
%
\begin{proposition}[Lov\'asz~\cite{lovasz1972a}]~\label{prop:fc_alt}
Let $G$ be a factor-critical graph. Then for any near-perfect matching $M$ of $G$, 
there is an $M$-alternating  ear-decomposition of $G$.
%The same thing is also true for 
%an elementary graph and its perfect matching.
\end{proposition}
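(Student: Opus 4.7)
The plan is to proceed by induction on $|V(G)|$. The base case $|V(G)|=1$ is immediate, as the length-zero sequence with $G_0 = G$ is vacuously an $M$-alternating ear-decomposition. For the inductive step, let $v$ be the vertex exposed by $M$. Since factor-critical graphs are connected and $|V(G)|\ge 3$, there is a non-loop edge $e = vu$ incident to $v$, and Proposition~\ref{prop:fc_choice} then furnishes a nice circuit $C$ that is an $M$-ear relative to $v$ and contains $e$. This $C$ will serve as the first ear of the decomposition.

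Because the internal vertices of $C$ form an $M$-saturated path, $|V(C)|$ is odd and $C$ is itself factor-critical. Combined with niceness, Theorem~\ref{thm:fc_nice} yields that $G/C$ is factor-critical. Every vertex of $V(C)\setminus\{v\}$ is matched by $M$ inside $E(C)$, and $v$ is exposed, so no $M$-edge leaves $V(C)$; hence $M' := M \setminus E(C)$ is a near-perfect matching of $G/C$ that exposes the contracted vertex $v_C$. Applying the induction hypothesis to $(G/C, M')$ produces an $M'$-alternating ear-decomposition $H_0 \subseteq \cdots \subseteq H_k = G/C$ with $H_0 = (\{v_C\},\emptyset)$, whose successive ears I denote $Q_1,\ldots,Q_k$.

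To finish, I would lift this decomposition back to $G$: take $G_0 := (\{v\},\emptyset)$, $G_1 := C$, and for each $i \ge 1$ form $G_{i+1}$ from $G_i$ by attaching the lift $\widetilde{Q}_i$ of $Q_i$, in which every edge of $Q_i$ incident to $v_C$ is replaced by the corresponding edge of $G$ whose $V(C)$-endpoint is the one determined by the contraction. Since no internal vertex of $Q_i$ can be $v_C$ (internal vertices avoid $V(H_{i-1})$), the lift shares all of its internal vertices and the matching-edges among them with $Q_i$, so its $M'$-saturated middle part persists as an $M$-saturated path and $\widetilde{Q}_i$ is an $M$-ear relative to $G_i$. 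The main obstacle, which I expect to need the most care, is checking that each $\widetilde{Q}_i$ is genuinely an ear with endpoints in $V(G_i)$: endpoints of $Q_i$ distinct from $v_C$ already lie in $V(G_i)$, while each occurrence of $v_C$ as an endpoint of $Q_i$ lifts to a vertex of $V(C) \subseteq V(G_i)$, so $\widetilde{Q}_i$ is either a path or a circuit relative to $G_i$, completing the induction.
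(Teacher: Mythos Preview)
The paper does not include its own proof of this proposition; it is quoted as a result of Lov\'asz with a citation only, so there is nothing to compare your argument against. Your inductive approach via Proposition~\ref{prop:fc_choice} and Theorem~\ref{thm:fc_nice} is the standard one and is essentially correct.

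One technicality deserves a sentence. When you pass to $G/C$ in the paper's sense (that is, $G/V(C) = G/E(G[V(C)])$), any edge of $G[V(C)]$ that is \emph{not} on the circuit $C$ --- a chord of $C$ --- is collapsed. If such chords survive as loops at $v_C$, they appear as length-one ears in the $M'$-alternating decomposition of $G/C$, and your lifting rule correctly turns each into a single-edge $M$-ear between two vertices of $V(C)$. If instead the contraction convention suppresses loops, then your lifted sequence $G_0\subseteq G_1\subseteq\cdots\subseteq G_{k+1}$ recovers every edge of $G$ \emph{except} these chords, so $G_{k+1}\subsetneq G$; the fix is immediate (append each chord as a further single-edge $M$-ear once $V(C)\subseteq V(G_1)$), but it should be stated explicitly. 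Apart from this bookkeeping point, the induction, the use of niceness to ensure $|V(G/C)|<|V(G)|$, and the verification that lifted ears remain $M$-ears relative to $G_i$ are all sound.
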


%\input{isaac2012_1_sec_pre_barrier}
%\input{isaac2012_sec_pre_note}
%partially_prop_allowed2circuit
\begin{proposition}\label{prop:allowed2circuit}
Let $G$ be a factorizable graph, and $M$ be a perfect matching of $G$. 
Then, for $e = xy \in E(G)\setminus M$, 
the followings are equivalent; 
\renewcommand{\labelenumi}{\theenumi}
\renewcommand{\labelenumi}{{\rm \theenumi}}
\renewcommand{\theenumi}{(\roman{enumi})}
\begin{enumerate}
\item 
$e$ is allowed in $G$. 
\item 
There is an $M$-alternating circuit containing $e$. 
\item 
There is an $M$-saturated path between $x$ and $y$. 
\end{enumerate}
\end{proposition}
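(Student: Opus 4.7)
The plan is to prove the cyclic chain of implications $(i) \Rightarrow (ii) \Rightarrow (iii) \Rightarrow (i)$, which is the cleanest route since each step involves only symmetric differences of perfect matchings or near-analogues.

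For $(i) \Rightarrow (ii)$, I would take a perfect matching $N$ of $G$ containing $e$ (which exists by the definition of allowed) and examine the symmetric difference $M \triangle N$. Since both $M$ and $N$ are perfect matchings, every vertex of $G$ is incident with at most one edge of $M \triangle N$ from each of $M$ and $N$, so the subgraph induced by $M \triangle N$ has every vertex of degree $0$ or $2$. Hence it decomposes into vertex-disjoint circuits, each of which alternates between edges of $M$ and edges of $N$, and in particular is $M$-alternating. Because $e \in N \setminus M \subseteq M \triangle N$, the edge $e$ lies in one of these circuits, giving the desired $M$-alternating circuit.

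For $(ii) \Rightarrow (iii)$, let $C$ be an $M$-alternating circuit containing $e = xy$, and consider $P := C - e$, a path from $x$ to $y$. The key observation is a parity check: since $e \notin M$ and $C$ alternates between $M$-edges and non-$M$-edges, the two edges of $C$ incident with $e$ must both lie in $M$. Continuing around $C$, the edges of $P$ alternate starting and ending with $M$-edges, so $M \cap E(P)$ is a perfect matching of $P$, i.e., $P$ is $M$-saturated from $x$ to $y$.

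For $(iii) \Rightarrow (i)$, take an $M$-saturated path $P$ between $x$ and $y$ and form $C := P + e$. By the definition of $M$-saturated, the first and last edges of $P$ lie in $M$, while $e \notin M$, so $C$ is an $M$-alternating even circuit. Setting $N := M \triangle E(C)$ gives a perfect matching of $G$: on $V(C)$ it swaps $M \cap E(C)$ for $E(C) \setminus M$, covering every vertex of $C$ exactly once, while outside $V(C)$ it agrees with $M$. Since $e \in E(C) \setminus M$, we have $e \in N$, so $e$ is allowed.

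There is no real obstacle here: each direction is a one-line symmetric-difference argument, and the only point that needs any attention is the parity check in $(ii) \Rightarrow (iii)$ confirming that $P$ begins and ends with an $M$-edge, which follows immediately from the fact that $e \notin M$.
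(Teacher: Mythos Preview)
Your proof is correct and is precisely the standard symmetric-difference argument one would give for this folklore fact. The paper itself states Proposition~\ref{prop:allowed2circuit} in the Preliminaries section without proof, treating it as well-known, so there is no paper proof to compare against; your write-up is exactly what would be expected to fill that gap.
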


%partially_prop_ear2relative
\begin{proposition}\label{prop:ear2relative}
Let $G$ be a graph, $M$ be a matching of $G$, 
and $X\subseteq V(G)$ be such that $M_X$ is a perfect matching of $G[X]$. 
Let $P$ be a subgraph of $G$ that satisfies either of the followings;  
\renewcommand{\labelenumi}{\theenumi}
\renewcommand{\labelenumi}{{\rm \theenumi}}
\renewcommand{\theenumi}{(\roman{enumi})}
\begin{enumerate}
\item $P$ is an $M$-alternating circuit with $V(P)\cap X \neq \emptyset$, 
\item for some $u\in X$, $P$ is an $M$-ear relative to $\{u\}$, 
\item $P$ is an $M$-exposed path whose end vertices are in $X$, or 
\item $P$ is an $M$-saturated path whose end vertices are in $X$. 
\end{enumerate}
Then, 
each connected component of $P-E(G[X])$ is an $M$-ear relative to $X$. 
\end{proposition}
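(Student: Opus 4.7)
The plan is to unify all four cases via a single observation about how $M$ interacts with the partition $\{X, Y\}$ where $Y := V(G) \setminus X$. Since $M_X$ is a perfect matching of $G[X]$, every vertex of $X$ is $M$-matched by an edge inside $G[X]$; hence every edge of $M$ has either both endpoints in $X$ or both endpoints in $Y$, and every edge of $P$ crossing between $X$ and $Y$ necessarily lies outside $M$.

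Given this, I would first describe the connected components of $P - E(G[X])$. Any $y \in V(P) \cap Y$ keeps both of its $P$-incident edges, since each of these edges has an endpoint in $Y$ and so is not in $E(G[X])$. A vertex $v \in V(P) \cap X$ keeps only those $P$-incident edges whose other endpoint lies in $Y$. Consequently each non-trivial connected component of $P - E(G[X])$ is a subpath $Q = v y_1 \cdots y_k v'$ of $P$ with $v, v' \in X$, $y_1, \ldots, y_k \in Y$ and $k \ge 1$; such a $Q$ is an ear relative to $X$. To upgrade this to an $M$-ear, observe that the two boundary edges $v y_1$ and $y_k v'$ cross between $X$ and $Y$ and hence lie outside $M$ by the initial observation. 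The $M$-alternation of $P$ then propagates to $Q$: starting from the non-$M$ edge $v y_1$ and alternating, the edges $y_1 y_2, y_3 y_4, \ldots$ must all lie in $M$, and $y_k v'$ being non-$M$ forces $|E(Q)|$ to be odd. So $Q - \{v, v'\} = y_1 \cdots y_k$ has odd length and is perfectly matched by $M$, i.e., is $M$-saturated.

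The only real work is verifying that the $M$-alternation along each $Q$ actually holds in each of the four cases for $P$. In cases (i) and (ii) this is immediate from the definitions of $M$-alternating circuit and $M$-ear. In cases (iii) and (iv) the endpoints of $P$ need a moment of bookkeeping: in case (iii) the first edge $v_0 v_1$ of $P$ is non-$M$, which fits cleanly with the classification; in case (iv) the first edge $v_0 v_1$ lies in $M$ and therefore (since $v_0 \in X$) has both endpoints in $X$, so it is removed and $v_0$ becomes a degenerate isolated component, to be regarded as a trivial $M$-ear relative to $X$. Once these endpoint cases are recorded, the argument of the previous paragraph applies uniformly to every non-trivial component.
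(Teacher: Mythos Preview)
The paper states this proposition in its preliminaries without proof, so there is nothing to compare against directly. Your key observation---that no edge of $M$ crosses the cut $(X,Y)$---is exactly right, and your argument is essentially correct for cases (i), (iii), (iv). One step deserves to be made explicit, though: you assert that every non-trivial component has the shape $v\,y_1\cdots y_k\,v'$ with all internal vertices in $Y$, but the preceding sentences only say that an $X$-vertex keeps \emph{at most} its edges to $Y$, which does not rule out an $X$-vertex keeping \emph{two} such edges and sitting in the interior. The missing line is that in cases (i), (iii), (iv) every $X$-vertex on $P$ (apart from the $P$-endpoints you handle separately) is incident in $P$ to an edge of $M$; that edge then has both ends in $X$ and is deleted, forcing degree at most $1$ in $P-E(G[X])$.

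Case (ii) is a genuine problem. Your claim that alternation is ``immediate from the definition of $M$-ear'' is incorrect here: an $M$-ear relative to $\{u\}$ is an odd circuit through $u$ with $P-u$ $M$-saturated, so \emph{both} $P$-edges at $u$ lie outside $M$, and the repair above fails at $u$. In fact the proposition as written appears to be false in this case. Take the $7$-circuit $P = u,w_1,\ldots,w_6,u$ with $w_1w_2,\,w_3w_4,\,w_5w_6\in M$; put $X=\{u,c,w_3,w_4\}$ with $M_X=\{uc,\,w_3w_4\}$ and $w_1,w_2,w_5,w_6\notin X$. Then $P-E(G[X])$ is the single path $w_3,w_2,w_1,u,w_6,w_5,w_4$, which has $u\in X$ as an internal vertex and is not an $M$-ear relative to $X$. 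So the gap in your handling of (ii) cannot be closed; the defect lies in the statement itself rather than in your method.
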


\section{A Partially Ordered Structure in  Factorizable Graphs}
%partially_def_prop_separating
Let $G$ be a factorizable graph. 
For $X\subseteq V(G)$ we call $X$ a {\em separating set} 
if for any $H\in \mathcal{G}(G)$, 
$V(H)\subseteq X$ or $V(H)\cap X = \emptyset$. 
The next property is easy to see by the definition. 
\begin{proposition}\label{prop:separating}
Let $G$ be a factorizable graph, and $X\subseteq V(G)$ with $X\neq \emptyset$. 
The following properties are equivalent; 
\renewcommand{\labelenumi}{\theenumi}
\renewcommand{\labelenumi}{{\rm \theenumi}}
\renewcommand{\theenumi}{(\roman{enumi})}
\begin{enumerate}
\item 
$X$ is separating. 
\item 
There exist $H_1,\ldots, H_k \in \mathcal{G}(G)$, where $k\ge 1$,  
such that $X = V(H_1)\dot{\cup}\cdots \dot{\cup} V(H_k)$. 
\item 
For any perfect matching $M$ of $G$, 
$\delta(X)\cap M = \emptyset$. 
\item 
For any perfect matching $M$ of $G$, 
$M_X$ is a perfect matching of $G[X]$. 
\end{enumerate}
\end{proposition}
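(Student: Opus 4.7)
The plan is to establish the cycle of implications (i)$\Leftrightarrow$(ii), (ii)$\Rightarrow$(iv)$\Rightarrow$(iii)$\Rightarrow$(i). The only ingredient beyond the definitions is the fact that an allowed edge is precisely an edge contained in some perfect matching, which is built into the setup. Before starting I would record the preliminary observation that in a factorizable graph every vertex lies on some perfect matching edge, and every perfect matching edge is allowed, so every vertex is incident to an allowed edge; consequently $\{V(H) : H \in \mathcal{G}(G)\}$ is a partition of $V(G)$, not merely a cover of some subset.

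With that partition in hand, (i)$\Leftrightarrow$(ii) is essentially definitional: $X$ meets each block of the partition either trivially or in its entirety exactly when $X$ is a disjoint union of blocks, and $X \neq \emptyset$ corresponds to $k \ge 1$. For (ii)$\Rightarrow$(iv), fix any perfect matching $M$; each $M$-edge is allowed, hence lies inside a single factor-component, so any $M$-edge incident to a vertex of $X$ has both ends in $X$. This forces every vertex of $X$ to be matched inside $G[X]$ by $M_X = M \cap E(G[X])$, so $M_X$ is a perfect matching of $G[X]$. The step (iv)$\Rightarrow$(iii) is immediate: if $M_X$ saturates $X$ then no edge of $M$ can cross $\delta(X)$.

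The remaining implication (iii)$\Rightarrow$(i) I would prove by contraposition. Suppose $X$ is not separating; pick $H \in \mathcal{G}(G)$ with both $V(H)\cap X \neq \emptyset$ and $V(H)\setminus X \neq \emptyset$. Since $H$ is a factor-component, the subgraph of $H$ consisting of its allowed edges is connected and spans $V(H)$, so there is an allowed edge $uv$ with $u \in V(H)\cap X$ and $v \in V(H)\setminus X$. By the definition of allowed, some perfect matching $M$ of $G$ contains $uv$, whence $uv \in M \cap \delta(X)$, contradicting (iii).

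The argument is routine throughout; the only point worth care is the partition observation underlying (i)$\Leftrightarrow$(ii), since one might otherwise worry about vertices of $V(G)$ belonging to no factor-component, a possibility ruled out by factorizability. I do not anticipate any genuine obstacle.
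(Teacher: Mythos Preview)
Your proof is correct. The paper does not actually supply a proof of this proposition---it simply remarks that it is ``easy to see by the definition''---and your argument is precisely the routine unpacking of the definitions that the paper implicitly has in mind, including the key preliminary observation that the factor-components partition $V(G)$.
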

%
%

%partially_def_critical-inducing
Let $G_1, G_2\in\mathcal{G}(G)$. 
We say a separating set $X$ is a {\em critical-inducing set for} $G_1$ 
if $V(G_1)\subseteq X$ and $G[X]/G_1$ is a factor-critical graph. 
Moreover, 
we say $X$ is a {\em critical-inducing set for} $G_1$ {\em to} $G_2$ 
if $V(G_1)\cup V(G_2)\subseteq V(G)$ and $G[X]/G_1$ is a factor-critical graph. 
\begin{definition}
Let $G$ be a factorizable graph, and $G_1,G_2\in\mathcal{G}(G)$. 
We say $G_1\yield G_2$ if 
there is a critical-inducing set for $G_1$ to $G_2$. 
\end{definition}

%partially_lema_proof__path2base
\begin{lemma}\label{lem:path2base}
Let $G$ be a factorizable graph and $M$ be a perfect matching of $G$, 
and let $X\subseteq V(G)$ and $G_1\in\mathcal{G}(G)$. 
Then, $X$ is a critical-inducing set for $G_1$ if and only if 
for any $x\in X\setminus V(G_1)$ 
there exists $y\in V(G_1)$ such that 
there is an $M$-balanced path from $x$ to $y$ 
whose vertices except $y$ are in $X\setminus V(G_1)$. 
\end{lemma}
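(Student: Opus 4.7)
The plan is to translate the statement via the contracted graph $\hat{G} := G[X]/G_1$ and then apply Proposition~\ref{prop:path2root}. Under the hypotheses implicit in the definition of a critical-inducing set, namely that $X$ is separating and $V(G_1) \subseteq X$, Proposition~\ref{prop:separating} gives that $M_X$ is a perfect matching of $G[X]$ and $M_{V(G_1)}$ is a perfect matching of $G_1$. Consequently $\hat{M} := M_X \setminus M_{V(G_1)}$ is a near-perfect matching of $\hat{G}$ that exposes the single contracted vertex, which I will denote $g_1$. Proposition~\ref{prop:path2root} then says $\hat{G}$ is factor-critical if and only if every vertex of $\hat{G}$ admits an $\hat{M}$-balanced path to $g_1$; since the trivial (length-zero) path handles $g_1$ itself, this reduces to the existence of such a path from every $x \in X \setminus V(G_1)$.

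The remaining step is to set up the correspondence between $\hat{M}$-balanced paths from $x$ to $g_1$ in $\hat{G}$ and $M$-balanced paths from $x$ to some $y \in V(G_1)$ in $G[X]$ whose vertices other than $y$ lie in $X \setminus V(G_1)$. In the forward direction, the final edge $v_k g_1$ of an $\hat{M}$-balanced path in $\hat{G}$ lifts to a unique edge $v_k y$ of $G[X]$ with $y \in V(G_1)$, while the earlier edges are unchanged, yielding a path of the required form. In the reverse direction, contracting $V(G_1)$ sends any path of the form described by the lemma to an $\hat{M}$-balanced path from $x$ to $g_1$. Balancedness is preserved in both directions because $\hat{M}$ agrees with $M$ on the edges of $\hat{G}$, the parities of length are preserved under contraction, and the terminal non-matching edge $v_k g_1$ in $\hat{G}$ corresponds to the terminal non-matching edge $v_k y$ in $G[X]$; here one uses that $V(G_1)$ is separating, so $y$ is $M$-matched inside $V(G_1)$ and hence $v_k y \notin M$ automatically.

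The lemma is therefore essentially Proposition~\ref{prop:path2root} applied in the quotient $\hat{G}$, and the only substantive work is this path-correspondence bookkeeping. I expect the main obstacle to be simply verifying carefully that balancedness (parities of length, matching-status of the first and last edges, and the fact that internal vertices do not stray into $V(G_1)$) is respected in both directions of the correspondence. A secondary point is to confirm up front that the setup conditions ($X$ separating, $V(G_1) \subseteq X$) are in place so that $\hat{G}$ and $\hat{M}$ are well-defined and Proposition~\ref{prop:path2root} is applicable.
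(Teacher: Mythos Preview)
Your proposal is correct and follows exactly the paper's approach: pass to the contracted graph $G[X]/G_1$, observe that $M_{X\setminus V(G_1)}$ is a near-perfect matching exposing the contracted vertex $g_1$, and invoke Proposition~\ref{prop:path2root}. The paper's own proof is terser and leaves the path-lifting correspondence implicit, whereas you spell it out; the underlying argument is the same.
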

\begin{proof}
The claim is rather easy from Proposition~\ref{prop:path2root}; 
$X$ is a critical-inducing set for $G_1$ 
if and only if $G[X]/G_1$ is factor-critical. 
Note that $M_{X\setminus V(G_1)}$ forms a near-perfect matching of $G[X]/G_1$. 
Therefore, 
$G[X]/G_1$ is factor-critical if and only if 
for any $x\in X$ there is an $M$-balanced path 
from $x$ to the contracted vertex $g_1$ corresponding to $G_1$.  
Therefore, the claim follows. 
\qed
\end{proof}
\begin{proposition}\label{prop:nonpositive}
Let $G$ be an elementary graph and $M$ be a perfect matching of $G$.  
Then for any two vertices $u,v \in V(G)$
there is an $M$-saturated path between $u$ and $v$, 
or an $M$-\zero path from $u$ to $v$.
\end{proposition}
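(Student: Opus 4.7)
The plan is to derive this from Kotzig's canonical partition (Theorem~\ref{thm:canonicalpartition}) together with the standard symmetric-difference bridge between perfect matchings of $G-u-v$ and $M$-saturated $u$-$v$ paths. The trivial cases dispose of themselves: if $u=v$, the one-vertex trivial path is already $M$-balanced from $u$ to $v$; if $uv\in M$, then the single edge $uv$ is itself an $M$-saturated path of length one. Hence I may assume $u\neq v$ and $uv\notin M$.

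First I would record the auxiliary equivalence that, for distinct $u,v\in V(G)$, the graph $G-u-v$ is factorizable if and only if there is an $M$-saturated path between $u$ and $v$. The ``if'' direction follows because, for such a saturated path $P$, the set $M\triangle E(P)$ leaves only $u$ and $v$ uncovered and is a perfect matching of $G-u-v$. Conversely, given a perfect matching $N$ of $G-u-v$, the symmetric difference $M\triangle N$ has degree one at $u$ and at $v$ and degree $0$ or $2$ at every other vertex; the component of $M\triangle N$ containing $u$ must therefore be a path ending at $v$, alternating between $M$ and $N$, starting with the $M$-edge at $u$ and ending with the $M$-edge at $v$, that is, an $M$-saturated $u$-$v$ path.

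With this in place, the proof splits on whether $u\sim v$ or not. If $u\not\sim v$, the auxiliary equivalence delivers an $M$-saturated path between $u$ and $v$ directly. Otherwise, by Theorem~\ref{thm:canonicalpartition} the vertices $u$ and $v$ lie in a common maximal barrier $B$ of $G$. The Berge formula together with the factorizability of $G$ forces $oc(G-B)=|B|$; an easy count then shows that every vertex of $B$ is $M$-matched to a vertex in one of the $|B|$ odd components of $G-B$, so in particular $v'\notin B$ and thus $u\not\sim v'$. The auxiliary equivalence produces an $M$-saturated path $P$ from $u$ to $v'$. Since $v'$ is an endpoint of $P$ and $M\cap E(P)$ is a perfect matching of $V(P)$, the unique $P$-edge at $v'$ must be its $M$-edge $vv'$, so $v$ is the penultimate vertex of $P$. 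Deleting this last edge $vv'$ from $P$ yields a path from $u$ to $v$ of even length that still starts with the $M$-edge at $u$ and now ends with a non-$M$-edge at $v$, namely, an $M$-balanced path from $u$ to $v$.

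The step I expect to demand the most care is the barrier-matching observation used in the $u\sim v$ branch, i.e., that every vertex of a maximal barrier of a factorizable graph is $M$-matched to a vertex outside the barrier; however, under $|V(G)|=2\nu(G)$ this drops out at once from the Berge formula and the definition of a barrier. Everything else reduces to the two correspondences---$M$-saturated $u$-$v$ paths versus perfect matchings of $G-u-v$, and $M$-saturated $u$-$v'$ paths versus $M$-balanced paths from $u$ to $v$---after which a single case analysis on $u\sim v$ settles the proposition.
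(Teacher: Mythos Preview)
Your proof is correct but takes a genuinely different route from the paper. The paper argues directly and self-containedly: assuming without loss of generality that $G$ is matching-covered, it lets $U$ be the set of vertices reachable from $u$ by an $M$-saturated or $M$-balanced path and shows $U=V(G)$ by picking a boundary edge $vw$ with $v\in U$, $w\notin U$, then using an $M$-saturated $v$--$w$ path (from Proposition~\ref{prop:allowed2circuit}) to splice together a path from $u$ to $w$, contradicting $w\notin U$. No barriers and no appeal to Theorem~\ref{thm:canonicalpartition} are used. Your argument, by contrast, takes Kotzig's theorem as a black box and reduces everything to the dichotomy $u\sim v$ versus $u\not\sim v$; the barrier-matching count and the observation that the last edge of an $M$-saturated $u$--$v'$ path must be $vv'$ are both sound. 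This is slick once Theorem~\ref{thm:canonicalpartition} is available, but it comes at a cost in the present paper: Proposition~\ref{prop:nonpositive} is later used in the proof of Theorem~\ref{thm:generalizedcanonicalpartition}, and the paper explicitly advertises that this yields a new short proof of the canonical partition itself. Your derivation would make that claim circular, so the paper's elementary argument is not merely stylistic but is needed to keep the logical dependencies running in the intended direction.
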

%
%proof of the prop:nonpositive
%
\begin{proof}
Without loss of generality we can assume $G$ is matching-covered, 
that is, every edge of $G$ is allowed.
Let $U_1\subseteq V(G)$ be the set of vertices that can be reached from $u$ by an  
$M$-saturated path,
and $U_2\subseteq V(G)$ be the set of vertices that can be reached from $u$ by an $M$-\zero path
but cannot be by any $M$-saturated paths. 
We are going to obtain the claim by showing $U:= U_1\dot{\cup} U_2 = V(G)$.
Suppose that it fails, namely that $U \subsetneq V(G)$.
Then there are $v\in U$ and $w\in V(G)\setminus U$ such that $vw\in E(G)$,
since $G$ is connected.
%As $G$ is connected, %$\Gamma(U)\neq \emptyset$.
%%Let $w\in V(G)\setminus U$ and $z\in U$ with  $wz\in E(G)$.
%there are $v\in U$ and $w\in V(G)\setminus U$ where $vw\in E(G)$.
By the definition of $U$,
there is an $M$-saturated or \zero path $P$ from $u$ to $v$, 
which satisfies $V(P)\subseteq U$ 
since for each $z \in V(P)$ 
$uPz$ is an $M$-saturated or balaned path from $u$ to $z$. 
If $P$ is $M$-saturated, therefore, 
$P+vw$ is an $M$-balanced path from $u$ to $w$, 
which means $w\in U$, a contradiction.

Hence, hereafter we  assume $P$ is $M$-balanced, from $u$ to $v$.
Since $vw$ is defined to be allowed, 
there is an $M$-saturated path $Q$ between $v$ and $w$ 
by Proposition~\ref{prop:allowed2circuit}. 
Trace $P$ from $u$ and let $x$ be the first vertex we encounter 
that is in $Q$; 
such $x$ surely exists under the current hypotheses since $v\in V(P)\cap V(Q)$. 
\begin{cclaim}\label{claim:u2x}
$uPx$ is an $M$-balanced path. 
\end{cclaim}
\begin{proof}
Suppose the claim fails, 
which is equivalent to $uPx$ being an $M$-saturated path. 
Then, $x'\in V(uPx)$. 
On the other hand, 
since $Q$ is $M$-saturated, $x'\in V(Q)$. 
Therefore, $x'\in V(uPx)\cap V(Q)$,  
which means we counter $x'$ before $x$ if we trace $P$ from $u$, 
a contradiction. 
\qed
\end{proof}
\begin{cclaim}\label{claim:x2w}
$xPw$ is an $M$-saturated path between $x$ and $w$. 
\end{cclaim}
\begin{proof}
If $x = v$, $vPx$ is a trivial $M$-balanced path from $v$ to $x$. 
Even if $x \neq v$, so is it since $x$ is matched by $M\cap E(P)$. 
Anyway, whether $x = v$ or not, $vPx$ is an $M$-balanced path 
from $v$ to $x$. 
Therefore, together with $vPw$ being an $M$-saturated path, 
$xPw$ is an $M$-balanced path from $x$ to $w$. 
\qed
\end{proof}
By Claims~\ref{claim:u2x} and \ref{claim:x2w}, 
$uPx + xQw$ is an $M$-saturated path between $u$ and $w$, 
since $V(uPx)\cap V(xQw) = \{ x\}$ by the definition of $x$. 
Hence, $w\in U$, a contradiction, 
and we obtain $U = V(G)$, which completes the proof. 
\qed
\end{proof}  

\begin{proposition}\label{prop:adjoin-ear}
Let $G$ be a factorizable graph and $M$ be a perfect matching of $G$. 
Let $X\subseteq V(G)$,  and $H \in\mathcal{G}(G)$ be such that 
there is an $M$-ear $P$ relative to $X$ and through $H$, 
whose end vertices are $u, v\in V(G_1)$. 
Let $Y := V(H) \cup V(P)\setminus \{u, v\}$. 
Then, for any $x\in Y$, 
\renewcommand{\labelenumi}{\theenumi}
\renewcommand{\labelenumi}{{\rm \theenumi}}
\renewcommand{\theenumi}{(\roman{enumi})} 
\begin{enumerate}
\item \label{item:int}
there exists an internal vertex $y$ of $P$ such that 
there is an $M$-balanced path $Q$ from $x$ to  $y$ with  
$V(Q)\subseteq Y$ and $V(Q)\cap V(P) = \{y\}$, and 
\item \label{item:end}
for $w$ identical to either $u$ or $v$, 
$Q + yPw$ is an $M$-balanced path from 
$x$ to $w$, 
whose vertices except $w$ are contained in $Y$.  
%there is an $M$-balanced path $R$ from $x$ to $u$ or $v$ 
%such that $V(R)\setminus \{u, v\} \subseteq V(P)\cup V(G_2)$. 
\end{enumerate}
\end{proposition}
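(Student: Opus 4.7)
The plan is to split into two cases based on whether $x$ lies on the ear $P$ or in $V(H)\setminus V(P)$, and to exploit one key structural fact: every internal vertex of $P$ has its $M$-partner also internal to $P$. This holds because $P$ being an $M$-ear relative to $X$ means $P-X$ is an $M$-saturated path, so $M$ matches internal vertices of $P$ pairwise within $P-X$. I will also use that $V(H)$ is separating (Proposition~\ref{prop:separating}), so $M_{V(H)}$ is a perfect matching of the elementary graph $G[V(H)]$, which makes Proposition~\ref{prop:nonpositive} available inside $V(H)$. Throughout I rely on the natural setup that $V(H)\cap\{u,v\}=\emptyset$.

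If $x$ is already an internal vertex of $P$, I take $y:=x$ and let $Q$ be the trivial one-vertex path at $x$. Then $Q$ is vacuously $M$-balanced, $V(Q)=\{x\}\subseteq Y$, and $V(Q)\cap V(P)=\{y\}$, settling (i). For (ii), since $P$ is an $M$-ear the two $P$-edges incident to $u$ and $v$ lie outside $M$, while $x$ is matched within $P-X$; a direct parity check then shows that exactly one of $xPu$ and $xPv$ starts with an $M$-edge at $x$ and has even length, and that subpath is $M$-balanced. I pick $w$ to be its endpoint.

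The substantive case is $x\in V(H)\setminus V(P)$. Fix any internal vertex $p$ of $P$ lying in $V(H)$; such $p$ exists since $P$ passes through $H$. Proposition~\ref{prop:nonpositive} applied in $G[V(H)]$ with matching $M_{V(H)}$ yields an $M$-balanced or $M$-saturated path $R$ from $x$ to $p$ inside $V(H)$. Let $y$ be the first vertex of $R$ (traced from $x$) that lies in $V(P)$; because $R\subseteq G[V(H)]$ and $u,v\notin V(H)$, this $y$ is necessarily an internal vertex of $P$. Setting $Q:=xRy$ gives $V(Q)\subseteq V(H)\subseteq Y$ and $V(Q)\cap V(P)=\{y\}$.

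The main obstacle is showing $Q=xRy$ is $M$-balanced and not $M$-saturated. Here is the crux: let $y_{-1}$ denote the predecessor of $y$ on $R$. By the choice of $y$ we have $y_{-1}\notin V(P)$; but $y$ is internal to $P$, so the $M$-partner of $y$ lies on $V(P)$, forcing $y_{-1}y\notin M$. Since $R$ starts with an $M$-edge at $x$ and is $M$-alternating, and since its prefix $xRy$ ends with the non-$M$ edge $y_{-1}y$, the subpath $xRy$ has even length and is $M$-balanced. For (ii), I pick $w\in\{u,v\}$ so that $yPw$ is $M$-balanced, by the same parity argument applied at the internal vertex $y$. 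The concatenation $Q+yPw$ is a simple path because $V(Q)\cap V(yPw)\subseteq V(Q)\cap V(P)=\{y\}$; it is $M$-balanced because $Q$ ends with a non-$M$ edge at $y$ while $yPw$ begins with an $M$-edge at $y$, preserving alternation and keeping the total edge count even; and all its vertices except $w$ lie in $Y$.
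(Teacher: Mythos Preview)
Your proof is correct and follows essentially the same route as the paper's: the same case split on whether $x$ lies on $P$ or in $V(H)\setminus V(P)$, the same appeal to Proposition~\ref{prop:nonpositive} inside $H$, and the same ``first intersection with $P$'' trick to define $y$. Your argument is in fact more careful than the paper's, which simply asserts that $xRy$ is a desired path for (i) without justifying that it is $M$-balanced; your observation that the $M$-partner of $y$ lies on $P$ (forcing the last edge $y_{-1}y$ out of $M$) is exactly the missing parity argument.
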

\begin{proof}
If $x\in V(P)\setminus \{u, v\}$, the claims are obvious. 
Let $x\in V(H)\setminus V(P)$. 
Then, by Proposition~\ref{prop:nonpositive}, 
for an arbitrarily chosen $z\in V(P)\cap V(H)$,  
there is an $M$-saturated or balanced path $R$ from $x$ to $z$ 
with $V(R)\subseteq V(H)$. 
Trace $R$ from $x$ and let $y$ be the first vertex we encounter that is in $V(P)$. 
Then, $xRy$ gives a desired path in \ref{item:int}, 
and 
$Q:= xRy + yPw$, where $w$ is either $u$ or $v$, 
gives one  in \ref{item:end}. 
Therefore, we are done. 
\qed
\end{proof}
%
%

%partially_def_mear-sequence
Let $G$ be a factorizable graph and $M$ be a perfect matching of $G$.
We call a sequence of factor-components
$S:= (H_0,\ldots, H_k)$, where $k\ge 0$ 
and $H_i\in\mathcal{G}(G)$ for each $i = 0, \ldots, k$,   an 
\textit{$M$-ear sequence,  from $H_0$ to $H_k$, }
if $k = 0$ or otherwise 
\renewcommand{\labelenumi}{\theenumi}
\renewcommand{\labelenumi}{{\rm \theenumi}}
\renewcommand{\theenumi}{(\roman{enumi})}
\begin{enumerate}
\item  
for any $i, j\in\{0,\ldots, k\}$, $i\neq j$ yields $H_i\neq H_j$,  and 
\item 
for each $i = 1,\ldots, k$
there is an $M$-ear $P_i$ relative to $H_{i-1}$ and through $H_i$. 
\end{enumerate}
We call $k$ the {\em length} of $S$. 
If $k\ge 1$, we call the sequence of $M$-ears 
$P:= (P_1, \ldots , P_k)$ {\em associated with} $S$. 
If $k = 0$, 
an empty sequence, $P:= ()$, 
is defined to be the $M$-ears associated with $S$, for convenience.  
 
For $S$ and $P$,  
we define the {\em sequence union} of $S$ and $P$ as 
$\su{S}{P} := \bigcup_{i=1}^{k} V(H_i) \cup \bigcup_{i=1}^{k} V(P_i) \setminus V(H_0)$, 
if $k\ge 1$.  
If $k = 0$, $\su{S}{P} := \emptyset$. 

Given $S$ and $P$, 
for any $i, j$ with $0 \le i \le j \le k$, 
the subsequence 
$(H_i, \ldots,  H_j)$ is an $M$-ear sequence, from $H_i$ to $H_j$, 
and we denote it as $S[i,j]$.  
Additionally, if $i < j$, 
$(P_i,\ldots, P_j)$ is a sequence of $M$-ears associated with $S[i,j]$, 
and we denote it $P[i,j]$. 
If $i = j$, $P = ()$ is  associated with $S[i,j]$, 
and it is also denoted as $P[i, j]$.
We denote $S[0, j] =: S^j$, and $P[0, j] =: P^j$. 
%
%

%partially_def_conds-for-su

Let $G$ be a factorizable graph, 
and $M$ be a perfect matching of $G$. 
Let $G_1, G_2\in\mathcal{G}(G)$, 
and let 
$S:= (G_1 = H_0, \ldots, H_k = G_2)$, where $k\ge 0$, 
 be an $M$-ear sequence from $G_1$ to $G_2$,  
associated with $M$-ears $P$. 
Let us define in the following three properties 
for $S$ and $P$: 
\begin{description}
\item{\bf{D1}($S,P$):}
If $k\ge 2$, then by letting $P = (P_1, \ldots, P_k)$, for each $i = 2,\ldots, k$, 
$V(P_i)$ is disjoint from $V(H_0)$. 
\item{\bf{D2}($S, P$):}
If $k\ge 1$,  
by letting $P = (P_1, \ldots, P_k)$, 
for each $i = 1,\ldots, k$, 
for any $x\in \su{S^i}{P^i}$ 
there exists an internal vertex $y$ of $P_1$ such that 
there is an $M$-balanced path $Q$ from $x$ to $y$ 
with $V(Q)\subseteq \su{S^i}{P^i}$ and $V(Q)\cap V(P_1) = \{y\}$. 
\item{\bf{D3}($S, P$):}
If $k\ge 1$, 
by letting $P = (P_1, \ldots, P_k)$,  
for each $i = 1,\ldots, k$, 
for any $x\in \su{S^i}{P^i}$,   
for $w$ which equals either of the end vertices of $P_1$, 
there is an $M$-balanced path $R$ from $x$ to $w$ 
such that $V(P_1)\setminus \{w\} \subseteq \su{S^i}{P^i}$. 
\end{description}
\begin{remark}
By their definitions, 
if $k = 0$, then $S$ and $P$ trivially satisfy 
D1, D2 and D3. 
\end{remark}
\begin{remark}
D1, D2 and D3 are closed with respect to the substructures; 
if  $S$ and $P$ satisfies D1, D2 and D3, 
then for any $i = 0,\ldots, k$, 
 so does $S^i$ and $P^i$. 
\end{remark} 
%
%

%partially_prop_su-matching
\begin{proposition}\label{prop:su-matching}
Let $G$ be a factorizable graph and $M$ be a perfect matching of $G$. 
Let $S$ be an $M$-ear sequence, and $P$ be a sequence of $M$-ears associated with $S$. 
Then, 
$M_{\su{S}{P}}$ is a perfect matching of $G[\su{S}{P}]$. 
\end{proposition}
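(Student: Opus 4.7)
The plan is to show directly that every $v \in \su{S}{P}$ is matched by $M$ to another vertex of $\su{S}{P}$; since $M$ is a perfect matching of $G$, this is equivalent to $M_{\su{S}{P}}$ being a perfect matching of $G[\su{S}{P}]$. The case $k = 0$ gives $\su{S}{P} = \emptyset$ and is vacuous, so I would assume $k \geq 1$ and let $v'$ denote the $M$-partner of any vertex $v$.

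First I would handle the case $v \in V(H_j)$ for some $j \in \{1,\ldots,k\}$. Since each $V(H_j)$ is a single factor-component and hence separating, Proposition~\ref{prop:separating} yields $v' \in V(H_j)$, which is contained in $\su{S}{P}$ because distinct factor-components are vertex-disjoint, so $V(H_j) \cap V(H_0) = \emptyset$.

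The remaining case is $v \in \bigcup_{i=1}^{k} V(P_i)$ with $v \notin V(H_j)$ for every $j \geq 1$. Here I would first rule out $v$ being an endpoint of any $P_i$: every endpoint lies in $V(H_{i-1})$, which is either $V(H_0)$ (excluded from $\su{S}{P}$) or some $V(H_j)$ with $j \geq 1$ (excluded by the case assumption). Hence $v$ is an internal vertex of some $P_i$, and since $P_i$ is an $M$-ear relative to $H_{i-1}$ its subpath $P_i - V(H_{i-1})$ is $M$-saturated, so $v'$ is another internal vertex of $P_i$ and in particular lies in $\bigcup_{i=1}^{k} V(P_i)$. To finish I would invoke once more that $V(H_0)$ is separating: if $v' \in V(H_0)$ then Proposition~\ref{prop:separating} forces $v \in V(H_0)$, contradicting $v \in \su{S}{P}$; thus $v' \in \su{S}{P}$. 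I do not anticipate a genuine obstacle here; the argument is essentially a case analysis that leverages only the separating property of each $V(H_j)$ and the definition of an $M$-ear.
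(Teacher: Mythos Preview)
Your proof is correct and follows essentially the same approach as the paper: both arguments rely on the separating property of each $V(H_j)$ (so $M$-edges do not cross into or out of $V(H_j)$) together with the fact that the internal path $P_i - V(H_{i-1})$ of each $M$-ear is $M$-saturated. The paper packages this as showing that $M$ restricts to a perfect matching on $Y = V(H_0)\cup\cdots\cup V(H_k)\cup V(P_1)\cup\cdots\cup V(P_k)$ and then removes $V(H_0)$, whereas you do a direct vertex-by-vertex case analysis; the content is the same.
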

\begin{proof}
If the length $k$ of $S$ equals zero, 
the claim is trivially true. 
Let $k\ge 1$, and let 
$S =: (H_0, \ldots, H_k)$ 
and $P =: (P_1, \ldots, P_k)$. 
Of course, 
$X := V(H_0)\dot{\cup}\cdots \dot{\cup} V(H_k)$ 
has a perfect matching $M_X$.  
For each $P_i$, 
the end vertices of $P_i$ are in $X$ 
and any other vertex is covered by $M_{P_i}$. 
Therefore, 
$M$ contains a perfect matching of 
$Y:= X\cup V(P_1)\cup\cdots \cup V(P_k)$. 
Accordingly, $\su{S}{P} = Y\setminus V(H_0)$ is covered by $M_{\su{S}{P}}$. 
\qed
\end{proof}

%partially_lem_proof_fc2bigger
\begin{lemma}\label{lem:fc2bigger}
Let $G$ be a factorizable graph and $M$ be a perfect matching of $G$. 
Let $G_1\in\mathcal{G}(G)$ and $X\subseteq V(G)$ be a critical-inducing set for $G_1$. 
Suppose there exists an $M$-ear $P$ relative to $X$, 
whose end vertices are $u, v\in V(G)$, 
and let $I_1,\ldots, I_s\in \mathcal{G}(G)$, where $s\ge 1$,  be 
the factor-components that have common vertices with the internal vertices of $P$. 
Then, $X\cup \bigcup_{i=1}^s V(I_i)$ is also a critical-inducing set for $G_1$. 
\end{lemma}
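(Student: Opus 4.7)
The plan is to verify the two requirements for $X' := X \cup \bigcup_{i=1}^{s} V(I_i)$ to be a critical-inducing set for $G_1$: that $X'$ is separating, and that $G[X']/G_1$ is factor-critical. The first is easy: since $P$ is an $M$-ear relative to $X$, every internal vertex of $P$ lies in $V(G)\setminus X$. By Proposition~\ref{prop:separating} applied to $X$, each $I_i$ is therefore fully disjoint from $X$, so $X' = X \,\dot\cup\, \bigcup_i V(I_i)$ is a union of factor-components and is separating. A useful byproduct is that $V(P)\setminus\{u,v\} \subseteq \bigcup_i V(I_i) \subseteq X'\setminus X$, since every internal vertex of $P$ sits in one of the $I_i$.

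For factor-criticality of $G[X']/G_1$, I will use Lemma~\ref{lem:path2base}: it suffices to exhibit, for each $x \in X'\setminus V(G_1)$, an $M$-balanced path from $x$ to some $y \in V(G_1)$ whose vertices other than $y$ lie in $X'\setminus V(G_1)$. Split into two cases. If $x \in X\setminus V(G_1)$, the hypothesis that $X$ is critical-inducing for $G_1$ and Lemma~\ref{lem:path2base} yield such a path already inside $X\setminus V(G_1)\subseteq X'\setminus V(G_1)$. If instead $x \in X'\setminus X$, then $x$ lies in some $I_j$, and since $P$ is an $M$-ear relative to $X$ passing through $I_j$, Proposition~\ref{prop:adjoin-ear} applies: it produces an $M$-balanced path $R$ from $x$ to some $w\in\{u,v\}$ whose other vertices lie in $V(I_j)\cup V(P)\setminus\{u,v\}$. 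Because $V(I_j)\cap X=\emptyset$ and internal vertices of $P$ avoid $X\supseteq V(G_1)$, this interior is contained in $X'\setminus V(G_1)$.

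If $w\in V(G_1)$, then $R$ itself is the desired path with $y:=w$. Otherwise $w\in X\setminus V(G_1)$, and by the hypothesis on $X$ (via Lemma~\ref{lem:path2base} again) there is an $M$-balanced path $R'$ from $w$ to some $y\in V(G_1)$ with $V(R')\setminus\{y\}\subseteq X\setminus V(G_1)$. The concatenation $R+R'$ is then the desired path. I expect this junction argument to be the only subtle point: one must check that $R+R'$ is a genuine $M$-balanced path. Vertex-disjointness away from $w$ is automatic, because $V(R)\setminus\{w\}\subseteq X'\setminus X$ while $V(R')\setminus\{y\}\subseteq X$, and $y\in V(G_1)\subseteq X$ cannot lie on $R$; in particular $V(R)\cap V(R') = \{w\}$. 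The alternation is preserved at $w$ because $R$, being $M$-balanced of even length starting with an $M$-edge at $x$, ends at $w$ with a non-$M$ edge, while $R'$ starts at $w$ with an $M$-edge. The total length is even and the first edge is in $M$, so $R+R'$ is $M$-balanced from $x$ to $y$, with interior in $X'\setminus V(G_1)$, completing the verification.
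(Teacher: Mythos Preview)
Your proof is correct and follows essentially the same approach as the paper: both use Lemma~\ref{lem:path2base} to reduce to finding $M$-balanced paths into $V(G_1)$, invoke Proposition~\ref{prop:adjoin-ear} to reach an end vertex $w\in\{u,v\}$ of $P$ from any new vertex, and then concatenate with the path $Q_w$ supplied by the hypothesis on $X$. You are in fact more careful than the paper in verifying that $X'$ is separating and that the concatenation $R+R'$ is a genuine $M$-balanced path (vertex-disjointness and correct alternation at $w$), points the paper leaves implicit.
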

\begin{proof}
We prove the claim by Lemma~\ref{lem:path2base}; 
let $Y:= \bigcup_{i=1}^s V(I_i)$.  
By Lemma~\ref{lem:path2base}, 
\begin{cclaim}\label{claim:inx} 
for any $x\in X$ there exists $z\in V(G_1)$ such that 
there is an $M$-balanced path $Q_x$ from $x$ to $z$ with 
$V(Q_x)\subseteq X$ and $V(Q_x)\cap V(G_1) = \{z\}$. 
\end{cclaim}
\begin{cclaim}\label{claim:iny}
For any $y\in Y$ there exists $z\in V(G_1)$ such that there exists an 
$M$-balanced path $Q_y$ from $y$ to $x$ 
with $V(Q_y)\subseteq X$ and $V(Q_y)\cap V(G_1) = \{y\}$.  
\end{cclaim}
\begin{proof} 
Let $i\in\{1,\ldots, s\}$ be such that $y\in V(I_i)$. 
By applying Proposition~\ref{prop:adjoin-ear} to 
$X$, $I_i$ and $P$, 
for $w$ which equals either $u$ or $v$, 
there is an $M$-balanced path $R$ from $y$ to $w$ 
such that $V(R)\setminus \{w\} \subseteq Y$. 
Therefore, 
$P + Q_w$ gives a desired path. 
\qed
\end{proof}
Apparently by the definition 
$X\cup Y$ is a separating set, 
therefore with Claims~\ref{claim:inx} and \ref{claim:iny}
we can conclude that 
$X\cup Y$ is a critical-inducing set for $G_1$, 
by Lemma~\ref{lem:path2base}. 
\qed
\end{proof}

\begin{theorem}\label{thm:equivalentdefinition}
Let $G$ be a factorizable graph, 
$M$ be a perfect matching of $G$, and 
 $G_1, G_2\in\mathcal{G}(G)$.
Then, $G_1\yield G_2$ if and only if 
there exists 
an $M$-ear sequence from $G_1$ to $G_2$.
\end{theorem}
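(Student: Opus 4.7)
The plan is to handle each direction by induction, leveraging Lemma~\ref{lem:fc2bigger} and the alternating ear-decomposition machinery.

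For the $(\Leftarrow)$ direction, I would induct on the length $k$ of the given $M$-ear sequence $S=(G_1=H_0,\ldots,H_k=G_2)$ with ears $(P_1,\ldots,P_k)$. The base case $k=0$ is immediate with $X:=V(G_1)$. For the inductive step, apply the hypothesis to $S^{k-1}$ to obtain a critical-inducing set $X'$ for $G_1$ with $V(H_{k-1})\subseteq X'$. Since the endpoints of $P_k$ lie in $V(H_{k-1})\subseteq X'$ and $P_k$ is $M$-exposed, Proposition~\ref{prop:ear2relative}(iii) expresses $P_k-E(G[X'])$ as a disjoint union of $M$-ears relative to $X'$, and feeding each successively into Lemma~\ref{lem:fc2bigger} enlarges $X'$ into a critical-inducing set for $G_1$ containing every factor-component met by $P_k$, in particular $V(G_2)=V(H_k)$.

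For the $(\Rightarrow)$ direction, given a critical-inducing set $X$ for $G_1$ to $G_2$, let $\mathcal{R}\subseteq\mathcal{G}(G)$ denote the set of factor-components reachable from $G_1$ by some $M$-ear sequence within $X$. I would show $G_2\in\mathcal{R}$ by contradiction. Let $Y:=\bigcup_{H\in\mathcal{R}}V(H)$; by the $(\Leftarrow)$ direction applied to each reaching ear sequence, $Y$ is critical-inducing for $G_1$. Suppose $G_2\notin\mathcal{R}$, so $Y\subsetneq X$. Because $X\setminus Y$ is separating, $G[X\setminus Y]$ is factorizable by Proposition~\ref{prop:separating}, which makes $G[Y]/G_1$ a nice factor-critical subgraph of $G[X]/G_1$; Theorem~\ref{thm:fc_nice} then yields that $G[X]/Y$ is factor-critical.

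Next, apply Proposition~\ref{prop:fc_alt} to get an $M_{X\setminus V(Y)}$-alternating ear-decomposition of $G[X]/Y$ starting at the contracted vertex $y$; by Proposition~\ref{prop:fc_choice} its first ear is a nice $M$-circuit through $y$ and can be chosen to contain any prescribed edge incident to $y$. Lifting this ear to $G$ produces an $M$-ear $P$ with endpoints in $V(Y)$, interior in $X\setminus V(Y)$, passing through some factor-component $H^*\notin\mathcal{R}$. The main obstacle is that $P$'s two endpoints may lie in distinct factor-components $H_a,H_b\in\mathcal{R}$, so $P$ is not immediately an $M$-ear relative to a single factor-component in $\mathcal{R}$. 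To surmount this, I would use Proposition~\ref{prop:fc_choice} to prescribe the edge of the first ear so that its $V(Y)$-endpoint lies in a chosen $V(H_a)$, and then invoke Proposition~\ref{prop:ear2relative} with the role of $X$ played by $V(H_a)$ to extract a sub-ear of $P$ that is an $M$-ear relative to $H_a$ and still reaches $H^*$. Appending this sub-ear to an existing $M$-ear sequence from $G_1$ to $H_a$ produces one reaching $H^*$, contradicting $H^*\notin\mathcal{R}$ and completing the proof.
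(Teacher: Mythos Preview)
Your $(\Leftarrow)$ direction is essentially the paper's argument: induction on $k$, using Proposition~\ref{prop:ear2relative} to cut $P_k$ into $M$-ears relative to the current critical-inducing set and then Lemma~\ref{lem:fc2bigger} to absorb the new factor-components. That part is fine.

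Your $(\Rightarrow)$ direction has the right overall shape (a maximal reachable set $Y\subseteq X$, then an $M$-ear $P$ relative to $Y$ obtained from the factor-criticality of $G[X]/Y$), but the final step contains a genuine gap. After lifting the first circuit of $G[X]/Y$ to an $M$-ear $P$ relative to $Y$, the two endpoints $u\in V(H_a)$ and $v\in V(H_b)$ may lie in \emph{different} factor-components of $\mathcal{R}$, and your proposed remedy does not work: Proposition~\ref{prop:fc_choice} lets you prescribe only \emph{one} edge at the contracted vertex, so only \emph{one} endpoint of $P$ is forced into $V(H_a)$. With $X=V(H_a)$, the path $P$ then meets $V(H_a)$ in the single vertex $u$ and satisfies none of the hypotheses (i)--(iv) of Proposition~\ref{prop:ear2relative}; indeed $P-E(G[V(H_a)])=P$, which is not an $M$-ear relative to $H_a$ since its other endpoint lies in $V(H_b)$. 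So no sub-ear relative to $H_a$ through $H^*$ is produced, and you cannot extend an ear sequence to $H^*$.

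The missing idea, which the paper supplies, is to route \emph{both} endpoints of $P$ back into a single factor-component already in $\mathcal{R}$. Since $Y$ is critical-inducing for $G_1$, Lemma~\ref{lem:path2base} gives $M$-balanced paths $Q_u,Q_v$ from $u,v$ to $V(G_1)$ whose vertices (except the final one) lie in $Y\setminus V(G_1)$. Tracing $Q_u$ from $u$ until it first enters a factor-component $I_0$ that also meets $Q_v$ (such $I_0$ exists because both paths end in $G_1$), and tracing $Q_v$ similarly, one obtains $M$-balanced prefixes $uQ_ur_u$ and $vQ_vr_v$ with $r_u,r_v\in V(I_0)$; concatenating $r_uQ_uu+P+vQ_vr_v$ yields an $M$-ear relative to the single factor-component $I_0\in\mathcal{R}$, through every factor-component on the interior of $P$. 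Appending this to an ear sequence from $G_1$ to $I_0$ gives the desired contradiction. Your framework is compatible with this fix, but the step as written needs to be replaced by it.

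A minor secondary point: the assertion ``by the $(\Leftarrow)$ direction, $Y$ is critical-inducing for $G_1$'' deserves one more line. The $(\Leftarrow)$ direction gives, for each $H\in\mathcal{R}$, some critical-inducing set $X_H$, and you need $X_H\subseteq Y$ (which holds since every factor-component absorbed by Lemma~\ref{lem:fc2bigger} along the way is itself in $\mathcal{R}$) together with Lemma~\ref{lem:path2base} to conclude that the union $Y=\bigcup_H X_H$ is again critical-inducing.
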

\begin{proof}%[Theorem~\ref{thm:equivalentdefinition}]
We first prove the sufficiency. 
Let $G_1\yield G_2$ and $X\subseteq V(G)$ be a critical-inducing set for $G_1$ to $G_2$.
Let us define the following three properties for $Y\subseteq X$: 
\begin{description}
\item[C1($Y$):] \label{one} $Y$ is a critical-inducing set for $G_1$, and 
\item[C2($Y$):] \label{four} for each $H\in\mathcal{G}(G)$ with $V(H)\subseteq Y$, there is an $M$-ear sequence 
from $G_1$ to $H$. 
\end{description}
Let $X'$ be a maximal subset of $X$ satisfying C1 and C2. 
Note that $X'\neq \emptyset$ because $V(G_1)$ satisfies %$\textbf{C1}$ and $\textbf{C2}$. 
C1 and C2.
We are going to prove the sufficiency by showing 
that $X' = X$. 
Suppose it fails, that is, $X'\subsetneq X$. 
Then, 
\begin{cclaim}
there is an $M$-ear $P$ relative to $X'$ such that $V(P)\subseteq X$. 
\end{cclaim}
\begin{proof}
$G[X]/G_1$ is factor-critical and $G[X']/G_1$ is a nice factor-critical subgraph of $G[X]/G_1$
by Proposition~\ref{prop:separating}. 
Therefore, $G[X]/X'$ is factor-critical by Theorem~\ref{thm:fc_nice} 
and $M_{X\setminus X'}$ forms a near-perfect matching of $G[X]/X'$ 
exposing only the contracted vertex $x'$ corresponding to $X'$. 
By Proposition~\ref{prop:fc_choice}, 
in $G[X]/X'$ there is an $M$-ear $P$ relative to $x'$, 
and  in $G$  it corresponds to an $M$-ear relative to $X'$ with $V(P)\subseteq X$. 
Thus, the claim follows. 
\qed
\end{proof} 
Let $u, v\in X'$ be the end vertices of $P$. 
Let  $I_1,\ldots, I_s \in \mathcal{G}(G)$ be the factor-components 
 that have common vertices with internal vertices of $P$. 
We are going to prove that  
$X'' := X'\cup \bigcup_{i=1}^{s} V(I_i)$ 
satisfies C1 and C2. 
%
\begin{comment}
$G[X]/G_1$ is factor-critical and $G[X']/G_1$ is a nice factor-critical subgraph of $G[X]/G_1$
by Proposition~\ref{prop:separating}. 
Therefore, $G[X]/X'$ is factor-critical by Theorem~\ref{thm:fc_nice} 
and $M_{X\setminus X'}$ forms a near-perfect matching of $G[X]/X'$ 
exposing only the contracted vertex $x'$ corresponding to $X'$. 
%
By Proposition~\ref{prop:fc_choice}, 
in $G[X]/X'$ there is an $M$-ear $P$ relative to $x'$, 
which in $G$ corresponds to an $M$-ear relative to $X'$, 
with end vertices $u, v\in X'$. 
 Let  $I_1,\ldots, I_s \in \mathcal{G}(G)$ be the factor-components 
 that have common vertices with internal vertices of $P$. 
We are going to prove that  
$X'' := X'\cup \bigcup_{i=1}^{s} V(I_i)$ 
satisfies C1 and C2. 
\end{comment}
%
%
%
\begin{cclaim}\label{claim:c2}
$X''$ satisfies C2. 
\end{cclaim}
\begin{proof}
By Lemma~\ref{lem:path2base}, 
there exists an $M$-balanced path $Q_u$ (resp. $Q_v$) 
from $u$ (resp. $v$) to a vertex of $V(G_1)$, 
which is contained in $X$ and whose vertices except the end vertex in $V(G_1)$ 
are disjoint from $V(G_1)$. 
Trace $Q_u$ from $u$ and let $r_u$ be the first vertex we encounter 
that is contained in a factor-component $I_0$ which 
has common vertices also with $Q_v$; 
such $I_0$ surely exists since both $Q_u$ and $Q_v$ have some vertices in $G_1$. 
Trace $Q_v$ from $v$ and let $r_v$ be the first vertex we encounter 
that is in $V(I_0)$. 
For each $w\in \{u, v\}$,  
$wQ_wr_w$ 
is an $M$-balanced path from $w$ to $r_w$ 
such that 
$V(wQ_wr_w)\subseteq X'$ and $V(wQ_wr_w)\cap V(I_0)=\{r_w\}$,  
and it holds that 
$V(uQ_ur_u)\cap V(vQ_vr_v) \setminus \{r_u, r_v\} = \emptyset$. 
Therefore, 
$uQ_ur_u + P + vQ_vr_v$ is an $M$-ear relative to $I_0$ 
and through every $I_1,\ldots, I_s$. 
By the definition of $X'$, 
there is an $M$-ear sequence from $G_1$ to $I_0$. 
Therefore, by adding subsequence $(I_0, I_i)$ to it, 
we obtain an $M$-ear sequence from $G_1$ to $I_i$, 
for each $i = 1,\ldots, s$. 
Thus, we  obtain the claim. 
\qed
\end{proof}

\begin{comment}
By Proposition~\ref{prop:path2root},
there is an $M$-balanced path $Q_1$ (resp. $Q_2$)
from $u$ to $g_1$ (resp. $v$ to $g_1$) in $G[X']/G_1$.
Trace $Q_1$ from $u$ and let $w$ be the first vertex we encounter 
that is on $Q_2$.
Let $H\in \mathcal{G}(G)$ be  such that  $w\in V(H)$.
Then  $wQ_1u + P + vQ_2w - E(H)$
is a union of disjoint  $M$-ears relative to $H$. 
Since there is an $M$-ear sequence from $G_1$ to $H$ by the hypothesis on $X'$,
we can obtain  one from $G_1$ to $I_i$ for each $i = 1,\ldots, s$.
Thus $X''$ satisfies C2. 
\end{comment}
%
%
\begin{cclaim}\label{claim:c1}
$X''$ satisfies C1. 
\end{cclaim}
\begin{proof}
This is immediate by Lemma~\ref{lem:fc2bigger}. 
\qed
\end{proof}
\begin{comment}
\begin{proof}
Since $G[X']/G_1$ is factor-critical, 
for any $x\in X'$ 
there exists an $M$-balanced path $Q_x$ from $x$ to $g_1$ 
such that $V(Q)\subseteq X'$. 
Let $x\in X''\setminus X'$ 
and let $i\in \{1,\ldots, s\}$ be such that $x\in V(I_i)$. 
By applying Proposition~\ref{prop:adjoin-ear} to 
$X$, $I_i$, and $P$, 
we obtain an $M$-balanced path $R$ from $x$ to $y\in X'$ 
such that $V(R)\setminus \{y\} \subseteq X''\setminus X'$. 
Therefore, 
$R + Q_y$ is an $M$-balanced path of $G[X'']/G_1$ from $x$ to $g_1$. 
Consequently, 
we can say that $G[X'']/G_1$ is a factor-critical graph. 
Since apparently $X''$ is a separating set by the definition, 
$X''$ is a critical-inducing set for $G_1$, 
and C1($X''$) holds. 
\qed
\end{proof}
\end{comment}
%
With Claims~\ref{claim:c2} and \ref{claim:c1},  
$X''$ contradicts the maximality of $X'$. 
Therefore, we obtain $X' = X$, 
accordingly the sufficiency part of the claim follows. 

From now on we prove the necessity.
Let $(G_1 = H_0,\ldots, H_k = G_2)$, where $k\ge 0$, 
 be the $M$-ear sequence from $G_1$ to $G_2$
We are going to prove that
there is a critical-inducing set for $G_1$  to $G_2$. 
%there is $X\subseteq V(G)$ corresponding to the definition of $G_1\yield G_2$.
We proceed by induction on $k$.
For the case $k=0$, that is, $G_1 = G_2$, 
the claim apparently holds by taking $V(G_1)$.

Let $k>0$ and suppose the claim holds for $k-1$.  
By the induction hypothesis, for the $M$-ear subsequence $(H_0, \ldots, H_{k-1})$, 
there is a critical-inducing set $X'$ for $H_0$ to $H_{k-1}$. 
\begin{cclaim}
There is an $M$-ear $P$ relative to $X'$ and through $H_k$. 
\end{cclaim}
\begin{proof}
Let $P_k$ the associated $M$-ear relative to $H_{k-1}$ and through $H_k$.  
By Proposition~\ref{prop:ear2relative} 
each connected component $P - E(G[X'])$ is an $M$-ear relative to $X'$, 
and one of them, which we call $P$,  is through $H_k$. 
Therefore, the claim follows. 
\qed
\end{proof}
Let $I_1,\ldots, I_s \in \mathcal{G}(G)$, where $s\ge 1$,  be the factor-components
that have common vertices with the  internal vertices of $P$, 
and let $Y:= \bigcup_{i=1}^{s} V(I_i)$. 
%
\begin{comment}
By the induction hypothesis, for the $M$-ear subsequence $(H_0, \ldots, H_{k-1})$, 
there is a critical-inducing set $X'$ for $H_0$ to $H_{k-1}$. 
Let $P$ the $M$-ear relative to $H_{k-1}$ and through $H_k$,
whose end vertices are $u_1$ and $u_2$.
Let $I_1,\ldots, I_s \in \mathcal{G}(G)$, where $s\ge 1$,  be the factor-components
that have common vertices with the  internal vertices of $P$. 
Let $Y:= \bigcup_{i=1}^{s} V(I_i)$. 
\end{comment}
%
Then, by applying Lemma~\ref{lem:fc2bigger} 
to the critical-inducing set $X'$ for $G_1$ and the $M$-ear $P$, 
we obtain that $X'\cup Y$ is a critical-inducing set for $G_1$ to $H_k$. 
This completes the proof. 
\begin{comment}
\begin{cclaim}
$G[X'\cup Y]/G_1$ is factor-critical. 
\end{cclaim}
\begin{proof}
Since $G[X']/G_1$ is factor-critical by the definition, 
for any $x\in X'$ there is an $M$-balanced path $Q_x$  from $x$ to 
the contracted vertex $g_1$ corresponding to $G_1$. 
\qed
\end{proof}
%
Since apparently $X'\cup Y$ is a separating set, 
it is a critical-inducing set for $G_1$ to $G_2$. 
This completes the proof.  
\end{comment}
\qed
\end{proof}

%\input{isaac2012_proof_of_equivalentdefinition}
%partially_lem_proof_inductive
\begin{lemma}\label{lem:inductive}
Let $G$ be a factorizable graph, 
and $M$ be a perfect matching.  
Let $S:= (H_0,\ldots, H_k)$, where $k\ge 1$, 
 be an $M$-ear sequence, 
 associated with $M$-ears $P:= (P_1,\ldots, P_k)$. 
Suppose  $S^i$ and $P^i$  satisfy D1, D2, and D3 for each $i = 0,\ldots, k-1$, 
and  $S$ and $P$ satisfy D1. 
Then, $S$ and $P$  also satisfy D2 and D3. 
\end{lemma}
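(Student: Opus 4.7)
Since the subsequences $S^i$ and $P^i$ for $i\leq k-1$ are the same prefixes of both $S^{k-1},P^{k-1}$ and $S,P$, the hypotheses immediately give D2 and D3 for $S,P$ at those indices. It therefore suffices to verify D2 and D3 at $i=k$. Fix $x\in \su{S^k}{P^k}$: if $x\in \su{S^{k-1}}{P^{k-1}}$ the hypothesis supplies the required path (already contained in $\su{S^{k-1}}{P^{k-1}}\subseteq \su{S^k}{P^k}$), so I focus on $x\in (V(H_k)\cup V(P_k))\setminus \su{S^{k-1}}{P^{k-1}}$. Note $x\notin V(H_0)$ from the definition of $\oplus$, and since $V(H_{k-1})\subseteq \su{S^{k-1}}{P^{k-1}}$, $x$ is not an endpoint of $P_k$. (The case $k=1$, in which $\su{S^{k-1}}{P^{k-1}}=\emptyset$, reduces immediately to Proposition~\ref{prop:adjoin-ear} with $X=V(H_0)$, $H=H_1$, $P=P_1$, items \ref{item:int} and \ref{item:end} giving D2 and D3 respectively.)

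For $k\geq 2$ I build the required path in two stages. First, apply Proposition~\ref{prop:adjoin-ear} with $X=V(H_{k-1})$, $H=H_k$, $P=P_k$ to obtain an $M$-balanced path $R_0$ from $x$ to an endpoint $w\in V(H_{k-1})$ of $P_k$, with $V(R_0)\setminus\{w\}\subseteq (V(H_k)\cup V(P_k))\setminus\{u_k,v_k\}$. Second, since $w\in V(H_{k-1})\subseteq \su{S^{k-1}}{P^{k-1}}$, the hypothesis D2 (resp.\ D3) applied at $w$ for $S^{k-1},P^{k-1}$ yields an $M$-balanced path $R_1$ from $w$ to an internal vertex $y^*$ (resp.\ a prescribed endpoint $w'$) of $P_1$, contained in $\su{S^{k-1}}{P^{k-1}}$ up to its final vertex and meeting $V(P_1)$ only at the target in the D2 case. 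Because $R_0$ arrives at $w$ via a non-$M$ edge while $R_1$ leaves $w$ via an $M$-edge, the concatenation $W:=R_0+R_1$ is an $M$-alternating walk of even length that starts at $x$ with an $M$-edge, with $V(W)\setminus\{w'\}\subseteq \su{S^k}{P^k}$.

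The technical core, and the main obstacle, is extracting from $W$ a simple $M$-balanced path with the correct incidence with $V(P_1)$. For D2, let $t$ be the first vertex of $V(P_1)$ encountered when tracing $W$ from $x$, and consider the prefix $xWt$. Since $P_1$ is an $M$-ear from $V(H_0)$, the $M$-mates of internal vertices of $V(P_1)$ lie inside $V(P_1)$, while $V(H_0)\cap V(W)=\emptyset$ by D1 and the construction of $R_0,R_1$; a short parity argument using these facts rules out $xWt$ ending at $t$ with an $M$-edge (otherwise the $M$-mate of $t$ would be an earlier vertex of $V(P_1)\cap V(W)$, contradicting the choice of $t$), so $xWt$ is $M$-balanced. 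A standard shortcut extraction from the $M$-alternating walk $xWt$ --- iteratively collapsing cycles arising at repeatedly visited vertices and using the fact that $M$ is a matching so the $M$-edge at any such vertex is unambiguous and alternation is preserved under the cut --- yields a simple $M$-balanced path from $x$ to $t$, contained in $\su{S^k}{P^k}$ and meeting $V(P_1)$ only at $t$. For D3 the same extraction is applied to the full walk $W$, whose only vertex in $V(H_0)$ is $w'$ (visited just once, at the very end), producing the required simple $M$-balanced path from $x$ to $w'$ with $V\setminus\{w'\}\subseteq \su{S^k}{P^k}$.
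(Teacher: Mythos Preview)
Your overall strategy --- first reach $\su{S^{k-1}}{P^{k-1}}$ from $x$, then invoke the inductive D2/D3 --- matches the paper's, but the execution has a genuine gap at the ``standard shortcut extraction'' step.

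You apply Proposition~\ref{prop:adjoin-ear} with $X=V(H_{k-1})$, so the first path $R_0$ lives in $V(H_k)\cup V(P_k)\setminus\{u_k,v_k\}$. Since $P_k$ is only an ear relative to $H_{k-1}$, not relative to $\su{S^{k-1}}{P^{k-1}}$, the internal vertices of $P_k$ (and hence of $R_0$) may lie in $\su{S^{k-1}}{P^{k-1}}$ and therefore in $V(R_1)$. Your concatenation $W=R_0+R_1$ is then genuinely a walk, and your claim that ``alternation is preserved under the cut'' because the $M$-edge at a repeated vertex is unambiguous is not correct. Concretely, if $v$ is visited first in $R_0$ and later in $R_1$, and the unique $M$-edge $vv'$ happens to be the edge \emph{into} $v$ along $R_0$ and also the edge \emph{out of} $v$ along $R_1$, then shortcutting at $v$ produces two consecutive uses of $vv'$; shortcutting instead at $v'$ leaves two consecutive non-$M$ edges. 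Extracting a simple $M$-alternating path from an $M$-alternating walk is exactly the difficulty that forces blossom-shrinking in non-bipartite matching theory, so it cannot be dismissed as routine.

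The paper sidesteps this entirely by choosing a different $X$. It first invokes Proposition~\ref{prop:ear2relative} (using Proposition~\ref{prop:su-matching}) to see that each connected component $P_k^x$ of $P_k-E(G[\su{S^{k-1}}{P^{k-1}}])$ is an $M$-ear relative to $\su{S^{k-1}}{P^{k-1}}$, and then applies Proposition~\ref{prop:adjoin-ear} with $X=\su{S^{k-1}}{P^{k-1}}$ and this $P_k^x$. The resulting first-stage path $Q$ has \emph{all} vertices except its terminal vertex $y$ outside $\su{S^{k-1}}{P^{k-1}}$, while the second-stage path $R$ from the inductive D2/D3 lies entirely inside $\su{S^{k-1}}{P^{k-1}}$. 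Hence $Q+R$ is automatically a simple $M$-balanced path, and no shortcutting is needed. Replacing your $X=V(H_{k-1})$ by $X=\su{S^{k-1}}{P^{k-1}}$ in the first stage would repair your argument.
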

\begin{proof}
If $k =  1$, then by applying Proposition~\ref{prop:adjoin-ear} to 
$V(H_0)$, $P_1$, and $H_1$, it holds that $S$ and $P$ satisfy 
D1, D2 and D3. 

Hence hereafter let $k\ge 2$. 
First note that each connected component of $P_k- E(G[\su{S^{k-1}}{P^{k-1}}])$ 
is an $M$-ear relative to $\su{S^{k-1}}{P^{k-1}}$ by Proposition~\ref{prop:ear2relative}, 
and is disjoint from $V(H_0)$ since $P_k$ is.  

Take $x\in \su{S}{P}\setminus \su{S^{k-1}}{P^{k-1}}$ arbitrarily, 
 and let $P_k^x$ be a connected component of $P_k- E(G[\su{S^{k-1}}{P^{k-1}}])$  
 such that $x$ is an internal vertex of $P_k^x$ if $x\in V(P)$, 
 or one through $H_k$ if $x\in V(H_k)\setminus V(P)$.  
\begin{cclaim}\label{claim:x2y}
There exists $y\in \su{S^{k-1}}{P^{k-1}}$ such that 
there exists an $M$-balanced path $Q$ from $x$ to $y$ 
whose vertices except $y$ are contained in $\su{S}{P}\setminus \su{S^{k-1}}{P^{k-1}}$. 
\end{cclaim}
\begin{proof}
By applying Proposition~\ref{prop:adjoin-ear} 
to $\su{S^{k-1}}{P^{k-1}}$,  
$P_k^x$, and $H_k$ (if $x\in V(H_k)$),  
we obtain an internal vertex $y$ of $P_1$ 
and an $M$-balanced path $Q$ from $x$ to $y$ 
with $V(Q)\setminus \{y\}\subseteq V(H_k)\cup V(P_k^x)\setminus \su{S^{k-1}}{P^{k-1}}$. 
Since $P_k$ is disjoint from $V(H_0)$, 
we can see $V(H_k)\cup V(P_k^x) \subseteq \su{S}{P}$. 
Therefore, $V(Q)\setminus \{y\} \subseteq \su{S}{P}\setminus \su{S^{k-1}}{P^{k-1}}$, 
and the claim follows.  
%This is rather immediately obtained by applying Proposition~\ref{prop:adjoin-ear} 
%to $\su{S^{k-1}}{P^{k-1}}$, 
%each connected component  of $P_k - E(G[\su{S^{k-1}}{P^{k-1}}])$, 
%and $H_k$. 
\qed
\end{proof}

\begin{cclaim}
$S$ and $P$ satisfy D2. 
\end{cclaim}
\begin{proof}
By the hypothesis on $S^{k-1}$ and $P^{k-1}$ 
there exists an internal vertex $z$ of $P_1$ such that 
there is an $M$-balanced path $R$ from $y$ to $z$ 
with $V(R)\subseteq \su{S^{k-1}}{P^{k-1}}$ and $V(R)\cap V(P_1) = \{z\}$. 
Therefore, by Claim~\ref{claim:x2y}, $Q + R$ is an $M$-balanced path from $x$ to $z$, 
whose veritices are contained in $\su{S}{P}$ 
and disjoint from $P_1$ except $z$.  

Since $x$ is chosen arbitrarily from $\su{S}{P}\setminus \su{S^{k-1}}{P^{k-1}}$, 
we obtain that $S$ and $P$ satisfy D2. 
\qed
\end{proof}
By similar arguments, we can say that 
$S$ and $P$ satisfy D3 too, 
and the claim follows. 
%
%\begin{cclaim}
%There exists an internal vertex $z$ of $P_1$ such that 
%there is an $M$-balanced path $R$ from $x$ to $z$ 
%such that $V(R)\subseteq \su{S}{P}$ and $V(R)\cap V(P_1) = \{y\}$. 
%\end{cclaim}
%
%
%\qed
\end{proof}

%partially_prop_proof_ear
\begin{proposition}\label{prop:ear}
Let $G$ be a factorizable graph and $M$ be a perfect matching. 
Let $G_1, G_2\in\mathcal{G}(G)$ be such that $G_1\yield G_2$, 
and let $k\ge 0$ be the length of the shortest $M$-ear sequence from $G_1$ to $G_2$. 
Then, there exists an $M$-ear sequence 
$S$ of shortest length, 
and $M$-ears $P$ associated with $S$ 
such that D1($S$, $P$), D2($S$, $P$), and D3($S$,$P$) hold. 
\end{proposition}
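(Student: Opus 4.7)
I would prove Proposition~\ref{prop:ear} by induction on $k$. The base case $k = 0$ is immediate: take $S = (G_1)$ with empty $P$; the three conditions hold vacuously. The base case $k = 1$ is also easy: take any length-one $M$-ear sequence $(G_1, G_2)$ with any associated ear $P_1$; D1 is vacuous for $k < 2$, and Lemma~\ref{lem:inductive} applied to the trivially compliant empty prefix $(S^0, P^0)$ yields D2 and D3.

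For the inductive step $k \ge 2$, the overall plan is to reduce the entire statement to producing a shortest $(S, P)$ satisfying merely D1, then to recover D2 and D3 via Lemma~\ref{lem:inductive}. Indeed, once a shortest sequence $(S, P)$ satisfying D1 is in hand, a secondary induction on the prefix index $i$ (starting from $(S^0, P^0)$, which trivially satisfies all three conditions by the first remark, and at each step applying Lemma~\ref{lem:inductive} to $(S^i, P^i)$ using that D1 descends to prefixes) promotes D2 and D3 up to the full sequence.

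To produce such a shortest $(S, P)$ satisfying D1, fix any shortest length-$k$ $M$-ear sequence $(H_0, \ldots, H_k)$ with associated ears $P^*$. The shortest distance from $G_1$ to $H_{k-1}$ must be exactly $k - 1$: otherwise concatenating a strictly shorter route with $P^*_k$ would yield a sequence to $G_2$ of length below $k$, contradicting minimality. Invoke the induction hypothesis for $k - 1$ to obtain a length-$(k - 1)$ sequence $\tilde S$ to $H_{k-1}$ with ears $\tilde P$ satisfying D1, D2, D3. It remains to choose an $M$-ear $P_k$ relative to $H_{k-1}$ through $G_2$ with $V(P_k) \cap V(H_0) = \emptyset$; the concatenation $(\tilde S, G_2)$ with ears $(\tilde P, P_k)$ is then a shortest sequence satisfying D1.

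The hard part is producing such a $P_k$, and I would argue by contradiction. Suppose every $M$-ear relative to $H_{k-1}$ through $G_2$ meets $V(H_0)$, and fix any such ear $P$. Since $V(H_0)$ is separating and disjoint from $V(H_{k-1})$, every vertex of $V(P) \cap V(H_0)$ is internal to $P$ and is $M$-matched within $V(H_0)$ by its $P$-neighbor (by Proposition~\ref{prop:separating} and the $M$-saturation of $P - V(H_{k-1})$), so $V(P) \cap V(H_0)$ assembles into maximal runs of even length. The segments of $P$ strictly between consecutive runs are themselves $M$-ears relative to $V(H_0)$, either by a direct alternation check or by applying Proposition~\ref{prop:ear2relative} to the $M$-exposed path $P$ with $X = V(H_0) \cup V(H_{k-1})$. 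If the $V(G_2)$-visit of $P$ lies in such an inter-run segment, we obtain an $M$-ear relative to $V(H_0)$ through $G_2$, hence $H_0 \yield G_2$ at distance $1$, contradicting $k \ge 2$. The delicate case is when the $V(G_2)$-visit lies only in the initial or final segment bordering $V(H_{k-1})$; here I would combine the resulting cross-segment sub-ear (which is relative to $V(H_0) \cup V(H_{k-1})$) with the prefix ears $\tilde P$, invoking Lemma~\ref{lem:fc2bigger} to enlarge an appropriate critical-inducing set and Proposition~\ref{prop:ear2relative} once more to extract a bona fide $M$-ear from $V(H_0)$ through $G_2$, again contradicting $k \ge 2$. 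This last subcase is where most of the proof's technical care is required.
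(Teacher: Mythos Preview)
Your overall architecture---induction on $k$, reducing everything to D1 and then invoking Lemma~\ref{lem:inductive} prefix by prefix---matches the paper's. The base cases and the claim that the shortest distance to $H_{k-1}$ is exactly $k-1$ are fine. The divergence, and the gap, is in how you handle the last ear.

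You commit to keeping the penultimate factor-component $H_{k-1}$ fixed and try to prove by contradiction that some $M$-ear relative to $H_{k-1}$ through $G_2$ avoids $V(H_0)$. But your contradiction argument in the ``delicate case'' does not close. Concretely: take the cross-segment $R$ from $a\in V(H_{k-1})$ to the first $H_0$-vertex $h$, passing through $G_2$. The right set to decompose against is $X:=V(H_0)\cup(\su{\tilde S}{\tilde P})$; by Proposition~\ref{prop:ear2relative} the component $R'$ of $R-E(G[X])$ containing the $G_2$-visit is an $M$-ear relative to $X$. If $R'$ has an endpoint in $V(H_0)$ you do get an $M$-ear relative to $H_0$ through $G_2$ and your contradiction goes through. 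But $R'$ may have \emph{both} endpoints in $\su{\tilde S}{\tilde P}$ (the internal vertices of $R$ can meet $\su{\tilde S}{\tilde P}\setminus V(H_{k-1})$, e.g.\ vertices lying on some $\tilde P_j$). In that case extending via D3 from each endpoint gives two $M$-balanced paths inside $\su{\tilde S}{\tilde P}$ that may collide long before reaching $V(H_0)$; the standard ``first common factor-component'' trick then yields an $M$-ear relative to some $I$ with $V(I)\subseteq \su{\tilde S}{\tilde P}$ through $G_2$, not relative to $H_0$. That gives only a length-$k$ sequence $(H_0,\ldots,I,G_2)$, which is no contradiction. Your appeal to Lemma~\ref{lem:fc2bigger} does not help here: enlarging a critical-inducing set for $H_0$ does not produce an $M$-ear relative to $H_0$ itself.

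This is exactly why the paper does \emph{not} insist on keeping $H_{k-1}$: it first trims $P_k$ against $\su{S^{k-1}}{P^{k-1}}$, shows the resulting piece $Q$ through $G_2$ already avoids $V(H_0)$ (using D3 and minimality of $k$), then routes $Q$'s endpoints back via D3 and lets the first common factor-component $I$ be the new penultimate term, proving separately that $I=H_{k-1}$ or that $P_{k-1}$ passes through $I$ (again by minimality of $k$). To repair your argument you must either allow $H_{k-1}$ to change in this way, or supply an independent proof that an $H_0$-avoiding ear relative to the \emph{original} $H_{k-1}$ through $G_2$ always exists---something neither your sketch nor the paper establishes.
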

\begin{proof}
%Take arbitrarily an $M$-ear sequence $S = (H_0 = G_1, \ldots, H_k = G_2)$ 
%from $G_1$ to $G_2$  of shortest length, 
%and $M$-ears $P = (P_1,\ldots, P_k)$ associated with it. 
% 
We proceed by induction on $k$. 
If $k = 0$, the claim is trivially true. 
If $k = 1$, 
for any shortest $M$-ear sequence $S = (H_0 = G_1, H_1 = G_2)$ from $G_1$ to $G_2$ 
and associated $M$-ears $P = (P_1)$, 
D1($S$, $P$) trivially holds by the definition of D1,   
and moreover D2($S$, $P$) and D3($S$, $P$) also hold  
by applying Proposition~\ref{prop:adjoin-ear} to 
$V(H_0)$, $P_1$, and $H_k$. 

Let $k\ge 2$, and suppose the claim is true for 
any two factor-components $G_1', G_2'\in\mathcal{G}(G)$ such that 
the length of the shortest $M$-ear sequence from $G_1'$ to $G_2'$, 
is $1,\ldots, k-1$.  

Take arbitrarily an $M$-ear sequence $S = (G_1 = H_0, \ldots, H_k = G_2)$ 
from $G_1$ to $G_2$  of shortest length, 
and $M$-ears $P = (P_1,\ldots, P_k)$ associated with it. 
Let $u_1, v_1$ be the end vertices of $P_1$. 
\begin{cclaim}
Without loss of generality we can assume that 
$S$ and $P$ are chosen so that 
for each $i = 1, \ldots, k-1$, 
$S^i$ and $P^i$ satisfy D1, D2, and D3. 
\end{cclaim}
\begin{proof}
By the induction hypothesis, 
there exist an $M$-ear sequence from $H_0$ to $H_{k-1}$, 
which is of shortest length,   
and $M$-ears associated with it 
which satisfy D1, D2, and D3; 
note that its length is $k-1$. 
Without loss of generality, 
we can assume $S^{k-1}$ and $P^{k-1}$ coincides to them. 
Since the conditions D1, D2, and D3 are closed with 
substructures, 
the claim follows. 
\qed
\end{proof}
%
%
\begin{comment}
Take one of the shortest $M$-ear sequence from $G_1$ to $G_2$ arbitrarily, 
say $S = (H_0,\ldots, H_k)$, 
which is associated with $M$-ears $P = (P_1,\ldots, P_k)$. 
\end{comment}
\begin{comment}
Then, for each $i = 1,\ldots, k-1$, 
\begin{cclaim}
$S^i$ is one of the shortest $M$-ear sequence from $H_0$ to $H_i$, 
and $P^i$ is $M$-ears associated with it. 
\end{cclaim}
%
Therefore, without loss of generality by the induction hypothesis we can assume that 
\begin{quotation}
for each $i = 1, \ldots, k-1$, 
$S^i$ and $P^i$ satisfy D1, D2, and D3. 
\end{quotation} 
\end{comment}
%
%
%
If $P_k$ is disjoint from $V(H_0)$, 
namely if D1($S$, $P$) holds, 
then by Lemma~\ref{lem:inductive}, $S$ and $P$ also satisfy D2 and D3, 
and the claim follows. 

Hence hereafter 
suppose that might fail i.e. $P_k$ might not be disjoint from $V(H_0)$. 
By Proposition~\ref{prop:ear2relative}, 
each connected component of $P_k - E(G[\su{S^{k-1}}{P^{k-1}}])$ 
is an $M$-ear relative to $\su{S^{k-1}}{P^{k-1}}$. 
Take one of them $Q$ arbitrarily 
that has common vertices with $H_k$. 
 
Take $x\in V(Q)\cap V(H_k)$ arbitrarily, 
and let $u, v$ be the end vertices of $Q$. 
Trace $xQu$ from $x$ and let $y$ be the first vertex we encounter 
that is in $V(H_0)\cup \{u\}$. 
On the other hand, 
trace $xQv$ from $x$ and let $z$ be the first vertex we encounter 
that is in $V(H_0)\cap \{v\}$. 
Then, 
\begin{cclaim}
$yQz$ is an $M$-exposed path, 
whose internal vertices contains $x\in V(H_k)$, 
and whose vertices except the end vertices $y$ and $z$ 
are disjoint from $V(H_0)\cup \su{S^{k-1}}{P^{k-1}}$. 
\end{cclaim} 
\begin{cclaim}\label{claim:q2d1}
$Q$ is disjoint from $V(H_0)$. 
\end{cclaim}
\begin{proof}
We are going to prove $y = u$ and $z = v$; 
First suppose the case where $y, z\in V(H_0)$. 
Then, $yQz$ is an $M$-ear relative to $H_0$ and through $H_k$, 
which means $(H_0, H_k)$ forms an $M$-ear sequence of length one, 
contradicting the definition of $k$, since $k\ge 2$. 

Second suppose the case where $y\in V(H_0)$ and $z = v$. 
Since $S^{k-1}$ and $P^{k-1}$ satisfy D3, 
for either $w\in \{u_1, v_1\}$ 
there is an $M$-balanced path $R$ from $z$ to $w$ 
such that $V(R)\setminus \{w\}\subseteq \su{S^{k-1}}{P^{k-1}}$. 
Therefore, $yQz + R$ is an $M$-ear relative to $H_0$ and through $H_k$, 
again letting $(H_0, H_k)$ be an $M$-ear sequence, 
a contradiction.  

In the third case where $y=u$ and $z\in V(H_0)$, 
by symmetric arguments we are again lead to a contradiction. 

Therefore, we obtain that $y = u$ and $z = v$, 
which is equivalent to $Q$ being disjoint from $V(H_0)$. 
\qed
\end{proof}
Since $S^{k-1}$ and $P^{k-1}$ satisfy D3, 
for each $\alpha \in \{u, v\}$ 
there is an $M$-balanced path $Q_\alpha$ from $\alpha$ to $r_\alpha$, 
where $r_\alpha$ equals either $u_1$ or $v_1$, 
such that $V(Q_\alpha) \setminus \{r_\alpha\} \subseteq \su{S^{k-1}}{P^{k-1}}$. 
Trace $Q_u$ from $u$ and let $s$ be the first vertex we encounter 
that is contained in a factor-component, say $I\in\mathcal{G}(G)$, 
which has common vertices also with $V(Q_v)$; 
such $I$ surely exists since both $Q_u$ and $Q_v$ have vertices in $H_0$. 
Trace $Q_v$ from $v$ and let $t$ be the first vertex we encounter 
that is in $V(I)$. 
\begin{cclaim}\label{claim:noth0}
$I\neq H_0$. Accordingly, $V(Q_u)\cup V(Q_v)\subseteq \su{S^{k-1}}{P^{k-1}}$. 
\end{cclaim}
\begin{proof}
$\su{S^{k-1}}{P^{k-1}}\cap V(H_0) = \emptyset$, 
and for each $\alpha \in \{u, v\}$, 
$V(Q_\alpha)\setminus \{r_\alpha\} \subseteq \su{S^{k-1}}{P^{k-1}}$. 
Therefore, 
$I = H_0$ only if $V(Q_u)\cap V(Q_v) = \emptyset$ or 
$V(Q_u)\cap V(Q_v) = \{r_u\} = \{r_v\}$. 
Then, 
$Q_u + Q + Q_v$ forms an $M$-ear relative to $H_0$ and through $H_k$, 
letting $(H_0, H_k)$ be an $M$-ear sequence of length one, 
a contradiction. 
\qed
\end{proof}
\begin{cclaim}
Each connected component of $uQ_us + Q + vQ_zt - E(I)$ 
is an $M$-ear relative to $I$, one of which, say $\hat{Q}$,  is through $H_k$. 
\end{cclaim}
\begin{proof}
$uQ_us$ and $vQ_zt$ are $M$-balanced paths respectively from $u$ to $s$ and 
from $v$ to $t$, 
and they are disjoint if $s \neq t$, 
or have only one common vertex $s = t$ if $s = t$.  
Additionally, they are both contained in $\su{S^{k-1}}{P^{k-1}}$ 
by Claim~\ref{claim:noth0}, 
while $V(Q)\cap \su{S^{k-1}}{P^{k-1}} = \{u, v\}$. 
Therefore, 
$uQ_us + Q + vQ_zt$ forms an $M$-exposed path between $s$ and $t$ 
if $s\neq t$, or an $M$-ear relative to $\{s\} = \{t\}$ if $s = t$, 
in both cases having internal vertices contained in $H_k$, 
since $Q$ does. 
Hence, by Proposition~\ref{prop:ear2relative}, 
the claim follows. 
\qed
\end{proof}
%
\begin{comment}
$uQ_us$ and $vQ_zt$ are $M$-balanced paths respectively from $u$ to $s$ and 
from $v$ to $t$, 
accordingly, 
$\hat{Q} := uQ_us + Q + vQ_zt$ forms an $M$-ear relative to $I$, 
and through $H_k$. 
\end{comment}
%

By the arguments up till now,  
$I$ has some vertices in $\su{S^{k-1}}{P^{k-1}}$. 
Hence, 
$I$ equals either of $H_1,\ldots, H_{k-1}$ or otherwise 
it just has common vertices other than $u_1$ or $v_1$, 
 with either of $P_1,  \ldots, P_{k-1}$. 
\begin{cclaim}\label{claim:i2elem}
If $I$ equals either of $H_1,\ldots, H_{k-1}$, 
then $I = H_{k-1}$. 
\end{cclaim}
\begin{proof}
If $k = 2$, the claim is trivially true. 
Let $k \ge 3$ and 
suppose the claim fails, that is, 
$I = H_i$ for $i\in \{1, \ldots, k-2\}$. 
Then, $(H_0, \ldots, H_i = I, H_k)$ 
forms an $M$-ear sequence from $H_0$ to $H_k$, 
associated with $(P_1,\ldots, P_i, \hat{Q})$, 
and of length $i + 1 \le k-1$. 
This contradicts the definition of $k$, 
therefore we have the claim. 
\qed
\end{proof}
\begin{cclaim}\label{claim:2d1}
$\hat{Q}$ is disjoint from $V(H_0)$. 
\end{cclaim}
\begin{proof}
By Claim~\ref{claim:q2d1}, $Q$ is disjoint from $V(H_0)$, 
and by Claim~\ref{claim:noth0}, 
$Q_u$ and $Q_v$ are both disjoint from $V(H_0)$. 
Therefore, 
$\hat{Q} = Q_u + Q + Q_v$ is also disjoint from $V(H_0)$. 
\qed
\end{proof}
Therefore, with Claims~\ref{claim:i2elem} and \ref{claim:2d1},  in the above case, 
namely where $I = H_{k-1}$, 
$S = (H_0, \ldots, H_k)$ is an $M$-ear sequence, 
which can be regarded as being associated by 
$M$-ears $P' := (P_1,\ldots, P_{k-1}, \hat{Q})$. 
Since $S$ and $P'$ satisfy D1 by Claim~\ref{claim:2d1}, 
we have that they satisfy also D2 and D3, 
by Lemma~\ref{lem:inductive}. 
Hence we are done for this case. 
\begin{cclaim}\label{claim:i2ear}
If $I$ is distinct from any of $H_1,\ldots, H_{k-1}$, 
then $P_{k-1}$, an $M$-ear relative to $H_{k-2}$,  is through $I$. 
\end{cclaim}
\begin{proof}
If $k = 2$, the claim apparently follows. 
Let $k\ge 3$ and 
suppose $I$ has common vertices with $P_i$ with $i\in \{1,\ldots, k-2\}$. 
Namely, 
$P_i$ is an $M$-ear relative to $H_{i-1}$ and through $I$. 
Hence, 
$(H_0, \ldots, H_{i-1}, I, H_k)$ is an $M$-ear sequence from $H_0$ to $H_k$, 
of length $i+1\le k-1$, 
associated with $M$-ears $(P_1,\ldots, P_i, \hat{Q})$.  
This contradicts the definition of $k$.
Therefore we can conclude that $i = k-1$, 
and the claim follows. 
\qed
\end{proof}
Therefore, in this case, by Claim~\ref{claim:i2ear}, 
$\tilde{S}:= (H_0, \ldots, H_{k-2}, I, H_k)$ 
is an $M$-ear sequence  associated with  
$\tilde{P} = (P_1,\ldots, P_{k-1}, \hat{Q})$.

\begin{itemize}
\item 
$\tilde{S}^{k-2}$ and $\tilde{P}^{k-2}$ satisfy D1, D2, and D3, 
since $\tilde{S}^{k-2} = S^{k-2}$ and $\tilde{P}^{k-2} = P^{k-2}$, 
and 
\item $\tilde{S}^{k-1}$ and $\tilde{P}^{k-1}$ satisfy D1, 
since $(k-1)$-th elements of $P$ and $\tilde{P}$ are identical. 
\end{itemize}
Therefore, by Lemma~\ref{lem:inductive}, 
$\tilde{S}^{k-1}$ and $\tilde{P}^{k-1}$ also satisfy D2 and D3. 
Moreover, by Claim~\ref{claim:2d1}, 
with Lemma~\ref{lem:inductive} again applied to 
$\tilde{S}$ and $\tilde{P}$, 
we obtain, with Claim~\ref{claim:2d1}, 
 that $\tilde{S}$ and $\tilde{P}$ also satisfy D1, D2, and D3. 
This complets the proof. 
%
\begin{comment}
Since $S'^{k-2} = S^{k-2}$ and $P'^{k-2} = P^{k-2}$, 
$\su{S'^{k-2}}{P'^{k-2}} = \su{S^{k-2}}{P^{k-2}}$ 
and $S'^{k-2}$ and $P'^{k-2}$ satisfy D1, D2, and D3. 
Additionally, 
$S'^{k-1}$ and $P'^{k-1}$ satisfy D1. 
Therefore, by Lemma~\ref{lem:inductive}, 
$S'^{k-1}$ and $P'^{k-1}$ also satisfy D2 and D3. 
Moreover, by Claim~\ref{claim:2d1}, 
with Lemma~\ref{lem:inductive} again applied to 
$S'$ and $P'$, 
we obtain that $S'$ and $P'$ also satisfy D1, D2, and D3. 
This completes the proof.  
\end{comment}
\qed
\end{proof}

%partially_proof_thm_order
\begin{theorem}\label{thm:order}
$\yield$ is a partial order.
\end{theorem}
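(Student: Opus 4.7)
The theorem asserts reflexivity, antisymmetry, and transitivity of $\yield$, and I address each in turn.

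Reflexivity is immediate: for any $G_1\in\mathcal{G}(G)$, the set $X := V(G_1)$ is a critical-inducing set for $G_1$ to itself, since $G[V(G_1)]/G_1$ is a single vertex and hence vacuously factor-critical.

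Transitivity follows by a concatenation-and-prune argument based on Theorem~\ref{thm:equivalentdefinition}. Given $G_1\yield G_2$ and $G_2\yield G_3$, I take $M$-ear sequences $S_1 = (G_1, H_1,\ldots, H_{k-1}, G_2)$ and $S_2 = (G_2, I_1,\ldots, I_{\ell-1}, G_3)$ together with their associated ears, and concatenate them at $G_2$. If the resulting walk contains a repeated factor-component at positions $i<j$, then discarding the intervening segment leaves a shorter walk: the ear previously attached at position $j$ is still an $M$-ear relative to the component at position $i$, since those components are identical. Iterating until no repetition remains yields a valid $M$-ear sequence from $G_1$ to $G_3$, and Theorem~\ref{thm:equivalentdefinition} then gives $G_1\yield G_3$.

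Antisymmetry is the main obstacle. I would argue by contradiction: suppose $G_1\neq G_2$ with $G_1\yield G_2$ and $G_2\yield G_1$. Invoke Proposition~\ref{prop:ear} in both directions to fix shortest $M$-ear sequences $S=(G_1=H_0,\ldots,H_k=G_2)$ and $T=(G_2=J_0,\ldots,J_m=G_1)$, with $k,m\ge 1$, associated ears $P$ and $Q$ respectively, each satisfying D1, D2, D3. The plan is to use property D3 from both sides: from $S$, every $z\in V(G_2)\subseteq\su{S}{P}$ admits an $M$-balanced path to an end vertex of $P_1$ in $V(G_1)$ with interior disjoint from $V(G_1)$; symmetrically from $T$, every $y\in V(G_1)\subseteq\su{T}{Q}$ admits an $M$-balanced path to an end vertex of $Q_1$ in $V(G_2)$ with interior disjoint from $V(G_2)$. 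Splicing one such path from a chosen vertex of $V(G_2)$ back into $V(G_1)$ with a second path continuing from $V(G_1)$ into $V(G_2)$, truncated at their first common vertex in the style of Proposition~\ref{prop:nonpositive}, should produce an $M$-alternating closed structure crossing both $V(G_1)$ and $V(G_2)$. By Proposition~\ref{prop:allowed2circuit}, every edge of such a circuit would be allowed, whereas no edge between distinct factor-components is allowed by definition, yielding the contradiction.

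The delicate step will be the splice itself: the two $M$-balanced paths produced by the two applications of D3 may intersect in complicated ways, so the meeting point must be chosen so that a genuine $M$-alternating circuit, rather than a mere walk with repeated vertices, is obtained. I expect this to require a first-intersection truncation together with the finer internal-endpoint variant D2, invoked to retreat to a suitable internal vertex of $P_1$ or $Q_1$ without prematurely re-entering $V(G_1)$ or $V(G_2)$. This is precisely where the careful combinatorial setup produced by Proposition~\ref{prop:ear} is designed to apply.
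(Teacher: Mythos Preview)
Your reflexivity and transitivity arguments are correct and match the paper (the paper is even terser on transitivity, simply citing Theorem~\ref{thm:equivalentdefinition}). The antisymmetry argument, however, has a genuine gap that you yourself flag: the splicing step is stated as an expectation, not carried out. Two $M$-balanced paths obtained from the two-sided D3 applications live in $\su{S}{P}$ and $\su{T}{Q}$ respectively, sets whose interaction you have no control over, and they terminate at possibly different vertices of $G_2$; neither a closed walk nor correct alternation at a first-intersection point follows without further structure.

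The paper's antisymmetry proof shows what your sketch is missing. It invokes Proposition~\ref{prop:ear} only once, for $G_1\yield G_2$, obtaining $S$, $P$, and in particular the ear $P_1$ with endpoints $u_1,v_1\in V(G_1)$. For $G_2\yield G_1$ it uses only Lemma~\ref{lem:path2base} to extract a single $M$-balanced path $Q$ from $u_1$ toward $V(G_2)$. The circuit is then assembled from three concrete pieces: the initial segment $u_1Qx$ stopped at the first entry $x$ into $\su{S}{P}\cup\{v_1\}$; a path from $x$ to an internal vertex $y$ of $P_1$ supplied by D2; and a segment of $P_1$ itself. When the parity of $u_1P_1y$ does not immediately close an $M$-alternating circuit, a path inside $G_1$ between $v_1$ and $u_1$ from Proposition~\ref{prop:nonpositive} is spliced in, with one more first-intersection truncation. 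The idea you are missing is that $P_1$ is used \emph{as an explicit segment of the eventual circuit}, not merely as a source of endpoints in $V(G_1)$; this rigid anchor is what makes the first-intersection and parity case analysis go through.
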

\begin{proof}
The reflexivity is obvious from the definition.
The transitivity  obviously follows from Theorem~\ref{thm:equivalentdefinition}.
Hence, we will prove the antisymmetry. 
Let $G_1, G_2\in \mathcal{G}(G)$ be such that $G_1\yield G_2$ and $G_2\yield G_1$. 
Suppose that the antisymmetry fails, that is, that $G_1 \neq G_2$. 
Let $M$ be a perfect matching of $G$. 
By Proposition~\ref{prop:ear}, 
there exists an $M$-ear sequence from $G_1$ to $G_2$, say 
$S:= ( G_1 = H_0,\ldots, H_k = G_2)$, where $k \ge 1$, 
and associated $M$-ears $P: = (P_1,\ldots, P_k)$ 
which  D1, D2 and D3. 
Let $u_1$ and $v_1$ be the end vertices of $P_1$.

By Lemma~\ref{lem:path2base}
there exists $w\in V(G_2)$ such that   
there is an $M$-balanced path $Q$ from $u_1$ to $w$. 
Trace $Q$ from $u_1$ and let $x$ be the first vertex we encounter 
that is in $(\su{S}{P}\cup \{v_1\})\setminus \{u_1\}$; 
such a vertex surely exists since $V(G_2)\subseteq \su{S}{P}$. 
\begin{cclaim}
Without loss of generality we can assume that 
$x\neq v_1$, namely, 
$x\in \su{S}{P}$ and $u_1Qx$ is disjoint from $v_1$. 
%We can take $u_1Qx$ so that it is disjoint from $v_1$. 
\end{cclaim}
\begin{proof}
Suppose the claim fails, that is, $x = v_1$. 
Then, $u_1\neq v_1$. 
If $u_1Qv_1$ is an $M$-saturated path, 
then $P_1 + u_1Qv_1$ forms an $M$-alternating circuit, containing non-allowed edges, 
a contradiction. 
Otherwise, namely if $u_1Qv_1$ is an $M$-balanced path from $u_1$ to $v_1$,  
then $v_1Qw$ is an $M$-balanced path from $v_1$ to $w$. 
Now redefine $x$  as the first vertex we encounter that is in $\su{S}{P}$ 
if we trace $v_1Qw$ 
from $v_1$. 
Then, $v_1Qx$ is an $M$-balanced path from $v_1$ to $x$ which is disjoint from $u_1$. 
Therefore, by changing the roles of $u_1$ and $v_1$, 
without loss of generality, 
we obtain the claim.  
\qed
\end{proof} 
Therefore, hereafter let $x \in  \su{S}{P}$, noting that  
$u_1Qx$ is an $M$-balanced path from $u_1$ to $x$.  
Since $x\in \su{S}{P}$, by Proposition~\ref{prop:adjoin-ear}
there is an $M$-balanced path $R$ from $x$ to an internal vertex of $P_1$, say $y$, 
such that $V(R)\subseteq \su{S}{P}$ and $V(R)\cap V(P_1) = \{y\}$.

If $u_1P_1y$ has an even number of edges, 
$u_1Qx + xRy + yP_1u_1$ is an $M$-alternating circuit containing non-allowed edges, 
a contradiction.

%So suppose $pPw$ has odd edges.
Hence hereafter we assume $u_1P_1y$ has an odd number of edges.
By Proposition~\ref{prop:nonpositive},
there is an $M$-saturated or balanced path $L$ from $v_1$ to $u_1$
which is contained in $G_1$. 
Trace $L$ from $v_1$ and let $w$ be the first vertex on $u_1Qx$; 
note that $L$ is disjoint from $\su{S}{P}$ 
since $V(L)\subseteq V(H_0)$ and $\su{S}{P}$ is disjoint from $V(H_0)$.  
\begin{comment}
Trace $L$ from $v_1$ and let $w$ be the first vertex on $Q$. 
\end{comment}
If $u_1Qw$ has an odd number of edges, then
$wQu_1 + P_1 + v_1Lw$ is an $M$-alternating circuit, a contradiction.
If  $u_1Qw$ has an even  number of edges,
then $v_1Lw+ wQx + xRy + yP_1u_1$ is an $M$-alternating circuit,
which is also a contradiction.
Thus we get $G_1 = G_2$, and  the claim follows.
\qed
\end{proof}

\section{A Generalization of the Canonical Partition}
%section:generalized canonical partition
For non-elementary graphs,
 the family of maximal barriers never gives a partition of its vertex set~\cite{lp1986}.
Therefore, to analyze the structures of general graphs with perfect matchings, 
we generalized the canonical partition 
based on Kotzig's way~\cite{kotzig1959a, kotzig1959b, kotzig1960}. 
\if0%%%%%%%%%%%%%%%%%%%
Later in this paper we show that there is a correlation between 
the generalized canonical partition and the notion of barriers.
\fi%%%%%%%%%%%%%%%%%%%

\begin{definition}
Let $G$ be a \matchablesp graph and $H\in \mathcal{G}(G)$.
For $u, v\in V(H)$, we say
$u\gsim v$ if $u = v$ or $G-u-v$ is not \matchable.
\end{definition}
\if0%%%%%%%%%%%%%%%%%%%%%%%%%%%%%%%%%
The next proposition is easily observed.
\begin{proposition}\label{prop:omissible}
$G$ is a \matchablesp graph, $M$ be a perfect  matching of $G$
 and $u, v\in V(G)$.
Then, $G-u-v$ is \matchablesp if and only if there is an $M$-saturated path
between $u$ and $v$.
\end{proposition}
\fi%%%%%%%%%%%%%%%%%%%%%%%%%%%%%%%%%%%%
%\begin{proof}
%Consider the symmetric difference of 
% a perfect matching $N$ of $G-u-v$ and $M$.
%\end{proof}
\begin{theorem}\label{thm:generalizedcanonicalpartition}
$\gsim $ is an equivalence relation.
\end{theorem}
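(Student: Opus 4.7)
Reflexivity ($u \gsim u$) and symmetry ($u \gsim v$ iff $v \gsim u$) of $\gsim$ follow immediately from the definition, since $G - u - v = G - v - u$. The main task is transitivity.

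My approach is to reduce the statement to the classical canonical partition theorem (Theorem~\ref{thm:canonicalpartition}) applied to the factor-component $G[V(H)]$ viewed as a graph in its own right. I first observe that $G[V(H)]$ is itself elementary: by Proposition~\ref{prop:separating}, every perfect matching of $G$ restricts to a perfect matching of $G[V(H)]$, and an edge of $G[V(H)]$ is allowed in $G[V(H)]$ iff it is allowed in $G$; hence the allowed-edge subgraph of $G[V(H)]$ is the connected spanning elementary component $H$. The key lemma I need is:

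\emph{For $u, v \in V(H)$, $G - u - v$ is factorizable if and only if $G[V(H)] - u - v$ is factorizable.}

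Granted this, the relation $\gsim$ on $V(H)$ coincides with the classical relation $\sim$ on $V(G[V(H)])$, so transitivity follows from Theorem~\ref{thm:canonicalpartition}.

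The ``$\Leftarrow$'' direction of the key lemma is routine: a perfect matching of $G[V(H)] - u - v$ combined with a perfect matching of each other factor-component of $G$ (each factorizable, being a factor-component of the factorizable $G$) yields a perfect matching of $G - u - v$. The ``$\Rightarrow$'' direction is the technical crux. Fix a perfect matching $M$ of $G$ and set $M_H := M \cap E(G[V(H)])$, a perfect matching of $G[V(H)]$. The assumption together with the standard symmetric-difference argument (against $M$ and a perfect matching of $G - u - v$) gives an $M$-saturated $u$-$v$ path $P$ in $G$. Since $M$-edges do not cross between factor-components, each maximal subpath of $P$ inside $V(H)$ is itself $M_H$-saturated, and each excursion of $P$ outside $V(H)$ is an $M$-ear relative to $V(H)$ (by a parity check and Proposition~\ref{prop:ear2relative}). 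Applying Proposition~\ref{prop:nonpositive} inside the elementary graph $G[V(H)]$, I would iteratively replace each excursion by an $M_H$-alternating internal path between its entry and exit vertices, adjusting parity by matching-partner arguments in the style of Proposition~\ref{prop:nonpositive}'s proof, and arrive finally at an $M_H$-saturated $u$-$v$ path in $G[V(H)]$, which gives a perfect matching of $G[V(H)] - u - v$.

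The main obstacle is the parity bookkeeping in the ``$\Rightarrow$'' step: the internal replacements furnished by Proposition~\ref{prop:nonpositive} may be only $M_H$-balanced rather than $M_H$-saturated, and combining them with the $M_H$-saturated segments of $P \cap G[V(H)]$ so as to maintain $M_H$-alternation at joining vertices requires the same first-intersection tracing and matching-partner analysis that drives Proposition~\ref{prop:nonpositive}'s own proof. The crucial enabling fact throughout is that $M$-edges stay inside factor-components (Proposition~\ref{prop:separating}), which keeps the matching-partner arguments within $V(H)$ where Proposition~\ref{prop:nonpositive} applies.
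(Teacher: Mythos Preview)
Your key lemma is false, so the reduction to Theorem~\ref{thm:canonicalpartition} cannot work. The implication ``$G-u-v$ factorizable $\Rightarrow$ $G[V(H)]-u-v$ factorizable'' would force $\gsim$ to coincide with the classical $\sim$ on $V(H)$, but the paper notes immediately after the theorem that $\pargpart{G}{H}$ is only a \emph{refinement} of $\mathcal{P}(H)$, and the refinement is in general strict. A concrete counterexample: let $H$ be the $4$-cycle on $1,2,3,4$ with $M_H=\{12,34\}$, let $H'$ be another $4$-cycle on $5,6,7,8$ with $M_{H'}=\{56,78\}$, and add the two (non-allowed) edges $25$ and $48$. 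Then $H$ and $H'$ remain distinct factor-components, $G-1-3$ has the perfect matching $\{25,67,48\}$, yet $G[V(H)]-1-3$ consists of the two isolated vertices $2,4$ and is not factorizable. So $1\not\gsim 3$ in $G$ while $1\sim 3$ in $H$.

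This is exactly the phenomenon your ``parity bookkeeping'' paragraph is running into: when an excursion of the $M$-saturated $u$--$v$ path leaves $H$ and re-enters, its entry and exit vertices can lie in the same class of $\mathcal{P}(H)$, in which case Proposition~\ref{prop:nonpositive} gives you only an $M_H$-balanced replacement, never an $M_H$-saturated one, and no amount of first-intersection tracing will repair the parity. The paper's proof avoids the issue entirely: it works directly in $G$, takes the assumed $M$-saturated $u$--$w$ path $P$ (which may leave $H$), uses Proposition~\ref{prop:nonpositive} only inside $H$ to get an $M$-balanced path $Q$ from $v$ to $u$, and then a single first-intersection argument between $Q$ and $P$ produces an $M$-saturated path from $v$ to one of $u,w$, contradicting $u\gsim v$ or $v\gsim w$. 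No reduction to $G[V(H)]$ is needed.
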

\begin{proof}
Since the reflexivity and the symmetry are obvious from the definition,
we prove the transitivity.
%Suppose $u\gsim v$, $v\gsim w$ and $ u \not\gsim w $ 
%for three vertices $u, v, w \in V(H)$.
Let $M$ be a perfect matching of $G$. 
Let $u, v, w\in V(H)$ be such that $u\gsim v$ and  $v\gsim w$.
%Let $u\gsim v$ and  $v\gsim w$ for $u, v, w\in V(H)$.
If any two of them are identical, clearly the claim follows.
Therefore it suffices to consider the case that they are mutually distinct. 
Suppose that the claim fails, that is, $u\not\gsim w$.
%Suppose they are mutually distinct
%and that the claim fails, that is, $u\not\gsim w$.
%Suppose $u\gsim v$ and $v\gsim w$ for $u,v,w\in V(H)$.
Then there is an $M$-saturated path $P$
between $u$ and $w$.
By Proposition~\ref{prop:nonpositive}, there is an $M$-\zero path $Q$ from $v$ to $u$.
Trace $Q$ from $v$ and let $x$ be the first vertex we encounter that in $V(Q)\cap V(P)$.
If $uPx$ has an odd number of edges, 
$vQx + xPu$ is an $M$-saturated path between $u$ and $v$, a contradiction.
If $uPx$ has an even  number of  edges,
then $xPw$ has an odd number of edges, and by the same argument we have a contradiction.
\qed
\end{proof}
We call the family of equivalence classes of $\gsim$
 as the \textit{generalized canonical partition}
and denote as $\pargpart{G}{H}$ %or, simply $\gpart{H}$
 for each factor-component $H\in \mathcal{G}(G)$ of a factorizable graph $G$.
Note that
the notions of the  canonical partition and the generalized one are coincident
 for an elementary graph.
 Thus we denote the union of equivalence classes of all the 
factor-components of $G$ as $\gpart{G}$, and call it just as 
the \textit{canonical partition}.
Moreover  our proof for Theorem~\ref{thm:generalizedcanonicalpartition} contains
a short proof for the existence of the canonical partition.
%
%Kotzig's proof is quite involved
% and 
Kotzig takes three papers to prove it, thus %the canonical partition, so  
to prove that  $\sim$ is an equivalence relation \textquotedblleft from scratch'' 
is considered to be hard~\cite{lp1986}.
However, in fact, it can be shown in a  simple way
even without the premise of the Gallai-Edmonds structure theorem  nor the notion of barriers. 
Note also that the generalized canonical partition $\pargpart{G}{H}$ is 
a refinement of $\mathcal{P}(H)$ for each $H\in \mathcal{G}(G)$.

\section{Correlations between $\yield$ and  $\gsim$}
%\section{Correlation between $\yield$, $\gsim$, and barriers}
%partially_def_up
In this section we further analyze properties of factorizable graphs.
%which lead to the results in the later sections. 
We denote all the upper bounds of $H\in\mathcal{G}(G)$ 
in $(\mathcal{G}(G), \yield)$ as 
$\parupstar{G}{H}$ and define $\parup{G}{H}$ as $\parupstar{G}{H}\setminus \{H\}$.
We sometimes omit the subscripts 
if they are  apparent from the context.
For simplicity, we sometimes denote the subgraph induced by the vertices in $\up{H}$
 (resp. $\upstar{H}$) as just  
$G[\up{H}]$ (resp. $G[\upstar{H}]$), 
and the vertices of $\up{H}$ (resp. $\upstar{H}$) 
as just $V(\up{H})$ (resp. $V(\upstar{H})$).

%partially_lem_ear-base
\begin{lemma}\label{lem:ear-base}
Let $G$ be a factorizable graph, $M$ be a perfect matching of $G$,
and $H\in \mathcal{G}(G)$.
Let $P$ be an $M$-ear relative to $H$ with end vertices 
$u, v \in V(H)$.
Then $u\gsim v$.
\end{lemma}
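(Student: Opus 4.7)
The plan is to argue by contradiction. Assume $u \neq v$ (the case $u = v$ is trivial from the definition of $\gsim$) and that $G - u - v$ has a perfect matching $N$. Taking the symmetric difference $M \triangle N$ produces an $M$-saturated path $Q$ from $u$ to $v$ in $G$. My target contradiction will be to exhibit an $M$-alternating circuit of $G$ containing the first edge $uw_1$ of $P$: since $u \in V(H)$ and $w_1 \notin V(H)$ lie in different factor-components of $G$, the edge $uw_1$ is not allowed, and by Proposition~\ref{prop:allowed2circuit} no $M$-alternating circuit can contain it.

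I would first invoke Proposition~\ref{prop:nonpositive} inside the elementary factor-component $H$ (taking $M \cap E(H)$ as its perfect matching), applied to the pair $u, v \in V(H)$. This yields a path $L \subseteq H$ from $u$ to $v$ that is $M$-saturated or $M$-balanced. If $L$ is $M$-saturated the proof closes at once: because $V(L) \subseteq V(H)$ while by definition of an ear relative to $H$ we have $V(P) \cap V(H) = \{u, v\}$, the paths $L$ and $P$ share exactly $\{u, v\}$ and no edges at all. Thus $L + P$ is a simple circuit. At $u$ the edge of $L$ is the $M$-edge $uu'$ whereas the edge of $P$ is $uw_1 \notin M$, and the same dichotomy holds at $v$ (where $L$ ends with an $M$-edge while $P$ ends with the non-$M$ edge $w_k v$). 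Together with the alternation that $L$ and $P$ inherit internally, this makes $L + P$ an $M$-alternating circuit containing $uw_1$, the required contradiction.

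The hard part will be the remaining sub-case, where only an $M$-balanced $L$ exists in $H$ (equivalently, $u$ and $v$ lie in the same class of the canonical partition $\mathcal{P}(H)$): then $L$ ends with a non-$M$ edge at $v$ and $L + P$ fails alternation there. In this sub-case I would turn to $Q$ directly: let $z = p_a = q_b$ be the first vertex at which $P$, traced from $u$, meets $Q$, and form the walk $uQz + zPw_1$ from $u$ to $w_1$. By the minimality of $z$ this walk is simple, and a routine parity check at $z$ shows that it is $M$-saturated exactly when $a + b$ is even---in that case Proposition~\ref{prop:allowed2circuit} again forces $uw_1$ to be allowed, yielding the contradiction. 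When $a + b$ is odd, the uniqueness of $M$-edges at $z$ forces $P$ and $Q$ to share the $M$-edge incident with $z$, so $p_{a+1} = q_{b+1}$; iterating this, together with the symmetric observation from the $v$-end, either produces a parity-favourable intersection or forces $Q$ to coincide with $P$ past $z$, in which case the $V(H)$-prefix of $Q$ supplies an $M$-saturated subpath inside $V(H)$ that can be spliced with $P$ to form the desired alternating circuit through $uw_1$. The delicate bookkeeping of these parities and shared $M$-edges is where I expect the main difficulty.
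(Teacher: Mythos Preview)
Your iteration in the hard case does not close. From $a+b$ odd you correctly deduce (after ruling out the ``both in $M$'' sub-case via minimality) that $p_{a+1}=q_{b+1}=z'$, the common $M$-neighbour of $z$. But at $z'$ the next edges $p_{a+1}p_{a+2}$ and $q_{b+1}q_{b+2}$ are \emph{both} non-$M$, so uniqueness of the $M$-edge tells you nothing: $p_{a+2}$ and $q_{b+2}$ can perfectly well be different. The ``iteration'' therefore stops after a single step, and there is no mechanism that forces $Q$ to coincide with $P$ past $z$; the vague appeal to ``the symmetric observation from the $v$-end'' does not supply one either. The final fallback about a ``$V(H)$-prefix of $Q$'' is also off: your $z=p_a$ with $a$ odd is an internal vertex of $P$, hence $z\notin V(H)$, so $uQz$ is certainly not a path inside $H$.

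What you are missing is the structural tool the paper has already built. The paper traces $Q$ (not $P$) from $u$, takes the first vertex $x$ on $P-u$, and splits on the parity of $|E(uPx)|$. In the even case one gets the $M$-alternating circuit you are aiming for. In the odd case the circuit $uQx+xPu$ fails alternation only at $x$; rather than repairing this by parity-chasing, the paper observes that $x$ is an internal vertex of $P$ lying in some factor-component $I\neq H$, and that $uQx+xPu-E(I)$ contains an $M$-ear relative to $I$ passing through $H$. By Theorem~\ref{thm:equivalentdefinition} this gives $I\yield H$, while $P$ itself witnesses $H\yield I$; antisymmetry of $\yield$ (Theorem~\ref{thm:order}) then forces $I=H$, a contradiction. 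This use of the already-established partial order is the key idea, and it replaces the delicate bookkeeping you anticipated. (Your detour through $L$ via Proposition~\ref{prop:nonpositive} is also unnecessary: the argument works directly with $Q$.)
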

\begin{proof}
Suppose the claim fails, that is, $u\neq v$ and 
there is an $M$-saturated path $Q$ between $u$ and $v$.
Trace $Q$ from $u$ and let $x$ be the first vertex we encounter that is  on $P-u$.
If $uPx$ has an even number of  edges,
$uQx + xPu$ is an $M$-alternating circuit containing non-allowed edges,  a contradiction.
Hence we  suppose  $uPx$ has an odd number of  edges.
Let $I\in \mathcal{G}(G)$ be such that  $x\in V(I)$.
Then one of the components of $uQx + xPu - E(I)$ is  an $M$-ear relative to $I$ and through $H$,
a contradiction by Theorem~\ref{thm:equivalentdefinition}.
\qed
\end{proof}  

%partially_thm_proof_base
\begin{theorem}\label{thm:base}
Let $G$ be a factorizable graph, 
and $G_0\in\mathcal{G}(G)$. 
For each connected component $K$ of $G[\up{G_0}]$ 
there exists $T_K\in\pargpart{G}{G_0}$ such that 
$\Gamma(K)\cap V(G_0)\subseteq T_K$. 
\end{theorem}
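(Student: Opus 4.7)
The plan is to show that every $u \in \Gamma(K) \cap V(G_0)$ is $\gsim$-equivalent to a fixed pair of vertices $u_1, v_1 \in V(G_0)$, so that $T_K$ may be taken as the common $\gsim$-class of $u_1$ and $v_1$. Fix a perfect matching $M$ of $G$ and any factor-component $H^* \in \mathcal{G}(G)$ with $V(H^*) \subseteq V(K)$. Since $H^* \in \up{G_0}$, Theorem~\ref{thm:equivalentdefinition} combined with Proposition~\ref{prop:ear} supplies an $M$-ear sequence $S^* = (G_0, H_1^*, \ldots, H^*)$ with associated ears $P^*$ satisfying D1, D2, D3. Let $u_1, v_1 \in V(G_0)$ be the endpoints of the first ear $P_1^* \in P^*$. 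The internal vertices of $P_1^*$ lie in $V(\up{G_0})$, and since they are connected to $V(H_1^*) \subseteq V(K)$ by the internal edges of $P_1^*$ within $G[\up{G_0}]$, they lie in $V(K)$. In particular $u_1, v_1 \in \Gamma(K) \cap V(G_0)$, and Lemma~\ref{lem:ear-base} applied to $P_1^*$ yields $u_1 \gsim v_1$; set $T_K$ to be their common $\gsim$-class.

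\textbf{Deriving $u \gsim u_1$.} Given $u \in \Gamma(K) \cap V(G_0)$, pick $u' \in V(K)$ with $uu' \in E(G)$; the edge $uu'$ is not in $M$ since its endpoints lie in distinct factor-components. The crucial step is to enlarge $S^*$ and $P^*$ (keeping $P_1^*$ unchanged and preserving D1, D2, D3) to a sequence $S$ with associated ears $P$ such that $u' \in \su{S}{P}$. Once this is achieved, property D3 supplies an $M$-balanced path $Q$ from $u'$ to a vertex $w \in \{u_1, v_1\}$ whose non-terminal vertices all lie in $\su{S}{P}$ and therefore outside $V(G_0)$. By the parity structure of $M$-balanced paths, the subpath of $Q$ obtained by deleting $w$ is $M$-saturated, so prepending the non-$M$ edge $uu'$ yields an $M$-ear $R$ relative to $G_0$ from $u$ to $w$. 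Lemma~\ref{lem:ear-base} gives $u \gsim w$, and combined with $u_1 \gsim v_1$, transitivity of $\gsim$ (Theorem~\ref{thm:generalizedcanonicalpartition}) yields $u \gsim u_1$. Thus $\Gamma(K) \cap V(G_0) \subseteq T_K$.

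\textbf{Main obstacle.} The chief technical challenge is the extension step: since $K$ is a connected component of $G[\up{G_0}]$ with $V(H^*) \subseteq V(K)$, one must argue that $S^*$ can be enlarged to cover any prescribed $u' \in V(K)$ while keeping $P_1^*$ fixed and preserving D1--D3. The strategy is to walk from $H^*$ to the factor-component $H_{u'} \ni u'$ through $V(K)$ using the connectedness of $K$, adjoining one factor-component at a time via an $M$-ear relative to the current last factor-component in the sequence. Because each newly introduced ear sits inside $V(K) \subseteq V(\up{G_0})$ and is therefore disjoint from $V(G_0)$, condition D1 is preserved automatically; Lemma~\ref{lem:fc2bigger} maintains the critical-inducing structure for $G_0$, and Lemma~\ref{lem:inductive} propagates D2 and D3 at each extension step. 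The construction of the intermediate $M$-ears themselves relies on the elementary structure of the factor-components of $K$ (via Proposition~\ref{prop:nonpositive} and Proposition~\ref{prop:adjoin-ear}), which allows one to stitch $M$-saturated segments across adjacent factor-components of $K$.
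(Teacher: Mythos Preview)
Your overall strategy---use D3 to produce an $M$-balanced path from $u'$ back to an endpoint of $P_1^*$, prepend the non-matching edge $uu'$, and apply Lemma~\ref{lem:ear-base}---is exactly the mechanism the paper uses. The difficulty is entirely in the extension step, and there your argument has a genuine gap.

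You propose to walk from $H^*$ to $H_{u'}$ through adjacent factor-components of $K$, appending one factor-component at a time to the $M$-ear sequence. But adjacency of two factor-components $I_{j-1},I_j$ in $K$ does \emph{not} guarantee an $M$-ear relative to $I_{j-1}$ through $I_j$: a single non-allowed edge between them is not an $M$-ear, and there is no reason the second endpoint of a putative ear returns to $I_{j-1}$ rather than to $V(G_0)$ or to some unrelated component. In poset language, adjacent factor-components in $\up{G_0}$ can be $\yield$-incomparable, so there may be no $M$-ear sequence from one to the other at all. Your citations of Proposition~\ref{prop:nonpositive} and Proposition~\ref{prop:adjoin-ear} only let you move \emph{within} a single elementary component or along an already-given ear; they do not manufacture a new ear between two prescribed factor-components. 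Lemma~\ref{lem:fc2bigger} enlarges critical-inducing \emph{sets}, not $M$-ear \emph{sequences}, so it does not help append a term to $S^*$. Consequently the claim ``each newly introduced ear sits inside $V(K)$'' is asserted, not proved.

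The paper sidesteps this obstruction by abandoning the idea of extending a single sequence. Instead, for \emph{every} $H\in\up{G_0}$ it takes a fresh shortest sequence $S_H,P_H$ (via Proposition~\ref{prop:ear}) and shows (Claim~\ref{claim:sequence2neighbor}) that all factor-components touching $\su{S_H}{P_H}$ have their $G_0$-neighbours in a single class $T$. This partitions the factor-components of $K$ according to which $T$ they yield; if two adjacent factor-components of $K$ landed in different classes $T_1\neq T_2$, the paper stitches the two separate D3-paths together across the connecting edge to build an $M$-ear relative to $G_0$ with one end in $T_1$ and the other in $T_2$, contradicting Lemma~\ref{lem:ear-base}. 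So the connectedness of $K$ is used only once, at the very end, to compare two independently constructed sequences---not to extend one sequence step by step. If you want to salvage your approach, this is the missing idea: rather than forcing $u'$ into $\su{S^*}{P^*}$, take a second sequence to $H_{u'}$ and argue its base class coincides with that of $S^*$.
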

\begin{proof}
Let $M$ be a perfect matching of $G$. 
\begin{cclaim}\label{claim:sequence2neighbor}
Let $H\in \up{G_0}$,  and  $S$ and $P$ be 
the shortest $M$-ear sequence from $G_0$ to $H$ and 
associated $M$-ears 
which satisfy D1, D2 and D3. 
Then, 
there exists $T\in\pargpart{G}{G_0}$ such that 
for each factor-components $H'$ that 
has common vertices with $\su{S}{P}$,  
$\Gamma(H')\cap V(G_0) \subseteq T$ holds. 
\end{cclaim}
\begin{proof}
Let us denote $S = (G_0 = H_0, \ldots, H_k = H)$, where $k\ge 1$, 
and $P = (P_1,\ldots, P_k)$. 
Let $u_1, v_1 \in V(G_0)$ be the end vertices of $P_1$. 
By Lemma~\ref{lem:ear-base}, 
there exists $T\in\pargpart{G}{G_0}$ such that 
$u_1, v_1\in T$.  

Let $H' \in \mathcal{G}(G)$ be such that $V(H')\cap \su{S}{P} \neq \emptyset$. 
Suppose there exists $w\in \Gamma(H')\cap V(G_0)$ and let 
$z\in V(H')$ be such that $wz\in E(G)$. 
Take $x\in V(H')\cap \su{S}{P}$ arbitrarily. 
By Proposition~\ref{prop:nonpositive}, 
there exists a path $Q$ which is  $M$-balanced from $z$ to $x$ or 
$M$-saturated between $z$ and $x$  such that $V(Q)\subseteq V(H')$. 
Trace $Q$ from $z$ and let $y$ be the first vertex we encounter that 
is in $\su{S}{P}$. 
Then, $zPy$ is an $M$-balanced path from $z$ to $y$ 
with $V(zPy)\subseteq V(H')$ and $V(zPy)\cap \su{S}{P} = \{y\}$. 
By D3($S$, $P$), 
for either of $r\in\{u_1,v_1\}$, 
there is an $M$-balanced path $R$ from $y$ to $r$ 
such that $V(R)\setminus \{r\} \subseteq \su{S}{P}$. 

Therefore, $R + zPy + wz$ forms an $M$-ear relative to $G_0$, 
whose end vertices are $r$ and $w$. 
By Lemma~\ref{lem:ear-base}, therefore, $w\in T$ 
and the claim follows. 
\qed
\end{proof}
Immediately by Claim~\ref{claim:sequence2neighbor} 
we can see that for any $H\in\up{G_0}$ 
there exists $T\in \pargpart{G}{G_0}$ such that 
$\Gamma(H)\cap V(G_0)\subseteq T$. 
Hence for each $T\in\pargpart{G}{G_0}$ we can define 
\begin{quotation}
$\mathcal{K}_T := \{ H\in\up{G_0}: V(H)\subseteq V(K) \mbox{ and } 
\Gamma(H)\cap V(G_0)\subseteq T\}$
\end{quotation}
and $V_T:= \bigcup_{H\in\mathcal{K}_T} V(H)$. 
Note that $\bigcup_{T\in\pargpart{G}{G_0}} V_T = V(K)$. 

We are going to prove the claim by showing that 
$|\{ T\in\pargpart{G}{G_0} : V_T \neq \emptyset\}| = 1$. 
Suppose it fails;  
Then, 
since $K$ is connected, 
there exist $T_1, T_2\in\pargpart{G}{G_0}$ with $T_1\neq T_2$ 
such that $E[V_{T_1}, V_{T_2}] \neq \emptyset$. 
Let $s_1\in V_{T_1}$ and $s_2\in V_{T_2}$ be such that 
$s_1s_2\in E[V_{T_1}, V_{T_2}]$. 
\begin{cclaim}\label{claim:sequence2path}
For each $i = 1, 2$, 
there is an $M$-balanced path $L_i$ from $s_i$ to a vertex in $T_i$, say $r_i$, 
such that $V(L_i)\setminus \{r_i\}\subseteq V_{T_i}$. 
\end{cclaim}
\begin{proof}
Let $i\in \{1,2\}$. 
Let $H\in \mathcal{G}(G)$ be such that $s_i\in V(H)$. 
Then, $V(H)\subseteq V_{T_i}$. 
Take an $M$-ear sequence $S = (G_0 = H_0, \ldots, H_k = H)$, where $k\ge 1$, 
 from $G_0$ to $H$  
 and an associated $M$-ears $P = (P_1,\ldots, P_k)$ which satisfy D1, D2 and D3; 
 By Claim~\ref{claim:sequence2neighbor}, 
$\su{S}{P}\subseteq V_{T_i}$.    
By D3, there is an $M$-balanced path $L_i$ from $s_i$ to 
either of the end vertices of $P_1$, say $r_i\in V(G_0)$ 
such that $V(L_i)\setminus \{r_i\} \subseteq \su{S}{P}$.  
Therefore, 
$V(L_i)\setminus \{r_i\} \subseteq V_{T_i}$. 
\qed
\end{proof}
By Claim~\ref{claim:sequence2path}, 
$L_1 + s_1s_2 + L_2$ is an $M$-ear relative to $G_0$, 
whose end vertices are $r_1\in T_1$ and $r_2\in T_2$. 
By Lemma~\ref{lem:ear-base} 
this yields $T_1 = T_2$, a contradiction. 
Therefore, 
we can conclude that there exists $T\in\pargpart{G}{G_0}$ 
such that $V_T = V(K)$, 
namely the claim follows. 
\qed
\end{proof}

By Theorem~\ref{thm:base},
we can see that upper bounds of a factor-component 
are each ``attached'' to an equivalence class of the generalized canonical partition.
%
%partially_prop_union
\begin{proposition}\label{prop:union}
Let $G$ be a graph and $M$ be a matching of $G$. 
Let $H_1, H_2\subseteq G$ be factor-critical subgraphs of $G$ such that 
there exists $v\in V(H_1)\cap V(H_2)$ and that 
for each $i = 1, 2$, 
$M_{H_i}$ is a near-perfect matching of $H_i$ exposing only $v$. 
Then, $H_1\cup H_2$ is factor-critical. 
\end{proposition}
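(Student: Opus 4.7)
The plan is to verify the conclusion directly via Proposition~\ref{prop:path2root}. I would set $N := M_{H_1}\cup M_{H_2}$ and claim that $N$ is a near-perfect matching of $H_1\cup H_2$ exposing only $v$, and that every vertex $u\in V(H_1\cup H_2)$ is joined to $v$ by an $N$-balanced path. Proposition~\ref{prop:path2root} applied to $H_1\cup H_2$ with $N$ then gives factor-criticality.

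First I would check that $N$ is indeed a near-perfect matching of $H_1\cup H_2$ exposing only $v$. Since $M_{H_1}$ and $M_{H_2}$ are both subsets of the single matching $M$ of $G$, their union is automatically a matching; since $M_{H_i}\subseteq E(H_i)$ for each $i$, the union lies in $E(H_1)\cup E(H_2) = E(H_1\cup H_2)$; and since each $M_{H_i}$ covers $V(H_i)\setminus\{v\}$ and exposes $v$, we have that $N$ covers exactly $V(H_1\cup H_2)\setminus\{v\}$ and leaves $v$ exposed.

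Second, for any $u\in V(H_1\cup H_2)$ I would assume without loss of generality that $u\in V(H_1)$. By the factor-criticality of $H_1$ and Proposition~\ref{prop:path2root}, there is an $M_{H_1}$-balanced path $P\subseteq H_1$ from $u$ to $v$. It then suffices to argue that $P$ is also $N$-balanced in $H_1\cup H_2$. Since $M_{H_1}\subseteq N$, the matching edges of $P$ remain matching edges under $N$, and the length and initial-edge conditions transfer immediately. The one thing to check is that no edge of $P$ lying outside $M_{H_1}$ creeps back into $N$ via $M_{H_2}$. Suppose some $e = xy\in E(H_1)\setminus M_{H_1}$ satisfied $e\in M_{H_2}$; then $x, y\in V(H_1)\cap V(H_2)$. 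If $x = v$ or $y = v$, this contradicts the hypothesis that $M_{H_2}$ exposes $v$; otherwise $xy$ and the $M_{H_1}$-edge at $x$ are both edges of the matching $M$ of $G$, forcing the $M_{H_1}$-partner of $x$ to equal $y$ and thus $e\in M_{H_1}$, a contradiction. Hence $P$ is $N$-balanced, and Proposition~\ref{prop:path2root} delivers the conclusion.

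The main obstacle, such as it is, is entirely the compatibility bookkeeping for the two sub-matchings on $V(H_1)\cap V(H_2)$: once one observes that $M$ serving as a common ambient matching of $G$ forces $M_{H_1}$ and $M_{H_2}$ to agree on every shared non-$v$ vertex, there are no hidden matching conflicts and the path produced inside $H_1$ by Proposition~\ref{prop:path2root} lifts verbatim to an $N$-balanced path in $H_1\cup H_2$. No heavier machinery, such as ear-decompositions via Theorem~\ref{thm:odd}, seems necessary.
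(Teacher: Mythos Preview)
Your proposal is correct and follows essentially the same approach as the paper: set $N=M_{H_1}\cup M_{H_2}$, observe it is a near-perfect matching of $H_1\cup H_2$ exposing only $v$, and invoke Proposition~\ref{prop:path2root} via a balanced path inside whichever $H_i$ contains the chosen vertex. The paper compresses all of this into two sentences, whereas you spell out the compatibility of $M_{H_1}$ and $M_{H_2}$ on the overlap; that extra bookkeeping is sound and arguably clarifies what the paper's ``apparently'' is hiding.
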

\begin{proof}
Apparently, $M_1\cup M_2$ is a near-perfect matching of $H_1\cup H_2$, 
exposing only $v$. 
Since $H_1$ and $H_2$ are both factor-critical, 
the claim follows by Proposition~\ref{prop:path2root}. 
\qed
\end{proof}

%partially_lem_proof_ideal
\begin{lemma}\label{lem:ideal}
Let $G$ be a factorizable graph, and $H\in\mathcal{G}(G)$. 
Then, $G[\upstar{H}]/H$ is factor-critical. 
\end{lemma}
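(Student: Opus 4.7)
The plan is to realize $G[\upstar{H}]/H$ as containing a spanning factor-critical subgraph obtained by gluing together, along the contracted vertex $h$, the factor-critical subgraphs $G[X_{H'}]/H$ that come from critical-inducing sets for each individual pair $(H,H')$ with $H'\in\up{H}$. The gluing will be handled by Proposition~\ref{prop:union}, whose placement immediately before this lemma suggests it is the intended tool.

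I first fix a perfect matching $M$ of $G$. Since $V(\upstar{H})$ is a union of factor-components and hence so is $V(\upstar{H})\setminus V(H)$, Proposition~\ref{prop:separating} tells me that the restriction of $M$ to $V(\upstar{H})\setminus V(H)$ is a perfect matching of $G[V(\upstar{H})\setminus V(H)]$; viewed in $G[\upstar{H}]/H$ this is a near-perfect matching exposing only $h$. Next, for each $H'\in\up{H}$, the relation $H\yield H'$ furnishes a critical-inducing set $X_{H'}$ for $H$ to $H'$: a separating set containing $V(H)\cup V(H')$ with $G[X_{H'}]/H$ factor-critical. By Proposition~\ref{prop:separating}, $X_{H'}$ is a disjoint union of factor-components; and for every factor-component $I\subseteq X_{H'}$ distinct from $H$, the same set $X_{H'}$ serves as a critical-inducing set for $H$ to $I$, so $I\in\up{H}$ and therefore $X_{H'}\subseteq V(\upstar{H})$. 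Thus $G[X_{H'}]/H$ is a factor-critical subgraph of $G[\upstar{H}]/H$ containing $h$, with $M$ inducing a near-perfect matching exposing only $h$.

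Applying Proposition~\ref{prop:union} iteratively to the finite family $\{G[X_{H'}]/H : H'\in\up{H}\}$, whose members all share $h$ as the exposed vertex under the same matching $M$, I conclude that their union $F$ is factor-critical. Since $V(H')\subseteq X_{H'}$ for every $H'\in\up{H}$ and $V(G[\upstar{H}]/H) = \{h\}\cup\bigcup_{H'\in\up{H}} V(H')$, $F$ is a spanning subgraph of $G[\upstar{H}]/H$. Finally, any graph that contains a spanning factor-critical subgraph is itself factor-critical, since deleting any vertex inherits the same perfect matching; hence $G[\upstar{H}]/H$ is factor-critical. The step that most deserves attention is the containment $X_{H'}\subseteq V(\upstar{H})$: once this is established, the remaining argument is routine bookkeeping around Proposition~\ref{prop:union}.
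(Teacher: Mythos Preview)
Your proof is correct and follows essentially the same route as the paper: take the critical-inducing sets for $H$, observe that their union is exactly $V(\upstar{H})$, and apply Proposition~\ref{prop:union} at the common contracted vertex $h$ to conclude that the contracted graph is factor-critical. The only cosmetic difference is that the paper works with the entire family $\mathcal{X}$ of critical-inducing sets for $H$ at once, whereas you select one $X_{H'}$ per upper bound $H'$; your explicit verification of the containment $X_{H'}\subseteq V(\upstar{H})$ is a detail the paper leaves implicit.
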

\begin{proof}
Let $M$ be a perfect matching of $G$. 
Let $\mathcal{X}\subseteq 2^{V(G)}$ be the family of 
separable set for $H$. 
Then, by Theorem~\ref{thm:order}, 
$\bigcup_{X\in\mathcal{X}} X = V(\upstar{H})$. 
On the other hand, $G[\bigcup_{X\in\mathcal{X}} X ]/H$ is factor-critical 
by Proposition~\ref{prop:union}. 
Therefore,  the claim follows. 
\qed
\end{proof}

%partially_thm_2fc
\begin{theorem}
Let $G$ be a factorizable graph, and let $H\in\mathcal{G}(G)$ 
and $S\subseteq \pargpart{G}{H}$. 
Let $K_1,\ldots, K_l$, where $l \ge 1$ be some connected components of $G[\up{H}]$ 
such that $\Gamma(K_i)\cap V(H)\subseteq S$ for $i = 1,\ldots, l$. 
Then, $G[ V(K_1)\cup\cdots\cup V(K_l) \cup S]/S$ is factor-critical. 
\end{theorem}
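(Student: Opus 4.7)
The plan is to apply Proposition~\ref{prop:path2root}. Fix a perfect matching $M$ of $G$, set $X := V(K_1)\cup\cdots\cup V(K_l)$, and identify $S$ with the vertex set $\bigcup_{T\in S} T\subseteq V(H)$ so that the target graph is $G^\ast := G[X\cup S]/S$ with $s$ denoting the contracted vertex. Each $V(K_i)$, being a connected component of $G[\up{H}]$, is a disjoint union of vertex sets of factor-components in $\up{H}$, hence a separating set by Proposition~\ref{prop:separating}. Consequently $M_X$ is a perfect matching of $G[X]$. Since $V(H)$ too is separating, no edge of $M$ crosses between $X$ and $S\subseteq V(H)$, and $M_X$ therefore descends to a near-perfect matching $M^\ast$ of $G^\ast$ exposing exactly $s$.

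By Proposition~\ref{prop:path2root} it then suffices to produce, for every $u\in X$, an $M^\ast$-balanced path from $u$ to $s$ in $G^\ast$. Lemma~\ref{lem:ideal} gives that $V(\upstar{H})$ is a critical-inducing set for $H$, so Lemma~\ref{lem:path2base} yields an $M$-balanced path $P$ from $u$ to some $y\in V(H)$ whose vertices other than $y$ all lie in $V(\up{H})$.

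The only real point of care is to localize $P$ inside the particular component $K_i$ containing $u$. Writing $P$ as $u = p_0, p_1, \ldots, p_k = y$, the vertices $p_0, \ldots, p_{k-1}$ all belong to $V(\up{H})$ and are joined consecutively by the first $k-1$ edges of $P$; these edges therefore lie within $G[\up{H}]$. Thus $\{p_0, \ldots, p_{k-1}\}$ spans a connected subgraph of $G[\up{H}]$ containing $u$, forcing it to sit inside the single connected component $K_i$. The final edge $p_{k-1}y$ then witnesses $y\in \Gamma(K_i)\cap V(H)\subseteq S$, and contracting $S$ to $s$ turns $P$ into an $M^\ast$-balanced path from $u$ to $s$ in $G^\ast$. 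Proposition~\ref{prop:path2root} now concludes the argument. The principal obstacle is precisely this localization step; once it is noticed, everything else reduces to routine bookkeeping with the already-established results.
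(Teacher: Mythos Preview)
Your proof is correct and takes a somewhat different route from the paper's. Both arguments rest on Lemma~\ref{lem:ideal}, which supplies the factor-criticality of $G[\upstar{H}]/H$. From there the paper invokes Proposition~\ref{prop:fc_block}: since the connected components of $G[\up{H}]$ are exactly what one obtains by deleting the contracted vertex $h$ from $G[\upstar{H}]/H$, the induced subgraph of $G[\upstar{H}]/H$ on $V(K_1)\cup\cdots\cup V(K_l)\cup\{h\}$ inherits factor-criticality blockwise; one then observes (implicitly, using the hypothesis $\Gamma(K_i)\cap V(H)\subseteq S$) that this induced subgraph is isomorphic to $G[X\cup S]/S$. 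Your argument bypasses the block decomposition entirely: you pull the balanced path guaranteed by Lemma~\ref{lem:path2base} back from the full graph $G[\upstar{H}]/H$, localize it to the component $K_i$ containing $u$ by a connectivity argument, and feed the resulting path directly into Proposition~\ref{prop:path2root}. Your approach is more explicit and makes the role of the neighborhood hypothesis transparent (it is precisely what forces the terminal vertex $y$ to lie in $S$); the paper's approach is more structural but terser, leaving the reader to supply the identification of $G[X\cup S]/S$ with the relevant induced subgraph of $G[\upstar{H}]/H$.
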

\begin{proof}
First note that $G[\upstar{H}]/H$ is factor-critical by Lemma~\ref{lem:ideal}.
Let $h$ be the contracted vertex of $G[\upstar{H}]/H$.
Note also that $K$ is a connected component of $G[\up{H}]$ if and only if 
there is a block $\what{K}$ of $G[\upstar{H}]/H$
such that  $K = \what{K} -h$. 
Therefore, by Proposition~\ref{prop:fc_block} 
the claim follows. 
\qed
\end{proof}

\begin{remark}
%If $G$ is a bipartite factorizable graph,
%no elementary components $G_1\yield G_2$ satisfies $G_1\yield G_2$.
%Thus Theorem~\ref{thm:order} is not a generalization of 
%the DM-decomposition, even though they have similar natures.
There are factorizable graphs where 
$\yield$ does not hold for any two factor-components,
in other words, where all the factor-components are minimal in the poset.
For example, we can see by Theorem~\ref{thm:equivalentdefinition} and 
Theorem~\ref{thm:base} that bipartite factorizable graphs are such,
which means Theorem~\ref{thm:order} is not a generalization of 
the DM-decomposition, even though they have similar natures.
\end{remark}

The following theorem shows that 
most of the factorizable graphs with $|\mathcal{G}(G)| \ge 2$, in a sense,
have non-trivial structures as posets. 
%
%\partially_thm_add
\begin{theorem}\label{thm:add}
Let $G$ be a factorizable graph,
$G_1, G_2 \in \mathcal{G}(G)$ be
 factor-components for which  $G_1\yield G_2$ does not hold,
  and % and $G_1, G_2$ are minimal element of 
let $G_1$ be minimal in the poset $(\mathcal{G}(G), \yield)$.
%$G_1\in\mathcal{G}(G)$ be  minimal in the poset $(\mathcal{G}(G), \yield)$,
%and $G_2\in\mathcal{G}(G)$  such that $G_1\yield G_2$ does not hold$.
Then there are possibly identical complement edges $e, f$ of $G$  between
$G_1$ and $G_2$
such that
$\mathcal{G}(G + e + f)  = \mathcal{G}(G)$ and 
$G_1\yield G_2$ in $(\mathcal{G}(G+e+f), \yield)$.
\end{theorem}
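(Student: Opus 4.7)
Fix a perfect matching $M$ of $G$ and a matched pair $u_2v_2\in M_{G_2}$, so that $u_2v_2$ is by itself an $M$-saturated path of length one inside $G_2$. I propose to adjoin the two complement edges $e:=u_1u_2$ and $f:=v_1v_2$ joining $V(G_1)$ to $V(G_2)$, for a pair $u_1,v_1\in V(G_1)$ to be chosen (allowing $u_1=v_1$, in which case the witness becomes a circuit ear in place of a path ear). Once $e,f\notin E(G)$, the three-edge subgraph $P$ on the vertices $u_1,u_2,v_2,v_1$ is immediately an $M$-ear of $G+e+f$ relative to $G_1$ and through $G_2$: its end vertices lie in $V(G_1)$, its internal vertices $u_2,v_2$ lie in $V(G_2)$, and the single internal edge $u_2v_2$ belongs to $M$, so the internal subpath is $M$-saturated. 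By Theorem~\ref{thm:equivalentdefinition}, this length-one $M$-ear sequence already delivers $G_1\yield G_2$ in $G+e+f$.

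To preserve the family of factor-components it suffices to ensure that neither $e$ nor $f$ is allowed in $G+e+f$; every perfect matching of $G+e+f$ will then already be a perfect matching of $G$, so the two allowed-edge subgraphs coincide and hence so do their connected components. Using the familiar equivalence that an edge $xy$ of a graph $G'$ is allowed iff $G'-x-y$ is factorizable, and splitting on whether a hypothetical perfect matching containing one new edge also uses the other, the joint non-allowance of $e$ and $f$ in $G+e+f$ unpacks into three $M$-saturated-path non-existence conditions in $G$ itself: no $M$-saturated path joins $u_1$ to $u_2$; none joins $v_1$ to $v_2$; and no pair of vertex-disjoint $M$-saturated paths pairs $\{u_1,v_1\}$ with $\{u_2,v_2\}$.

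\textbf{Main obstacle.} The heart of the proof is to secure $u_1,v_1\in V(G_1)$, with $u_1u_2$ and $v_1v_2$ outside $E(G)$, so that all three conditions above hold simultaneously. My plan is contrapositive. Suppose, say, an $M$-saturated path $P$ from some $u_1\in V(G_1)$ to $u_2\in V(G_2)$ exists. Setting $X:=V(G_1)\cup V(G_2)$, the restriction $M_X$ is a perfect matching of $G[X]$, and Proposition~\ref{prop:ear2relative}(iv) decomposes $P-E(G[X])$ into $M$-ears of $G$ relative to $X$. Tracing $P$ from $u_1$ and singling out the last vertex it meets in $V(G_1)$ just before its first vertex in $V(G_2)$ isolates a crossing subpath, which, when concatenated with the in-$G_2$ matched edge $u_2v_2$, a matched return inside $G_1$, and an inductive absorption of the intermediate factor-components in the spirit of Lemma~\ref{lem:fc2bigger} and Proposition~\ref{prop:adjoin-ear}, assembles an $M$-ear sequence from $G_1$ to $G_2$ in $G$ itself. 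By Theorem~\ref{thm:equivalentdefinition} this contradicts $G_1\not\yield G_2$. The minimality of $G_1$ is used symmetrically to rule out the dual sequence pointing strictly below $G_1$, and an analogous disjoint-paths argument disposes of the third condition. In the unfavourable scenario where every vertex of $V(G_1)$ is a graph-neighbour of $u_2$ or $v_2$, one collapses to $u_1=v_1$ and uses the circuit ear $u_1u_2v_2u_1$; the verification is structurally identical. Once admissible $u_1,v_1$ are secured, the first two paragraphs complete the proof.
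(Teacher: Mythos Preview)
There are two genuine gaps.

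First, when $u_1\neq v_1$ your third condition is incomplete. A perfect matching of $G+e+f$ containing both $e$ and $f$ restricts to a perfect matching of $G-\{u_1,u_2,v_1,v_2\}$, which by symmetric difference with $M$ corresponds to two vertex-disjoint $M$-saturated paths pairing up the four deleted vertices. There are \emph{three} such pairings, and you omit $(u_1,v_1)(u_2,v_2)$. Since $u_2v_2\in M$ by your own choice, that pairing is realised whenever $G-u_2-v_2$ contains an $M$-saturated $u_1$--$v_1$ path; if $u_1$ and $v_1$ lie in different members of $\gpart{G_1}$ such a path already exists inside $G_1$. So your scheme essentially forces $u_1\gsim v_1$, a constraint you never isolate or verify.

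Second, and more seriously, the contrapositive you invoke is false as stated. Take the bipartite graph on $\{a_i,b_i:i=1,2,3\}$ with $M=\{a_ib_i:i=1,2,3\}$ and two extra (non-allowed) edges $b_1a_3$ and $b_3a_2$. The three $M$-edges are three distinct factor-components, all minimal in $(\mathcal{G}(G),\yield)$; in particular $G_1=\{a_1,b_1\}$ is minimal and $G_1\not\yield G_2=\{a_2,b_2\}$. Yet $a_1b_1a_3b_3a_2b_2$ is an $M$-saturated path from $a_1\in V(G_1)$ to $b_2\in V(G_2)$. No $M$-ear sequence from $G_1$ to $G_2$ can be ``assembled'' from it --- there is no edge from $V(G_2)$ back to $V(G_1)$ for your ``matched return'' --- and nothing contradicts the minimality of $G_1$. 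What you actually need is not that such paths are impossible, but that some $u_1\in V(G_1)$ avoids them all (here $u_1=b_1$ works); that existence argument is exactly what your sketch omits. The paper proceeds quite differently: it does not fix an $M$-edge of $G_2$, but instead adds a single tentative edge and, whenever it becomes allowed, traces the resulting $M$-alternating circuit to produce an $M$-ear relative to $G_2$ through $G_1$, contradicting the minimality of $G_1$ directly.
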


\begin{proof}%[Theorem~\ref{thm:add}]
%Suppose $G_1\not\yield G_2$ in $G$.
First we prove  the case where there is an edge $xy$ such that $x\in V(G_1)$ and $y\in V(G_2)$.
Let $M$ be a perfect matching of $G$.
Choose a vertex $w\in V(G_2)$ such that  $w\not\gsim y$ in $G_2$,
and let $P$ be an $M$-saturated path between $w$ and $y$.
If $xw\in E(G)$,
there is an $M$-ear $xy + P + wx$ relative to $G_1$ and through $G_2$,
which means $G_1\yield G_2$ by Theorem~\ref{thm:equivalentdefinition}.
%a contradiction by Theorem~\ref{thm:equivalentdefinition}. 
% 
Thus $xw\not\in E(G)$.
%If $\mathcal{G}(G+xw) = \mathcal{G}(G)$, the claim follows.
%
Suppose $\mathcal{G}(G+xw) \neq \mathcal{G}(G)$.
%Otherwise, that is, if $\mathcal{G}(G+xw) \neq \mathcal{G}(G)$,
% there is an $M$-alternating circuit $C$ containing $xw$ in $G + xw$.
%Theorefore it suffice to consider the case where $\mathcal{G}(G+xw) \neq \mathcal{G}(G)$. 
Then there is an $M$-alternating circuit $C$ containing $xw$ in $G + xw$.
Give an orientation to $C$ so that it becomes a dicircuit with the arc $xx'$.
Trace $C$ from $x$  
and let $z$ be the first vertex we encounter that is in $V(G_2)$. 
Then $xy + xCz$ is an $M$-ear of $G$ which is relative to $G_2$
and through $G_1$, which means $G_2\yield G_1$ by Theorem~\ref{thm:equivalentdefinition},
 a contradiction to the minimality of $G_1$. 
Thus $\mathcal{G}(G+xw) = \mathcal{G}(G)$ and  we are done for this case.

Now we prove the other case,
where there is no edge of $G$ connecting $G_1$ and $G_2$.
Choose any $x\in V(G_1)$ and $y\in V(G_2)$.
If $\mathcal{G}(G + xy ) = \mathcal{G}(G)$, 
we can reduce it to the first case and the claim follows.
%
%Suppose $\mathcal{G}(G + xy ) \neq \mathcal{G}(G)$.
Therefore it suffices to consider the case that $\mathcal{G}(G + xy ) \neq \mathcal{G}(G)$. 
%If $\mathcal{G}(G + xy ) \neq \mathcal{G}(G)$,
Then, for any perfect matching $M$ of $G$,
there is an $M$-alternating circuit $C$  in $G+xy$ containing $xy$.
Give an orientation to $C$ so that it becomes a dicircuit with the arc $yy'$.
Trace $C$ from $y$ 
and let $u$ be the first vertex of $G_1$, 
and let $v$ be the first vertex in $G_2$ 
if we trace $C$ from $u$ in the opposite  direction.

If $\mathcal{G}(G + uv) = \mathcal{G}(G)$, the claim follows by the same argument.
Otherwise, that is, if  $\mathcal{G}(G + uv) \neq \mathcal{G}(G)$,
there is an $M$-alternating circuit  $D$ containing $uv$.
Give an orientation to $D$ so that it becomes a dicircuit with the  arc $uu'$.
If $uDv$ is disjoint from the internal vertices of $vCu$, 
then $uDv + vCu$ forms an $M$-alternating circuit containing 
non-allowed edges, a contradiction. 
Otherwise, 
trace $D$ from $u$  
and  let $w$ be the first vertex on $vCu - u$.

If $wCu$ has an even number of edges, 
$wCu + uDw$ is an $M$-alternating circuit of $G$, 
a contradiction.
Therefore, we assume $wCu$ has an odd number of edges.  
Let  $H\in\mathcal{G}(G)$ be such that  $w\in V(H)$.
Then $wCu + uDw - H$ leaves 
an $M$-ear in $G$  which is relative to $H$ and through $G_1$, 
% a contradiction to
contradicting  
the minimality of $G_1$. Thus this completes the proof. %Thus we are done for the second case.
\qed
\end{proof}
%\input{isaac2012_proof_of_base}
%\input{isaac2012_sec_cor_base2add}
%\input{isaac2012_proof_of_add}

%\input{isaac2012_1_sec_cor}
%\section{the correlation between generalized canonical partition and the order $\yield$}
%
%
\if0%%%%%%%%%%%%%%%%%%%%%%%%%%%%%%%%%
\section{Correlations between
 the canonical partition and the partial order}
From now on, we introduce some statements to show
a correlation between the generalized canonical partition and maximal barriers.
\input{isaac2012_1_sec_cor}
\fi%%%%%%%%%%%%%%%%%%%%%%%%%%%%%%%%%%%
%\section{Algorithmic Remarks}
\section{Algorithmic Result}
%\input{isaac2012_1_sec_algeff} 
%%\input{isaac2012_1_sec_alg}
%isaac2012_sec_alg_op2partalg.tex
In this section, we discuss the  algorithmic aspects 
of the partial order and the generalized canonical partition.
We denote by $n$ and $m$ respectively the number of vertices 
and edges of input graphs. %, while discussing the running time.
As we work on factorizable graphs and graphs with near-perfect matchings,
we can assume $m = \mathrm{\Omega}(n)$.

We start with  some materials from Edmonds' maximum matching algorithm~\cite{edmonds1965},
referring mainly to~\cite{lp1986, kv2008}.
For a tree $T$ with a specified root vertex $r$, %called root, 
we call a vertex  $v\in V(T)$ \textit{inner} (resp. \textit{outer})
if the unique path  in $T$ from $r$ to $v$ has  an odd (resp. even) number of edges.
%a vertex  $v$ of $T$  such that  the unique path  in $T$ from $r$ to $v$ 
%has odd (resp. even) number of edges is called \textit{inner} (resp. \textit{outer}).
Let $G$ be a graph and $M$ be a matching of $G$.
A tree $T\subseteq G$  is called \textit{$M$-alternating} if 
exactly one vertex of it, the  root, is exposed by $M$ in $G$, and
%only the root vertex $r\in V(T)$ is exposed by $M$ in $G$, and 
each inner vertex $v\in V(T)$ satisfies  %$|\Gama(v)\cap V(T)| = 2$
$|\delta(v)\cap E(T)|= 2$  and one of the edges of $\delta(v)\cap E(T)$ 
is contained in $M$.

A subgraph $S \subseteq G$ is called a \textit{special blossom tree with respect to $M$}
(\textit{$M$-SBT})
if  there is a partition  %$\what{T}$ 
 $V(C_1)\dot{\cup}\cdots\dot{\cup} V(C_k) = V(S)$ such that 
\begin{enumerate}
\item $S' := S/C_1/\cdots /C_k$ is an $M$-alternating tree,
\item $M_{C_i}$ is a near-perfect matching of $C_i$,
%\item $|V(C_i)|>1$ only if it corresponds to an outer vertex of $S'$,  
\item $C_i$ is a maximal factor-critical subgraph of $G$ if 
it corresponds to an outer vertex of $S'$, 
and called  an \textit{outer blossom}, and 
\item $|V(C_i)|>1$ only if $C_i$ is an outer blossom,
for each $i = 1,\ldots ,k$.
\end{enumerate}
%
%Additionally, if $\hat{T}'$ is an maximal $M$-alternating tree of 
%$G/C_1/\ldots/C_k$, we call it \textit{maximal}.
%
%
\if0%%%%%%%%%%%%%%%%%%%%%%%%%%%%%%%%%%%%%
Let $x$ and $y$ be two outer vertices of an $M$-alternating tree $T$ of $G$
where $xy\in E(G)\setminus E(T)$.
Let $z$ be a the nearest common ancestor of $x$ and $y$.
Let the path $P_x$ (resp. $P_y$) from $z$ to $x$ (resp. $y$) on $T$.
Then $C:= P_x + P_y + xy$ froms an $M$-alternating ear relative to $z$.
We call the operation on $T$ contracting $C$ to one vertex
as \textit{shrinking}.
Note that one shrinking operation results in an $M$-alternating tree
with newly created vetex as an outer vertex.
Given a maximal $M$-alternating tree $T$ of $G$,
we call an $M$-alternating tree obtained by 
maximally applying shrinking operation to $T$ 
as \textit{maximal special blossom tree with respect to} $M$
(maximal $M$-SBT) of $G$.
We call 
a subgraph of $G$ induced by 
the set of vertices that corresponds to 
an outer vertex of an $M$-SBT as \textit{blossom}.
Ginven an $M$-SBT $T$ of $G$,
we also call that a vertex $v\in V(G)$ that is contained in a blossom of $T$ 
as outer vertex of $T$.

%a set of vertices of $G$ that 
%corresponds to an outer vertex of a maximal special blossom tree with respect to $M$
%as \textit{blossom}.
Note that a blossom  is a factor-critical by Theorem~\ref{thm:odd},
and that $M$ contains the near-perfect matching of it.
\fi%%%%%%%%%%%%%%%%%%%%%%%%%%%%%%%%%%%%%%%%%%%%%%%%%%%%%%%%%%%

Edmonds' maximum matching algorithm  tells us the following facts.
Let $G$ be a graph, $M$ be a near-perfect matching of $G$,
and $r\in V(G)$ be the vertex exposed by $M$.
Then an $M$-SBT $S$,  with root $r$,
can be computed, if it is carefully implemented~\cite{tarjan1983, gt1985},
in $O(m)$ time.
Additionally, the set of vertices from which $r$ can be reached 
by an $M$-balanced path is exactly 
the set of vertices contained in the outer blossoms of $S$.

%Let $G$ be a graph and $M$ be its maximum matching,
%and $S\subseteq V(G)$ be the set of vertices that is exposed by $M$.
%Then the set of vertices from which there is an $M$-balanced path to 
%a vertex in $S$, and the one that is 
%reachable from $S$ by an $M$-exposed path
%can be computed in polynomial time.
%
%Moreover, if it is carefully implemented,
%they can be computed in $O(m)$ time.
%
Thus, due to an easy reduction of the above facts,
the following proposition holds; they can be regarded as a 
folklore. See \cite{cc2005}.
(In \cite{cc2005} they  are presented as 
those for elementary graphs, but in fact,
they can be applicable for general factorizable graphs.) 
\begin{proposition}\label{prop:pathalg}
Let $G$ be a factorizable graph, $M$ be a perfect matching of $G$,
and $u\in V(G)$.
%Then,
\begin{enumerate}
\item  The set of vertices that can be reached from $u$ by 
an $M$-saturated path can be computed in $O(m)$ time.
\item All the allowed edges adjacent to $u$ can be computed in $O(m)$ time.
\item All the factor-components of $G$ can be computed in $O(nm)$ time.
\end{enumerate}
\end{proposition}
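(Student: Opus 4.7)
The plan is to reduce each claim to the stated $O(m)$-time construction of an $M$-SBT together with its characterization of vertices reachable by $M$-balanced paths. The common reduction is as follows: for the given perfect matching $M$ and chosen vertex $u$, consider $G - u$ with the matching $M - uu'$, which is a near-perfect matching exposing only $u'$; Edmonds' machinery then applies and yields an $M$-SBT rooted at $u'$ in $O(m)$ time.

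For item (1), I would observe that every $M$-saturated path $P$ from $u$ to some $v \neq u'$ must begin with the $M$-edge $uu'$ (since $u$ is covered by $M \cap E(P)$ and the only $M$-edge at $u$ is $uu'$), and that removing this initial edge produces a path $P' = P - u$ in $G - u$ of even length whose last edge (at $v$) lies in $M$ while its first edge (at $u'$) does not; hence $P'$ is an $M$-balanced path from $v$ to $u'$ in $G - u$. Conversely, any such $M$-balanced path from $v$ to $u'$ extends by $uu'$ to an $M$-saturated path from $u$ to $v$ in $G$. Together with the trivial case $v = u'$, this shows that the set of vertices reachable from $u$ by an $M$-saturated path is exactly the union of outer blossoms of an $M$-SBT of $G - u$ rooted at $u'$, which is computed in $O(m)$ time by the cited implementation of Edmonds' algorithm.

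For item (2), the edge $uu' \in M$ is trivially allowed, and by Proposition~\ref{prop:allowed2circuit} any other edge $uv \in E(G) \setminus M$ is allowed if and only if there is an $M$-saturated path between $u$ and $v$. Having computed the reachable set $R$ from item (1) once in $O(m)$ time, the allowed edges incident to $u$ are therefore exactly $\{uu'\} \cup \{uv \in E(G) \setminus M : v \in R\}$, which can be read off by a single scan of $\delta(u)$.

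For item (3), I would apply item (2) at each of the $n$ vertices in turn, obtaining the full set $A$ of allowed edges in $O(nm)$ time; by definition, the factor-components of $G$ are the vertex sets of the connected components of $(V(G), A)$, identifiable in $O(n + m)$ time by a standard traversal. The main obstacle is not algorithmic depth but merely making the reduction in item (1) precise — once the bijection between $M$-saturated paths from $u$ and $M$-balanced paths to $u'$ in $G - u$ is in place, the heavy lifting is already supplied by the $O(m)$-time SBT computation and Proposition~\ref{prop:allowed2circuit}.
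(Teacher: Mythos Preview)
Your proposal is correct and is precisely the ``easy reduction'' the paper alludes to without spelling out: the paper gives no formal proof here, treating the proposition as folklore derived from the $O(m)$ SBT construction and the outer-blossom characterization of $M$-balanced reachability, and your reduction via $G-u$ with the near-perfect matching $M-uu'$ exposing $u'$ is exactly how that reduction goes. The bijection you set up between $M$-saturated $u$--$v$ paths in $G$ and $M$-balanced $v$--$u'$ paths in $G-u$ is sound, and items (2) and (3) follow immediately as you describe.
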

\if0
Thus we can easily see the following.
\begin{proposition}\label{prop:elemalg}
Let $G$ be a factorizable graph, $M$ be a perfect matching of $G$,
and $e\in E(G)$.
Then whether $e$ is allowed or not can be examined in $O(m)$ time.
Consequently, elementary components of $G$ can be computed in
$O(nm)$ time.
\end{proposition}
\fi

\begin{proposition}\label{prop:partalg}
Given  a factorizable graph $G$, one of its perfect matchings $M$ and $\mathcal{G}(G)$, 
we can compute the generalized canonical partition of $G$ 
in $O(nm)$ time.
\end{proposition}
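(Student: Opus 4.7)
The plan is to reduce the computation of the generalized canonical partition to $M$-saturated-path reachability, which Proposition~\ref{prop:pathalg}(1) already handles in $O(m)$ time per source vertex. The key characterization I would first establish is that for distinct $u,v\in V(H)$ with $H\in \mathcal{G}(G)$, the relation $u\gsim v$ holds if and only if there is no $M$-saturated path between $u$ and $v$ in $G$. One direction: if $P$ is such a path, then $M\triangle E(P)$ is a matching of $G$ that covers exactly $V(G)\setminus\{u,v\}$, hence is a perfect matching of $G-u-v$, so $u\not\gsim v$. For the converse, given any perfect matching $N$ of $G-u-v$, in the symmetric difference $M\triangle N$ precisely $u$ and $v$ have odd degree, so the component containing $u$ is a path ending at $v$ whose alternation pattern forces it to be $M$-saturated.

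Given this characterization, the algorithm has two phases. First, for each $v\in V(G)$, I would invoke Proposition~\ref{prop:pathalg}(1) to compute the set $R(v)\subseteq V(G)$ of vertices reachable from $v$ by an $M$-saturated path; this costs $O(m)$ per source and $O(nm)$ overall. Second, for each factor-component $H\in\mathcal{G}(G)$ (supplied as input), I would partition $V(H)$ into the classes of $\gsim$ by repeatedly picking an unclassified $u\in V(H)$ and declaring its class to be the intersection of $\{u\}\cup \bigl(V(H)\setminus R(u)\bigr)$ with the vertices not yet classified. Theorem~\ref{thm:generalizedcanonicalpartition} guarantees that $\gsim$ really is an equivalence relation on $V(H)$, so this procedure produces the correct partition, and the bookkeeping costs $O\bigl(\sum_H |V(H)|^2\bigr) = O(n^2) = O(nm)$ since $m=\Omega(n)$.

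I do not foresee a serious obstacle: the only non-routine ingredient is the symmetric-difference characterization of $\gsim$ above, which is independent of the harder structural machinery developed earlier in the paper. The overall running time is dominated by the $n$ calls to the $O(m)$-time $M$-saturated-path reachability subroutine, yielding $O(nm)$ as claimed.
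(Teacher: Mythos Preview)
Your proposal is correct and follows essentially the same approach as the paper: both compute, for each vertex $v$, the set of vertices reachable from $v$ by an $M$-saturated path via Proposition~\ref{prop:pathalg}, then declare the $\gsim$-class of $v$ inside its factor-component $H$ to be $V(H)$ minus that reachable set, invoking Theorem~\ref{thm:generalizedcanonicalpartition} for correctness. The only difference is that you spell out the folklore symmetric-difference characterization of $\gsim$ explicitly, whereas the paper takes it as understood.
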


\begin{proof}%[Proposition~\ref{prop:pathalg}]
%First fix an  perfect matching $M$ of $G$ and compute the elemtary components of $G$.
For each  $H\in \mathcal{G}(G)$,
we can compute $\pargpart{G}{H}$ 
 in a similar way to compute the canonical partition of 
an elementary graph~\cite{cc2005}.
That is, 
for each $v\in V(H)$,
compute the set of vertices $U$ that can be reached from $v$ by an 
$M$-saturated path,
and recognize $V(H)\setminus U$ as a member of $\pargpart{G}{H}$.
This procedure is surely compatible 
by  Theorem~\ref{thm:generalizedcanonicalpartition}.
Thus, the claim follows by Proposition~\ref{prop:pathalg}.
\qed
\end{proof}

%isaac2012_sec_alg_partalg2orderalg.tex
Let $G$ be a factorizable graph and $M$ be a perfect matching of $G$.
We say two distinct factor-components $G_1, G_2$ of $G$ with $G_1\yield G_2$
 are \textit{non-refinable}
if $G_1\yield H \yield G_2$ yields $G_1 = H$ or $G_2 = H$ for any $H\in \mathcal{G}(G)$.
Note that if $G_1$ and $G_2$ are non-refinable,
then there is an $M$-ear relative to $G_1$ and through $G_2$ 
by Theorem~\ref{thm:equivalentdefinition}.
Note also that the converse of the above fact does not hold.
\begin{lemma}\label{lem:orderalg}
Let $G$ be a factorizable graph, $M$ be a perfect matching of $G$,
 and $H\in \mathcal{G}(G)$.
Let $S$ be a maximal $M$-SBT in $G/H$ 
%with the root vertex corresponding to the contracted vetex $h$,
 and let
%A vertex $x\in V(G)\setminus V(H)$ is contained in a $M$-ear  relative to $H$
%$x$ is contained in a blossom of a maximal special blossom tree with respect to $M$
%in $G/H$ which corresponds to the root.
$C$ be the blossom of $T$  containing the contracted vertex $h$ corresponding to $H$.
Then any non-refinable upper bound  of $H$ in $(\mathcal{G}(G), \yield)$ 
has common vertices with $C$.
Additionally, if a factor-component $I\in\mathcal{G}(G)$ 
has some common vertices with $C$,
then $H\yield I$.   
\end{lemma}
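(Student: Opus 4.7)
The plan is to identify $V(C)$ with an explicit critical-inducing set for $H$ and then exploit properties of the root blossom of the maximal SBT. Set $Y := V(H) \cup (V(C) \setminus \{h\})$, viewed as a vertex set of $G$. Using that $M_H$ is a perfect matching of $G[V(H)]$ (since $H$ is a factor-component) and that, by the SBT definition, $M$ induces a near-perfect matching of $C$ that exposes only $h$, every vertex of $Y$ is $M$-matched to another vertex of $Y$, so $M_Y$ is a perfect matching of $G[Y]$. By Proposition~\ref{prop:separating}, $Y$ is a separating set. Combined with $G[Y]/H = C$ being factor-critical, $Y$ is a critical-inducing set for $H$. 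This already yields the ``additionally'' direction: given $I \in \mathcal{G}(G)$ with $V(I) \cap V(C) \neq \emptyset$ and $I \neq H$, the set $V(I)$ meets $V(C) \setminus \{h\} \subseteq Y$, so separability forces $V(I) \subseteq Y$, which makes $Y$ a critical-inducing set for $H$ to $I$ and gives $H \yield I$.

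For the first direction, let $I$ be a non-refinable upper bound of $H$. The shortest $M$-ear sequence from $H$ to $I$ must have length exactly one: a sequence $(H, H_1, \ldots, H_k = I)$ with $k \geq 2$ would provide, via Theorem~\ref{thm:equivalentdefinition}, an intermediate factor-component $H_1$ with $H \yield H_1 \yield I$ and $H_1$ distinct from both endpoints, contradicting non-refinability. Let $P$ be the associated $M$-ear of $G$, relative to $H$ and through $I$. Contracting $H$ collapses the two endpoints of $P$ to $h$, producing an $M$-alternating circuit through $h$ in $G/H$ whose non-$h$ vertices are exactly the internal vertices of $P$ and include at least one vertex of $V(I)$. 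The final step is to observe that every such circuit lies in $V(C)$ by maximality of the SBT, whence $V(I) \cap V(C) \neq \emptyset$.

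The main obstacle is this last observation: that every $M$-alternating circuit through the exposed root $h$ in $G/H$ is absorbed into the root blossom of a maximal SBT. This is standard in Edmonds' algorithm but is not isolated as a lemma in the preliminaries. A cleaner alternative is to combine Lemma~\ref{lem:ideal}, which gives that $G[\upstar{H}]/H$ is a nice factor-critical subgraph of $G/H$ containing $h$, with the characterization of the root blossom of a maximal SBT as the maximal such subgraph; this immediately gives $V(\upstar{H}) \setminus V(H) \subseteq V(C) \setminus \{h\}$, and then the first direction is immediate since any non-refinable upper bound $I$ satisfies $V(I) \subseteq V(\upstar{H}) \setminus V(H)$.
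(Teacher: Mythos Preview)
Your argument is correct, and for the second (``additionally'') part it is genuinely more direct than the paper's. The paper proves $H\yield I$ by taking an $M$-alternating ear-decomposition $\{P_1,\ldots,P_k\}$ of $C$ and inducting on the first index $i$ for which $P_i$ meets $I$, using the technique from the proof of Theorem~\ref{thm:equivalentdefinition} to produce an intermediate $I'$ with an $M$-ear relative to $I'$ through $I$. Your route---showing directly that $Y=V(H)\cup(V(C)\setminus\{h\})$ is a critical-inducing set for $H$ via Proposition~\ref{prop:separating} and the definition of $\yield$---bypasses the induction entirely. One small correction: $G[Y]/H$ is not literally $C$ but rather $(G/H)[V(C)]$, which contains $C$ as a spanning subgraph; since adding edges preserves factor-criticality, your conclusion is unaffected.

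For the first part, your main line coincides with the paper's: the paper also argues that the contracted ear $P$, if not contained in $C$, would leave $M$-ears relative to $C$ (Proposition~\ref{prop:ear2relative}), whose addition contradicts the maximality clause in the $M$-SBT definition via Theorem~\ref{thm:odd}. So the ``obstacle'' you flag is exactly what the paper invokes, and it is available from the definition of $M$-SBT together with Theorem~\ref{thm:odd}. Your alternative via Lemma~\ref{lem:ideal} is a clean repackaging, but it still needs one step you leave implicit: that the root blossom is the \emph{unique} maximal factor-critical subgraph of $G/H$ containing $h$ whose near-perfect matching comes from $M$; this follows from Proposition~\ref{prop:union}. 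A terminological slip: the contracted $P$ is an $M$-ear relative to $\{h\}$, not an $M$-alternating circuit in the paper's sense, since both edges at $h$ lie outside $M$.
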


\begin{proof}%[Lemma~\ref{lem:orderalg}]
For the former part, 
let $H'$ be a non-refinable upper bound of $H$, and $P$ be an $M$-ear relative to $H$
and through $H'$.
Since $P-C$ is a disjoint union of $M$-ears relative to $C$,
we have $P\subseteq C$ 
by Theorem~\ref{thm:odd} and the maximality of the outer blossoms in $M$-SBT.
Thus the former part of the claim follows.

For the latter part,
by the definition of $M$-SBT and Proposition~\ref{prop:fc_alt}, 
there is an $M$-alternating odd ear-decomposition
$\mathcal{P} = \{ P_1, \ldots , P_k\}$ of $C$.
Let $I\in\mathcal{G}(G)$ be such that  $V(I)\cap V(C)\neq \emptyset$
and that 
$V(P_j)\cap V(I) = \emptyset$ for $j = 1,\ldots, i-1$
and $V(P_i)\cap V(I) \neq  \emptyset$.
%
%
%vertices of $V(I)$  first appear in $P_i$ where $i = 1,\ldots, k$.
%let $\mathcal{P} = \{ P_1, \ldots , P_t\}$ be an $M$-alternating
%odd ear-decomposition of the blossom,
%where $x$ is an innter vertex of $P_i$ where $i\in 1,\ldots, t$.
We proceed by induction on $i$.
If $i = 1$, the claim  obviously follows.
Let  $i > 1$.
$G_{i-1} := P_1+\cdots + P_{i-1}$ is factor-critical by Theorem~\ref{thm:odd},
and $M_{G_{i-1}}$ is a near-perfect matching of $G_{i-1}$.
Moreover, $P_i$ is an $M$-ear relative to $G_{i-1}$.
Therefore, with the same technique as in the proof of Theorem~\ref{thm:equivalentdefinition},
there exists $I'\in\mathcal{G}(G)$ 
such that  $V(I')\cap V(C) \neq \emptyset $ and 
 that there is an $M$-ear relative to $I'$ and through $I$.
 Thus, by the induction hypothesis,  the latter part of the  claim follows.
 \qed
%The sufficiency part is obvious.
%For the necessity part,
%let $\mathcal{P} = \{ P_1, \ldots , P_t\}$ be an $M$-alternating
%odd ear-decomposition of the blossom,
%where $x$ is an innter vertex of $P_i$ where $i\in 1,\ldots, t$.
%We proceed by induction on $i$.
%If $i = 1$ the claim is obviously followed.
%Let $i > 1$.
%Let the end vertics of $P_i$ be $u$ and $v$.
%Due to the same teqnique as in the proof of ref,
%ther is a component 
\end{proof}
%isaac2012_sec_alg_2end.tex
\begin{proposition}\label{prop:orderalg}
Given a factorizable graph $G$, its perfect matching $M$, and $\mathcal{G}(G)$,
we can compute the poset $(\mathcal{G}(G), \yield)$ in $O(nm)$ time.
\end{proposition}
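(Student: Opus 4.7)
The plan is to apply Lemma~\ref{lem:orderalg} once per factor-component of $G$ to recover a relation $R$ on $\mathcal{G}(G)$ that lies between the cover relation of $\yield$ and $\yield$ itself; this determines the poset.

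For each $H\in\mathcal{G}(G)$ I would proceed as follows. First, form $G/H$; by Proposition~\ref{prop:separating}, $M\setminus M_{V(H)}$ is a near-perfect matching of $G/H$ exposing only the contracted vertex $h$, so a careful implementation of Edmonds' blossom algorithm produces a maximal $M$-SBT of $G/H$ rooted at $h$ in $O(m)$ time. Then I locate the outer blossom $C$ containing $h$ (immediate from the SBT) and, with $\mathcal{G}(G)$ already available, scan $V(C)$ to enumerate in $O(n)$ time every $I\in\mathcal{G}(G)$ with $V(I)\cap V(C)\neq\emptyset$; I record $H\yield I$ for each such $I\neq H$.

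Correctness follows directly from Lemma~\ref{lem:orderalg}: its latter part certifies that every recorded pair really lies in $\yield$, and its former part guarantees that every non-refinable upper bound of $H$ is recorded. Hence the produced relation $R$ contains the cover relation of $\yield$ and is contained in $\yield$, so its reflexive transitive closure is exactly $\yield$ and the poset is fully determined. Since $|\mathcal{G}(G)|\le n/2$ and each iteration costs $O(m)$, the total running time is $O(nm)$.

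The main thing to verify is that the black-box pieces compose cleanly: that $G/H$ together with the matching $M\setminus M_{V(H)}$ is a valid input to the SBT subroutine, and that the factor-components meeting $C$ can be read off in linear time. Both are essentially bookkeeping given Proposition~\ref{prop:separating} and the precomputed $\mathcal{G}(G)$. Beyond this, no new structural argument is required; the real content is already packaged inside Lemma~\ref{lem:orderalg}.
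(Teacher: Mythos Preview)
Your proposal is correct and follows essentially the same approach as the paper: iterate over all $H\in\mathcal{G}(G)$, compute a maximal $M$-SBT of $G/H$ in $O(m)$ time, record the arc $(H,I)$ for every factor-component $I$ meeting the root blossom $C$, and invoke Lemma~\ref{lem:orderalg} to conclude that the reachability order on the resulting digraph is precisely $\yield$. Your justification that the recorded relation sits between the cover relation and $\yield$ is exactly the content of the paper's correctness argument, stated slightly more explicitly.
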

\begin{proof}
%Let $M$ be a perfect matching of $M$ and $H\in\mathcal{G}(G)$.
%Due to Lemma~\ref{lem:orderalg},
%by computing maximal special blossom tree with respect to $M$ 
%and assigning $H\yield I$ for $H\in \mathcal{G}(G)$ 
%where each $x\in C$ belongs to.
It is sufficient 
to list all the non-refinable upper bounds for each factor-component of $G$
by the following procedure:

\begin{algorithmic}[1]
\STATE $D := (\mathcal{G}(G),\emptyset)$; $A:=\emptyset$;
\FORALL{$H\in\mathcal{G}(G)$}
\STATE compute a maximal $M$-SBT $T$; let $C$ be the blossom of $T$ corresponding to its root;
\FORALL{$x\in V(C)$, which satisfies $x\in V(I)$ for some  $I\in\mathcal{G}(G)$}
\STATE $A:= A\cup \{(H, I)\}$;
\ENDFOR
\ENDFOR
\STATE $D := (\mathcal{G}(G), A)$; STOP.
\end{algorithmic}
By Lemma~\ref{lem:orderalg}, 
the partial order on $V(D)$ determined by the reachability
 corresponds to $\yield$ after the above procedure.
 That is, if we define a binary relation $\prec$  on $V(D)$ so that 
 $H'\yield I'$ if there is a dipath from $H'$ to $I'$ in $D$,
 then $\prec$ and $\yield$ coincide.
For each $H\in \mathcal{G}(G)$, the above procedure costs $O(m)$ time,
thus it costs $O(nm)$ time over the whole computation.
\qed
\end{proof}
\begin{remark}
Given the digraph $D$ after the procedure in Proposition~\ref{prop:orderalg},
we can compute all the upper bounds of a factor-component in 
$O(n^2)$ time.
Thus, an efficient data structure that represents the poset, for example,
a boolean-valued matrix $L$ where $L[i, j] = \mathrm{true}$ 
if and only if $G_i\yield G_j$,
can be obtained in additional $O(n^2)$ time.
\end{remark}  
As a maximum matching of a graph can be computed 
in $O(\sqrt{n}m)$ time~\cite{mv1980, vazirani1994},
we have the following, combining Propositions~\ref{prop:pathalg},~% 
\ref{prop:partalg}, and \ref{prop:orderalg}.
\begin{theorem}
Let $G$ be a factorizable graph.
Then the poset $(\mathcal{G}(G), \yield)$ and 
the generalized canonical partition $\gpart{G}$ can be computed in 
$O(nm)$ time.
\end{theorem}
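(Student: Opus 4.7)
The plan is simply to chain together the three algorithmic pieces already established in this section, making sure the preprocessing steps they each require are themselves within the claimed $O(nm)$ budget. The theorem bundles together the computation of $\gpart{G}$ and $(\mathcal{G}(G), \yield)$, and since both sub-algorithms assume as input a perfect matching $M$ and the family $\mathcal{G}(G)$, the first order of business is to produce these.

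First, I would compute a perfect matching $M$ of $G$ using the Micali--Vazirani algorithm~\cite{mv1980, vazirani1994}, which runs in $O(\sqrt{n}m)$ time; since $\sqrt{n}m = O(nm)$, this fits within the target. Next, with $M$ in hand, I would invoke Proposition~\ref{prop:pathalg}(3) to compute $\mathcal{G}(G)$ in $O(nm)$ time. At this point the hypotheses of both Proposition~\ref{prop:partalg} and Proposition~\ref{prop:orderalg} are met.

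Applying Proposition~\ref{prop:partalg} to $(G, M, \mathcal{G}(G))$ yields the generalized canonical partition $\gpart{G} = \bigcup_{H\in\mathcal{G}(G)} \pargpart{G}{H}$ in $O(nm)$ time. Independently, applying Proposition~\ref{prop:orderalg} to the same input yields the poset $(\mathcal{G}(G), \yield)$, again in $O(nm)$ time. Summing the four stages, the total running time is $O(\sqrt{n}m) + O(nm) + O(nm) + O(nm) = O(nm)$, which establishes the bound.

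There is no real obstacle here; the only thing to double-check is that each cited proposition is legitimately applicable after the preceding stage, in particular that Proposition~\ref{prop:pathalg} does not implicitly require a maximum matching of a non-factorizable input (it does not, since we use it only after $G$ has been verified factorizable via the successful matching computation). Assuming $m = \Omega(n)$ (noted at the start of the section), all additive lower-order terms vanish and the claimed $O(nm)$ bound follows.
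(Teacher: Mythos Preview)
Your proposal is correct and matches the paper's own argument essentially line for line: the paper states the theorem as an immediate consequence of computing a maximum matching in $O(\sqrt{n}m)$ time via~\cite{mv1980, vazirani1994} and then combining Propositions~\ref{prop:pathalg},~\ref{prop:partalg}, and~\ref{prop:orderalg}. You have simply spelled out the chaining of these steps more explicitly than the paper does.
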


%\section{Algorithmic Remarks}
%\input{isaac2012_1_sec_rem}
%\section{Future Remarks}
%\input{issac2012_1_sec_future}
%
%
%
%
\begin{ac}
 The author is grateful to Yusuke Kobayashi and Richard Hoshino for %careful reading of the paper.
carefully reading the paper, and Akihisa Tamura for useful discussions. 
\end{ac}
\bibliographystyle{splncs}
\bibliography{isaac2012_full.bib}

\if0
\newpage
\section*{Appendix A}

\input{isaac2012_1_proof_of_equivalentdefinition}
\section*{Appendix B}
\input{isaac2012_1_proof_of_base}
\input{isaac2012_1_proof_of_addition}
\section*{Appendix C}
\input{isaac2012_1_proof_of_pathalg}
\input{isaac2012_1_proof_of_lem_orderalg}
\fi
\end{document}